\def\withcolors{0}
\def\withnotes{0}
\def\withindex{1}
  \newcommand\indexed[2][]{\ifthenelse{\equal{#1}{}}{#2\index{#2}}{#2\index{#1}}}
  	\newaliascnt{coro}{theorem}
  	  \newtheorem{corollary}[coro]{Corollary}
  	\newaliascnt{lem}{theorem}
  		\newtheorem{lemma}[lem]{Lemma}
  	\newaliascnt{clm}{theorem}
  		\newtheorem{claim}[clm]{Claim}
	\newaliascnt{fact}{theorem}
 	 	\newtheorem{fact}[theorem]{Fact}
  \newaliascnt{prop}{theorem}
  		\newtheorem{proposition}[prop]{Proposition}
	\newaliascnt{conj}{theorem}
  		\newtheorem{conjecture}[conj]{Conjecture}
  \theoremstyle{remark}   	\newtheorem{remark}[theorem]{Remark}
  	\newtheorem{question}[theorem]{Question}
  \theoremstyle{definition}   	\newaliascnt{defn}{theorem}
 		 \newtheorem{definition}[defn]{Definition}
\newenvironment{proofof}[1]{\begin{proof}[Proof of {#1}]}{\end{proof}}
\providecommand{\email}[1]{\href{mailto:#1}{\nolinkurl{#1}\xspace}}
  \newcommand{\new}[1]{{\color{red} {#1}}}             \else
  \newcommand{\new}[1]{{{#1}}}
     \newcommand{\todonote}[2][]{\todo[size=\scriptsize,color=red!40,#1]{#2}}
  \newcommand{\todonote}[2][]{\ignore{#2}}
\newcommand{\ignore}[1]{\leavevmode\unskip}  
\newcommand{\eps}{\ensuremath{\varepsilon}\xspace}
\newcommand{\Algo}{\ensuremath{\mathcal{A}}\xspace} \newcommand{\Tester}{\ensuremath{\mathcal{T}}\xspace}  \newcommand{\property}{\ensuremath{\mathcal{P}}\xspace}  \newcommand{\eqdef}{\stackrel{\rm def}{=}}
\newcommand{\accept}{\textsf{ACCEPT}\xspace}
\newcommand{\fail}{\textsf{FAIL}\xspace}
\newcommand{\reject}{\textsf{REJECT}\xspace}
\newcommand{\half}{\frac{1}{2}}
\newcommand{\domain}{\ensuremath{\Omega}\xspace}  \newcommand{\yes}{{\sf{}yes}\xspace}
\newcommand{\no}{{\sf{}no}\xspace}
\newcommand{\littleO}[1]{{o\mleft( #1 \mright)}}
\newcommand{\bigO}[1]{{O\mleft( #1 \mright)}}
\newcommand{\bigTheta}[1]{{\Theta\mleft( #1 \mright)}}
\newcommand{\bigOmega}[1]{{\Omega\mleft( #1 \mright)}}
\newcommand{\tildeO}[1]{\tilde{O}\mleft( #1 \mright)}
\newcommand{\tildeOmega}[1]{\operatorname{\tilde{\Omega}}\mleft( #1 \mright)}
\providecommand{\poly}{\operatorname*{poly}}
\newcommand{\setOfSuchThat}[2]{ \left\{\; #1 \;\colon\; #2\; \right\} } 			\newcommand{\indicSet}[1]{\mathds{1}_{#1}}                                              \newcommand{\indic}[1]{\indicSet{\left\{#1\right\}}}                                             
\newcommand{\dtv}{\operatorname{d_{\rm TV}}}
\newcommand{\totalvardist}[2]{{\dtv\!\left({#1, #2}\right)}}
\newcommand\restr[2]{{  \left.\kern-\nulldelimiterspace   #1   \vphantom{\big|}   \right|_{#2}   }}
\newcommand{\proba}{\Pr}
\newcommand{\probaDistrOf}[2]{\proba_{#1}\left[\, #2\, \right]}
\newcommand{\shortexpect}{\mathbb{E}}
\newcommand{\uniform}{\ensuremath{\mathcal{U}}}
\newcommand{\uniformOn}[1]{\ensuremath{\uniform\!\left( #1 \right) }}
\newcommand{\norm}[1]{\lVert#1{\rVert}}
\newcommand{\normone}[1]{{\norm{#1}}_1}
\newcommand{\abs}[1]{\left\lvert #1 \right\rvert}
\newcommand{\vect}[1]{\mathbf{#1}} 			
\newcommand{\flr}[1]{\left\lfloor #1 \right\rfloor}
\newcommand{\ICOND}{{\sf INTCOND}\xspace}
\newcommand{\EVAL}{{\sf EVAL}\xspace}
\newcommand{\CDFEVAL}{{\sf CEVAL}\xspace}
\newcommand{\SAMP}{{\sf SAMP}\xspace}
\newcommand{\COND}{{\sf COND}\xspace}
\newcommand{\PCOND}{{\sf PAIRCOND}\xspace}
\newcommand{\ORACLE}{{\sf ORACLE}\xspace}
\newcommand{\pdfsamp}{dual\xspace}
\newcommand{\cdfsamp}{cumulative dual\xspace}
\newcommand{\Pdfsamp}{Dual\xspace}\newcommand{\Cdfsamp}{Cumulative Dual\xspace}
\newcommand{\lp}[1][1]{\ell_{#1}}
\newcommand{\D}{\ensuremath{D}}
\newcommand{\distrD}{\ensuremath{\mathcal{D}}}
\newcommand{\birge}[2][\D]{\Phi_{#2}(#1)}
\newcommand{\iid}{i.i.d.\xspace}
\newcommand{\Rom}[1]{\expandafter\@slowromancap\romannumeral #1@}
\newcommand\ackname{Acknowledgements}
  \newenvironment{acknowledgements}{      \titlepage
      \null\vfil
      \@beginparpenalty\@lowpenalty
      \begin{center}        \bfseries \ackname
        \@endparpenalty\@M
      \end{center}}     {\par\vfil\null\endtitlepage}
\newcommand{\mirror}[1]{\reflectbox{#1}}
\definecolor{lightblue}{rgb}{.90,.88,1}
\newcommand{\Dflat}{\birge[\D]}
\newcommand{\Dred}{{\D}^\mathrm{red}}
\def\maintitle{Big Data on the Rise?}
\def\subtitle{Testing monotonicity of distributions}
\def\authorname{Cl\'ement L. Canonne}
\title{\textsc{\maintitle}\\\textsmaller[2]{(\subtitle)} }
\author{\authorname\thanks{Columbia University. Email: \email{ccanonne@cs.columbia.edu}. Research supported by NSF CCF-1115703 and NSF CCF-1319788.}}
\begin{document}

\maketitle

\begin{abstract}
The field of property testing of probability distributions, or distribution testing, aims to provide fast and (most likely) correct answers to questions pertaining to specific aspects of very large datasets. In this work, we consider a property of particular interest, \emph{monotonicity of distributions}. We focus on the complexity of monotonicity testing across different models of access to the distributions \cite{CFGM:13,CRS:12,CR:14,RS:09}; and obtain results in these new settings that differ significantly from the known bounds in the standard sampling model~\cite{BKR:04}.
\end{abstract}
\pagenumbering{gobble}\setcounter{page}{0}
\clearpage

\setcounter{tocdepth}{2}  \tableofcontents
\ifnum\withnotes=1
  \listoftodos
\fi
\clearpage
\pagenumbering{arabic}
\section{Introduction}
\makeatletter{}Before even the advent of data, information, records and insane amounts thereof to treat and analyze, probability distributions have been everywhere, and understanding their properties has been a fundamental problem in Statistics.\footnote{As well as~--~crucially~--~in crab population analysis~\cite{Pearson:94}.} Whether it be about the chances of winning a (possibly rigged) game in a casino, or about predicting the outcome of the next election; or for social studies or experiments, or even for the detection of suspicious activity in networks, hypothesis testing and density estimation have had a role to play. And among these distributions, \emph{monotone} ones have often been of paramount importance: is the probability of getting a cancer decreasing with the distance from, say, one's microwave? Are aging voters more likely to vote for a specific party? Is the success rate in national exams correlated with the amount of money spent by the parents in tutoring?

All these examples, however disparate they may seem, share one unifying aspect: \emph{data} may be viewed as the probability distributions it defines and originates from; and understanding the properties of this data calls for testing these distributions. In particular, our focus here will be on testing whether the data~--~its underlying distribution~--~happens to be \emph{monotone},\footnotemark{} or on the contrary far from being so.

Since the seminal work of Batu, Kumar, and Rubinfeld~\cite{BKR:04}, this fundamental property has been well-understood in the usual model of access to the data, which only assumes independent samples. However, a recent trend in distribution testing has been concerned with introducing and studying new models which provide additional flexibility in observing the data. In these new settings, our understanding of what is possible and what remains difficult is still in its infancy; and this is in particular true for monotonicity, for which very little is known. This work intends to mitigate this state of affairs.
\footnotetext{Recall that a distribution $\D$ on $\{1,\dots,n\}$ is said to be \emph{monotone} (non-increasing) if $\D(1)\geq \dots \geq \D(n)$, i.e. if its probability mass function is non-increasing. We hereafter denote by $\mathcal{M}$ the class of monotone distributions.}

We hereafter assume the reader's familiarity with the broad field of property testing, and the more specific setting of distribution testing. For detailed surveys of the former, she or he is referred to, for instance,~\cite{Survey:Fischer,Survey:Ron:08,Survey:Ron:10,Survey:Goldreich:10}; an overview of the latter can be found e.g.~in~\cite{Rubinfeld:12:Taming}, or~\cite{Canonne:15:survey}. Details of the models we consider (besides the usual sampling oracle setting, denoted by \SAMP) are described in~\cite{CFGM:13,CRS:12,CRS:14} (for the conditional sampling oracle \COND, and its variants \ICOND and \PCOND restricted respectively to interval and pairwise queries);~\cite{BKR:04,GMV:06,CR:14} for the \Pdfsamp and \Cdfsamp models; and~\cite{RS:09} for the evaluation-only oracle, \EVAL. The reader confused by the myriad of notations featured in the previous sentence may find the relevant definitions in~\autoref{sec:preliminaries} and~\autoref{app:def:models} (as well as in the aforementioned papers).

\paragraph{Caveat.} It is worth mentioning that we do not consider here monotonicity in its full general setting: we shall only focus on distributions defined on the line. In particular, the (many) works concerned with distributions over high-dimensional posets are out of the scope of this paper.

\subsection{Results}

In this paper, we provide both upper and lower bounds for the problem of testing monotonicity, across various types of access to the unknown distribution. A summary of results, including the best currently known bounds on monotonicity testing of distributions, can be found in~\autoref{table:summary:results} below. As noted in~\autoref{sec:samp:tester}, many of the lower bounds are implied by the corresponding lower bound on testing uniformity.
\newcommand{\newresult}[1]{\hl{#1}}
\newcommand{\nonadapt}{${}^\ast$\xspace}
  \begin{table}[H]\centering\def\arraystretch{1.75}
    \begin{tabular}{@{}|l|c|c|@{}}
    \hline
  {\sc Model}          & {\sc Upper bound}          & {\sc Lower bound} \\ \hline
  {\SAMP} & $\tildeO{\frac{\sqrt{n}}{\eps^6}}$ & $\bigOmega{\frac{\sqrt{n}}{\eps^2}}$ \\\hline
  {\COND} & \newresult{ $\tildeO{\frac{1}{\eps^{22}}}$, $\tildeO{ \frac{\log^2 n}{\eps^{3}}+\frac{\log^4 n}{\eps^{2}} }$ } & $\bigOmega{\frac{1}{\eps^2}}$ \\\hline
  {\ICOND} & \newresult{ $\tildeO{\frac{\log^5 n}{\eps^{4}} }$ }  & $\bigOmega{ \sqrt{\frac{\log n}{\log\log n}} }$ \\\hline
  {\EVAL} & \newresult{ $\bigO{\max\mleft( \frac{\log n}{\eps}, \frac{1}{\eps^2}\mright)}$}\nonadapt & \newresult{ $\bigOmega{\frac{\log n}{\eps}}$}\nonadapt,\newresult{ $\bigOmega{\frac{\log n}{\log\log n}}$} \\\hline
  {\Cdfsamp} & \newresult{ $\tildeO{\frac{1}{\eps^4}}$ } & $\bigOmega{\frac{1}{\eps}}$ \\\hline
    \end{tabular}
    \caption[Comparison between \SAMP, \COND, \ICOND, \EVAL and \Cdfsamp.]{\label{table:summary:results} Summary of results for monotonicity testing. The \newresult{highlighted ones} are new; bounds with an asterisk\nonadapt hold for non-adaptive testers.}
\end{table}

\subsection{Techniques}

Two main ideas are followed in obtaining our upper bounds: the first one, illustrated in~\autoref{sec:samp:tester} and~\autoref{sec:polyneps:cond:tester}, is the approach of Batu et al.~\cite{BKR:04}, which reduces monotonicity testing to uniformity testing on polylogarithmically many intervals. This relies on a structural result for monotone distributions which asserts that they admit a succinct partition in intervals, such that on each interval the distribution is either close to uniform (in $\lp[2]$ distance), or puts very little weight.

The second approach, on which~\autoref{sec:polyeps:cond:tester},~\autoref{sec:polylogneps:eval:ub} and~\autoref{sec:polyeps:extended:tester} are based, also leverages a structural result, due this time to Birg\'e~\cite{Birge:87}. As before, this theorem states that each monotone distribution admits a succinct ``flat approximation,'' but in this case the partition \emph{does not depend on the distribution itself} (see~\autoref{sec:preliminaries} for a more rigorous exposition). From there, the high-level idea is to perform two different checks: first, that the distribution $\D$ is close to its ``flattening''  $\bar{\D}$; and then that this flattening itself is close to monotone~--~where \new{to be efficient} the latter exploits the fact that the effective support of $\bar{\D}$ is very small, as there are only polylogarithmically many intervals in the partition. If both tests succeed, then it must be the case that $\D$ is close to monotone.

\subsection{Organization}

The upper bounds for the conditional models,~\autoref{theo:tester:monotonicity:cond} and~\autoref{theo:tester:monotonicity:intcond}, are given in~\autoref{sec:cond}. \autoref{sec:eval} contains details of the results in the evaluation query model: the upper bound of~\autoref{theo:tester:monotonicity:eval}, and the non-adaptive and adaptive lower bounds of respectively~\autoref{theo:tester:monotonicity:eval:lb} and~\autoref{theo:tester:monotonicity:eval:lb:adaptive}. Finally, in~\autoref{sec:polyeps:extended:tester} we prove~\autoref{theo:tester:monotonicity:extension:cdf}, our upper bound for the \Cdfsamp access model.
We also note that, in the course of obtaining one of our upper bounds, we derive a result previously (to the best of our knowledge) absent from the literature: namely, that learning monotone distributions in the \EVAL-only model can be accomplished using $\bigO{(\log n)/\eps}$ queries (\autoref{lemma:eval:learn:monotone}).

Finally, we show in~\autoref{app:tolerant:extended} that some of our techniques extend to \emph{tolerant} testing, and describe in two of the models tolerant testers for monotonicity whose query complexity is only logarithmic in $n$ (\autoref{theo:tolerant:tester:monotonicity} and~\autoref{theo:tolerant:tester:monotonicity:cdf}). 
 
\section{Preliminaries}\label{sec:preliminaries}
\makeatletter{}All throughout this paper, we denote by $[n]$ the set $\{1,\dots,n\}$, and by $\log$ the logarithm in base $2$. A \emph{probability distribution} over a (finite) domain\footnote{For the focus of this work, all distributions will be supported on a finite domain; thus, we do not consider the fully general definitions from measure theory.} $\domain$ is a non-negative function $\D\colon\domain\to[0,1]$ such that $\sum_{x\in\domain} \D(x) = 1$. We denote by $\uniformOn{\domain}$ the uniform distribution on \domain. Given a distribution $\D$ over $\domain$ and a set $S\subseteq\domain$, we write $\D(S)$ for the total probability weight $\sum_{x\in S} \D(x)$ assigned to $S$ by $\D$. Finally, for $S \subseteq \domain$ such that $\D(S)>0$, we denote by $\D_S$ the conditional distribution of $\D$ restricted to $S$, that is $\D_S(x) = \frac{\D(x)}{\D(S)}$ for $x \in S$ and $\D_S(x)=0$ otherwise.\medskip

As is usual in property testing of distributions, in this work the distance between two distributions $\D_1, \D_2$ on $\domain$ will be the \emph{total variation distance}:
\begin{equation}\label{def:distance:tv}
\totalvardist{\D_1}{\D_2 } \eqdef \frac{1}{2} \normone{\D_1 - \D_2} = \frac{1}{2} \sum_{x \in \domain}\abs{\D_1(i)-\D_2(i)} = \max_{S \subseteq \domain} (\D_1(S)-\D_2(S))
\end{equation}
which takes value in $[0,1]$.

\paragraph{Models and access to the distributions.} We now describe (informally) the settings we shall work in, which define the \emph{type of access} the testing algorithms are granted to the input distribution. (For a formal definition of these models, the reader is referred to~\autoref{app:def:models}.) In the first and most common setting (\SAMP), the testers access the unknown distribution by getting independent and identically distributed samples from it.

 A natural extension, \COND, allows the algorithm to provide a query set $S\subseteq[n]$, and get a sample from the conditional distribution induced by $\D$ on $S$: that is, the distribution $\D_S$ on $S$ defined by $\D_S(i) = \D(i)/\D(S)$. By restricting the type of allowed query sets to the class of intervals $\{a,\dots,b\}\subseteq[n]$, one gets a weaker version of this model, $\ICOND$ (for ``interval-cond'').
 
 Of a different flavor, providing (only) \emph{evaluation} queries to the probability mass function (pmf) (resp. to the cumulative distribution function (cdf)) of the distribution  an \EVAL (resp. \CDFEVAL) oracle access. When the algorithm is provided with both \SAMP and \EVAL (resp. \SAMP and \CDFEVAL) oracles to the distribution, we say it has \emph{\Pdfsamp (resp.~\Cdfsamp{}) access} to it.

\paragraph{On the domain and parameters.} Unless specified otherwise, $\domain$ will hereafter by default be the $n$-element set $[n]$. When stating the results, the accuracy parameter $\eps\in[0,1]$ is to be understood as taking small values, either a tiny constant or a quantity arbitrarily close to $0$; however, the actual parameter of interest will always be $n$, viewed as ``going to infinity.'' Hence any dependence on $n$, no matter how mild, shall be considered as more expensive than any function of $\eps$ only.

\paragraph{On monotone distributions.} We now state here a few crucial facts about monotone distributions, namely that they admit a succinct approximation, itself monotone, close in total variation distance:
\begin{definition}[Oblivious decomposition]\label{def:birge:obl:decomp}
  Given a parameter $\eps>0$, the corresponding \emph{oblivious decomposition of $[n]$} is the partition $\mathcal{I}_\eps=(I_1,\dots,I_\ell)$, where $\ell=\bigTheta{\frac{\ln( \eps n + 1)}{\eps}}=\bigTheta{\frac{\log n }{\eps}}$ and\footnote{We will often ignore the floors in the definition of the oblivious partition, to avoid more cumbersome analyses and the technicalities that would otherwise arise. However, note that this does not affect the correctness of the proofs: after the first $\tildeO{\frac{1}{\eps}}$ intervals (which will be, as per the above definition, of constant size), we do have indeed $\abs{I_{k+1}}\in[1+\frac{\eps}{2},1+2\eps]\abs{I_{k}}$. This multiplicative property, in turn, is the key aspect we shall rely on.} $\abs{I_{k}}=\flr{(1+\eps)^k}$, $1\leq k \leq \ell$. \end{definition}

\noindent For a distribution $\D$ and parameter \eps, define $\birge[\D]{\eps}$ to be the \emph{flattened distribution} with relation to the oblivious decomposition $\mathcal{I}_\eps$:
\begin{equation}\label{eq:birge:def:dflat}
  \forall k\in [\ell], \forall i\in I_k,\quad \birge[\D]{\eps}(i) = \frac{\D(I_k)}{\abs{I_k}}\;.
\end{equation}
Note that while $\birge[\D]{\eps}$ (obviously) depends on $\D$, \emph{the partition $\mathcal{I}_\eps$ itself does not}; in particular, it can be computed prior to getting any sample or information about $\D$.

\begin{theorem}[\cite{Birge:87}]\label{theorem:birge:obl:decomp}
 If $\D$ is monotone non-increasing, then $\totalvardist{\D}{\birge[\D]{\eps}} \leq \eps$.
\end{theorem}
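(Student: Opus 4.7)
The plan is to bound $\totalvardist{\D}{\birge[\D]{\eps}}$ by decomposing the $L^1$ distance along the intervals of the oblivious partition and exploiting global monotonicity of $\D$ in a telescoping estimate. Writing $2\totalvardist{\D}{\birge[\D]{\eps}} = \sum_{k=1}^\ell \sum_{i \in I_k}|\D(i) - \bar{\D}_k|$, where $\bar{\D}_k := \D(I_k)/|I_k|$, I would first bound each inner sum using only that $\D$ takes values in $[\D(b_k), \D(a_k)]$ on $I_k$, with $a_k, b_k$ the endpoints. Popoviciu's variance inequality combined with Jensen yields, for any $m$ numbers in $[B,A]$ with mean $\mu$, the estimate $\sum_i |f(i) - \mu| \leq m(A-B)/2$; applied on each interval, this gives $\sum_{i \in I_k}|\D(i) - \bar{\D}_k| \leq \tfrac{|I_k|}{2}(\D(a_k) - \D(b_k))$.

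Next, I would use global monotonicity of $\D$ to control the variation $V_k := \D(a_k) - \D(b_k)$ via adjacent averages. Since every element of $I_{k-1}$ precedes $a_k$ and $\D$ is non-increasing, all values of $\D$ on $I_{k-1}$ are at least $\D(a_k)$, so averaging gives $\bar{\D}_{k-1} \geq \D(a_k)$; symmetrically $\bar{\D}_{k+1} \leq \D(b_k)$. Hence $V_k \leq \bar{\D}_{k-1} - \bar{\D}_{k+1}$, and multiplying by $|I_k|$ yields
\[ |I_k|\,V_k \;\leq\; \tfrac{|I_k|}{|I_{k-1}|}\,\D(I_{k-1}) \;-\; \tfrac{|I_k|}{|I_{k+1}|}\,\D(I_{k+1}). \]
Using the multiplicative property $|I_{k+1}|/|I_k| \in [1+\eps/2, 1+2\eps]$ of the oblivious decomposition (footnote to~\autoref{def:birge:obl:decomp}), both ratios are of the form $1 \pm O(\eps)$; summing over $k$ then produces a near-telescoping expression in which $\sum_k \D(I_{k-1})$ and $\sum_k \D(I_{k+1})$ each collapse to $\sum_j \D(I_j) = 1$ up to boundary terms, leaving a residual of order $O(\eps)$. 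Combining with the factor $1/2$ from the per-interval bound above yields $\totalvardist{\D}{\birge[\D]{\eps}} \leq \eps$ once constants are calibrated.

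The principal obstacle will be handling the floor in $|I_k| = \flr{(1+\eps)^k}$, which breaks the clean multiplicative property for small $k$: the ratio $|I_k|/|I_{k-1}|$ may jump from $1$ to $2$ at the first transition, contaminating the telescoping bound with a boundary term that can be as large as $\D(I_1)$. Fortunately, as the footnote indicates, the first $\tildeO{1/\eps}$ intervals are exactly singletons, on which $a_k = b_k$ forces $V_k = 0$ identically, so they contribute nothing to $\sum_k |I_k| V_k$. The single transition interval where $|I_k|$ first exceeds $1$, as well as the right-boundary term at $k = \ell$, introduce only $O(\eps)$-sized corrections that can be handled by a cruder direct estimate (bounding the transition's contribution via $|I_{k}| V_{k} \leq 2\D(I_{k-1})$ and absorbing the result into the constants hidden in the definition of $\mathcal{I}_\eps$). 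With these edge cases dispatched, the telescoping argument above runs cleanly and establishes the claimed bound.
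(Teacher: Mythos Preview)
The paper does not prove this theorem itself; it is cited from~\cite{Birge:87}, with the discrete version deferred to~\cite{DDSV:13}. So there is no in-paper proof to compare against, and your plan must be assessed on its own merits. Your overall strategy---Popoviciu on each interval, then $V_k \leq \bar{\D}_{k-1}-\bar{\D}_{k+1}$ via global monotonicity, then a near-telescoping sum---is sound in spirit and somewhat different from Birg\'e's original route (which, as the proof of~\autoref{lemma:eval:learn:monotone} indicates, passes through the midpoint-of-endpoints approximation $\tilde{\D}$ rather than through the mean).

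There is, however, a genuine gap in your handling of the floors. You assert that ``the first $\tildeO{1/\eps}$ intervals are exactly singletons,'' but this misreads the footnote: only the first $\lfloor\ln 2/\ln(1+\eps)\rfloor\approx(\ln 2)/\eps$ intervals are singletons. After that the sizes climb through $2,3,\dots$ up to roughly $2/\eps$ \emph{before} the multiplicative property $|I_{k+1}|/|I_k|\in[1+\eps/2,1+2\eps]$ can possibly hold. In that intermediate range the ratio $|I_k|/|I_{k-1}|$ is either $1$ (on plateaus) or $(m+1)/m$ (at jumps), neither of which equals $1+\Theta(\eps)$ for small $m$. So there is not a single transition to dispatch but $\Theta(1/\eps)$ of them, and your telescoping bound does not apply there; a crude estimate over this range yields only $O(\eps\log(1/\eps))$ or worse, not $O(\eps)$. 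Your proposed fix---bounding the transition by $2\D(I_{k-1})$ and ``absorbing into constants''---covers just the first jump and does not explain why the accumulated contribution of the remaining jumps is $O(\eps)$.

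The gap is repairable but requires an idea you have not supplied: after Abel-summing to $\sum_k(|I_k|-|I_{k-1}|)\D(a_k)$, note that each size-$m$ plateau spans roughly $m\cdot\frac{1}{m\eps}=\Theta(1/\eps)$ consecutive points of $[n]$, so monotonicity forces $\D(a_k)$ at the jump out of that plateau to be at most $O(\eps)$ times the plateau's mass; summing over the disjoint plateaus then recovers the desired $O(\eps)$. Without this (or an equivalent device), the argument as written does not close.
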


\begin{remark} The first use of this result in this discrete learning setting is due to Daskalakis et al.~\cite{DDS:12}. For a proof for discrete distributions (whereas the original paper by Birg\'e is intended for continuous ones), the reader is referred to \cite{DDSV:13} (Section~3.1, Theorem~5).
\end{remark}

\begin{corollary}[Robustness]\label{coro:birge:decomposition:robust}
  Suppose $\D$ is \eps-close to monotone non-increasing. Then $\totalvardist{\D}{\birge[\D]{\alpha}} \leq 2\eps+\alpha$; furthermore,  $\birge[\D]{\alpha}$ is also \eps-close to monotone non-increasing.
\end{corollary}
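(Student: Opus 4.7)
My plan is to reduce this to the non-robust version (\autoref{theorem:birge:obl:decomp}) via the triangle inequality, using the fact that the flattening operation $\D\mapsto\birge[\D]{\alpha}$ is a $1$-Lipschitz contraction in total variation distance with respect to a \emph{fixed} partition. Concretely, let $M$ be a monotone non-increasing distribution with $\totalvardist{\D}{M}\leq \eps$, which exists by hypothesis.

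First, I would establish the contraction property: for any two distributions $\D_1,\D_2$ over $[n]$, $\totalvardist{\birge[\D_1]{\alpha}}{\birge[\D_2]{\alpha}} \leq \totalvardist{\D_1}{\D_2}$. This is a short direct calculation from the definition~\eqref{eq:birge:def:dflat}, since on each interval $I_k$ the flattened distributions are constant with values $\D_j(I_k)/|I_k|$, giving
\[
\totalvardist{\birge[\D_1]{\alpha}}{\birge[\D_2]{\alpha}} = \tfrac{1}{2}\sum_{k=1}^\ell |\D_1(I_k)-\D_2(I_k)| \leq \tfrac{1}{2}\normone{\D_1-\D_2}=\totalvardist{\D_1}{\D_2}.
\]
Applying this with $\D_1=\D$ and $\D_2=M$ yields $\totalvardist{\birge[\D]{\alpha}}{\birge[M]{\alpha}}\leq \eps$. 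Combining with $\totalvardist{\D}{M}\leq \eps$ (assumption) and $\totalvardist{M}{\birge[M]{\alpha}}\leq \alpha$ (\autoref{theorem:birge:obl:decomp}, applicable since $M$ is monotone), the triangle inequality gives $\totalvardist{\D}{\birge[\D]{\alpha}}\leq 2\eps+\alpha$, proving the first claim.

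For the second claim, I would argue that $\birge[M]{\alpha}$ is itself monotone non-increasing: since $M$ is non-increasing and each interval $I_k$ lies entirely to the left of $I_{k+1}$, the average value $M(I_k)/|I_k|$ is at least $M(b_k)$ (the minimum on $I_k$), which in turn dominates $M(a_{k+1})\geq M(I_{k+1})/|I_{k+1}|$. Thus $\birge[M]{\alpha}\in\mathcal{M}$, and the bound $\totalvardist{\birge[\D]{\alpha}}{\birge[M]{\alpha}}\leq \eps$ from the contraction step certifies that $\birge[\D]{\alpha}$ is $\eps$-close to a monotone distribution, as desired.

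I do not foresee a real obstacle: the only step requiring care is the contraction inequality, and this is essentially the data-processing inequality for the coarsening map induced by the partition $\mathcal{I}_\alpha$. Everything else is two applications of the triangle inequality and a one-line verification that flattening preserves monotonicity.
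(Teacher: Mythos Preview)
Your proposal is correct and follows essentially the same approach as the paper: introduce a monotone $M$ witnessing $\totalvardist{\D}{M}\leq\eps$, establish the contraction (data-processing) inequality $\totalvardist{\birge[\D]{\alpha}}{\birge[M]{\alpha}}\leq\totalvardist{\D}{M}$, and chain with $\totalvardist{M}{\birge[M]{\alpha}}\leq\alpha$ via the triangle inequality. The paper's proof is identical in structure, with the only cosmetic difference being that it writes out the triangle inequality first and then verifies the contraction bound, whereas you isolate the contraction property upfront; your explicit check that $\birge[M]{\alpha}\in\mathcal{M}$ is stated more carefully than in the paper (which simply asserts it parenthetically).
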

\begin{proof}
Let $P$ be a monotone non-increasing distribution such that $\totalvardist{\D}{P} \leq \eps$. By the triangle inequality,
\[
    \totalvardist{\D}{\birge[\D]{\alpha}} \leq \totalvardist{\D}{P} + \totalvardist{P}{\birge[P]{\alpha}} + \totalvardist{\birge[P]{\alpha}}{\birge[\D]{\alpha}}
    \leq \eps + \alpha + \totalvardist{\birge[P]{\alpha}}{\birge[\D]{\alpha}}
\]
where the last inequality uses the assumption on $P$ and \autoref{theorem:birge:obl:decomp} applied to it. It only remains to bound the last term: by definition,
\begin{align*}
  2\totalvardist{\birge[P]{\alpha}}{\birge[\D]{\alpha}} &= \sum_{i=1}^n \abs{\birge[\D]{\alpha}(i) - \birge[P]{\alpha}(i)} 
  = \sum_{k=1}^\ell\sum_{i\in I_k} \abs{\birge[\D]{\alpha}(i) - \birge[P]{\alpha}(i)} \\
  &= \sum_{k=1}^\ell\sum_{i\in I_k} \abs{ \frac{D(I_k) - P(I_k)}{\abs{I_k}} } = \sum_{k=1}^\ell \abs{ D(I_k) - P(I_k) } \\
  &= \sum_{k=1}^\ell \abs{ \sum_{i\in I_k}\left(  D(i) - P(i)  \right)} 
  \leq \sum_{k=1}^\ell \sum_{i\in I_k} \abs{ D(i) - P(i)} = 2\totalvardist{P}{\D} \\
  &\leq 2\eps
\end{align*}
(showing in particular the second part of the claim, as $\birge[P]{\alpha}$ is monotone) and thus
\[
    \totalvardist{\D}{\birge[\D]{\alpha}} \leq 2\eps+\alpha
\]
as claimed.
\end{proof}
One can interpret this corollary as saying that the Birg\'e decomposition provides a tradeoff between becoming \emph{simpler} (and at least as close to monotone) while not staying too far from the original distribution.

\noindent Incidentally, the last step of the proof above implies the following easy fact:
\begin{fact}\label{fact:birge:shrinking}
  For all $\alpha\in(0,1]$,
  \begin{equation}
  \totalvardist{\birge[P]{\alpha}}{\birge[\D]{\alpha}} \leq \totalvardist{P}{\D}
  \end{equation}
  and in particular, for any property $\mathcal{P}$ preserved by the Birg\'e transformation (such as monotonicity)
  \begin{equation}
  \totalvardist{\birge[\D]{\alpha}}{\mathcal{P}} \leq \totalvardist{\D}{\mathcal{P}}.
  \end{equation}
\end{fact}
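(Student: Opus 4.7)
For the first inequality, the calculation is essentially already present in the proof of the preceding corollary. Concretely, I would replay that same chain: using the definition of the Birg\'e flattening together with the triangle inequality applied interval by interval on the oblivious partition $(I_1,\dots,I_\ell)$,
\[
2\totalvardist{\birge[P]{\alpha}}{\birge[\D]{\alpha}} = \sum_{k=1}^\ell \abs{\D(I_k)-P(I_k)} = \sum_{k=1}^\ell \abs{\sum_{i\in I_k}(\D(i)-P(i))} \leq \sum_{k=1}^\ell\sum_{i\in I_k}\abs{\D(i)-P(i)} = 2\totalvardist{P}{\D}.
\]
Morally, collapsing each interval to its mean can only bring two distributions closer in total variation, so I would simply point back to the display already worked out inside the corollary's proof rather than redo it.

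For the second inequality, the plan is a standard approximate-infimum argument. Fix $\eta>0$ and pick $Q\in\mathcal{P}$ with $\totalvardist{\D}{Q}\leq\totalvardist{\D}{\mathcal{P}}+\eta$. Since $\mathcal{P}$ is preserved by the Birg\'e transformation, $\birge[Q]{\alpha}\in\mathcal{P}$, so applying the first inequality gives
\[
\totalvardist{\birge[\D]{\alpha}}{\mathcal{P}} \leq \totalvardist{\birge[\D]{\alpha}}{\birge[Q]{\alpha}} \leq \totalvardist{\D}{Q} \leq \totalvardist{\D}{\mathcal{P}}+\eta,
\]
and letting $\eta\to 0$ yields the claim.

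No genuine obstacle is expected: the author explicitly advertises this as an easy consequence of the corollary's calculation. The only sanity check worth including is the parenthetical assertion that monotonicity is closed under the Birg\'e transformation. For $\D$ non-increasing, the flattened distribution is constant on each $I_k$ with value $\D(I_k)/\abs{I_k}$, and since every value of $\D$ on $I_k$ dominates every value of $\D$ on $I_{k+1}$, the same inequality persists between the corresponding averages, so $\birge[\D]{\alpha}$ remains non-increasing.
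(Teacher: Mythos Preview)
Your proposal is correct and matches the paper's approach exactly: the paper does not give a separate proof but simply notes that the fact follows from the display already computed in the corollary's proof, which is precisely what you do for the first inequality. Your infimum argument for the second inequality and the sanity check on monotonicity being preserved are straightforward additions the paper leaves implicit.
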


\paragraph{Other tools.}
Finally, we will use as subroutines the following results of Canonne, Ron, and Servedio. The first one, restated below, provides a way to ``compare'' the probability weight of disjoint subsets of elements in the \COND model:
\begin{lemma}[{\cite[Lemma 2]{CRS:12}}]\label{lem:cond:compare}
Given as input two disjoint subsets of points $X,Y\subseteq\domain$ together with parameters
$\eta \in (0,1]$, $K \geq 1$, and $\delta\in (0,1/2]$,
as well as \COND query access to a distribution $\D$ on $\domain$,
there exists a procedure {\sc Compare} that either outputs a value $\rho > 0$ or
outputs {\sf High} or {\sf Low}, and satisfies the following:
\begin{enumerate}[(i)]
  \item\label{compare:mid}
  If $\D(X)/K \leq \D(Y) \leq K \cdot \D(X)$ then with probability
  at least $1-\delta$
  the procedure outputs a value $\rho \in [1-\eta,1+\eta]\D(Y)/\D(X)$;

  \item\label{compare:high}
  If $\D(Y) > K \cdot \D(X)$ then with probability at least
  $1-\delta$ the procedure outputs
  either {\sf High} or a value $\rho \in [1-\eta,1+\eta]\D(Y)/\D(X)$;

  \item\label{compare:low}
  If $\D(Y) < \D(X)/K$ then with probability at least $1-\delta$ the procedure
  outputs either {\sf Low} or a value $\rho \in [1-\eta,1+\eta]\D(Y)/\D(X).$
\end{enumerate}
The procedure performs $\bigO{\frac{K\log(1/\delta)}{\eta^2} }$
\COND queries on the set $X\cup Y$.
\end{lemma}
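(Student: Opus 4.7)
The plan is to use conditional sampling on the union $X \cup Y$ to estimate the ratio $\D(Y)/\D(X)$ from empirical frequencies. A sample drawn from $\D_{X\cup Y}$ lies in $Y$ with probability $p := \D(Y)/(\D(X)+\D(Y))$ and in $X$ with probability $1-p$, so $p/(1-p) = \D(Y)/\D(X)$. Concretely, I would set $N = \Theta(K\log(1/\delta)/\eta^2)$ and a threshold $T = \Theta(\log(1/\delta)/\eta^2)$, draw $N$ independent samples from $\D_{X\cup Y}$, and let $n_X, n_Y$ be the number landing in $X$ and in $Y$, respectively. The procedure outputs \textsf{Low} if $n_Y < T$; otherwise \textsf{High} if $n_X < T$; otherwise it returns $\rho := n_Y/n_X$.

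For case~(\ref{compare:mid}) we have $\min(\shortexpect[n_X],\shortexpect[n_Y]) \geq N/(K+1) \gg T$, so the multiplicative Chernoff bound gives $n_X \in (1\pm\eta/3)\shortexpect[n_X]$ and $n_Y \in (1\pm\eta/3)\shortexpect[n_Y]$ except with probability $\leq \delta$; both counts comfortably exceed $T$, and the returned $\rho$ lies in $(1\pm\eta)\cdot \D(Y)/\D(X)$.

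For case~(\ref{compare:high}), where $\D(Y) > K\cdot \D(X)$, observe that $\shortexpect[n_Y] \geq N/2$, so $n_Y \geq T$ except with probability $\leq \delta/2$, while $\shortexpect[n_X] < N/(K+1)$. I would split on the magnitude of $\shortexpect[n_X]$: if $\shortexpect[n_X] \leq T/(2e)$, then the upper-tail Chernoff inequality $\Pr[n_X \geq T] \leq (e\shortexpect[n_X]/T)^T \leq 2^{-T} \leq \delta/2$ shows $n_X < T$ with high probability, so we correctly output \textsf{High}; if instead $\shortexpect[n_X] > T/(2e)$, then $\shortexpect[n_X] = \Omega(\log(1/\delta)/\eta^2)$, so multiplicative Chernoff once again pins $n_X$ within $(1\pm\eta/3)$ of its mean, and whether or not the threshold is tripped, the alternative output $\rho$ approximates $\D(Y)/\D(X)$ within the required $(1\pm\eta)$ factor. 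Case~(\ref{compare:low}) is symmetric under exchanging $X \leftrightarrow Y$.

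The main subtlety is calibrating $N$ and $T$ so that the two sub-cases of case~(\ref{compare:high}) (and of (\ref{compare:low})) splice together without a gap: $T$ must be low enough that in case~(\ref{compare:mid}) both counts clear it with room for the Chernoff slack, yet high enough that the upper-tail bound annihilates the overshoot probability from a tiny mean. The choice $T = \Theta(\log(1/\delta)/\eta^2)$ and $N = \Theta(KT)$ simultaneously satisfies both constraints and yields the claimed $O(K\log(1/\delta)/\eta^2)$ query bound on $X \cup Y$.
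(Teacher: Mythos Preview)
The paper does not give its own proof of this lemma; it is quoted verbatim as a black-box tool from \cite{CRS:12} and used only as a subroutine. Your proposal is the natural (and correct) argument, and it is essentially the one given in the original source: sample $N=\Theta(K\log(1/\delta)/\eta^2)$ times from $\D_{X\cup Y}$, threshold the two empirical counts at $T=\Theta(\log(1/\delta)/\eta^2)$ to decide between \textsf{High}, \textsf{Low}, or returning $n_Y/n_X$, and handle the three regimes via multiplicative Chernoff bounds with the sub-case split you describe for (\ref{compare:high}) and (\ref{compare:low}). Nothing is missing.
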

The second allows one to estimate the distance between the uniform distribution and an unknown distribution $\D$, given access to a conditional oracle to the latter:
\begin{theorem}[{\cite[Theorem 14]{CRS:12}}]\label{lem:cond:estim}
Given as input $\eps \in (0,1]$ and $\delta\in (0,1]$, as well as \PCOND query access to a distribution $\D$ on $\domain$, there exists an algorithm that outputs a value $\hat{d}$ and has the following guarantee. The algorithm performs $\tildeO{1/\eps^{20} \log(1/\delta)}$ queries and, with probability at least $1-\delta$, the value it outputs satisfies $\abs{\hat{d} - \totalvardist{\D}{\uniform}}\leq \eps$.
\end{theorem}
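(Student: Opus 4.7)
The plan is to reduce estimating $\totalvardist{\D}{\uniform}$ to approximating an expectation via the identity
\[
\totalvardist{\D}{\uniform} = \tfrac{1}{2}\,\shortexpect_{i\sim\uniform}\mleft[\,\abs{n\D(i) - 1}\,\mright],
\]
so the task becomes to compute, within additive error \eps, the mean of the (in-expectation bounded) random variable $\abs{n\D(i)-1}$ with $i$ drawn uniformly from $[n]$. I would bucket the domain by the scale of $n\D(i)$: a ``balanced'' bucket around $1$, heavy buckets $n\D(i)\in[(1+\eps)^k,(1+\eps)^{k+1}]$ for $k\geq 1$, and light buckets symmetrically. Elements with $\D(i)\notin[\eps^c/n,\eps^{-c}/n]$ contribute at most $\bigO{\eps}$ to the TV distance by an arithmetic check using $\sum_i\D(i)=1$ and $|\{i:\D(i)>t/n\}|\leq n/t$, so it suffices to accurately estimate the mass in a bounded number of buckets of bounded weight ratio.

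The algorithm itself proceeds in two stages. First, identify a reference element $r$ of moderate weight $\D(r)\in[\eps^c/n,\eps^{-c}/n]$ by drawing $\tildeO{1/\eps^2}$ uniform samples from $[n]$ and using \textsc{Compare} pairwise to cluster them by approximate weight; the densest cluster yields such an $r$ unless \D is already trivially far from uniform (which is detectable directly). Then, for each of $\tildeO{\poly(1/\eps)}$ uniformly drawn test points $i$, invoke $\textsc{Compare}(\{r\},\{i\})$ with $K=\bigTheta{1/\eps^c}$ and $\eta=\bigTheta{\eps}$. By \autoref{lem:cond:compare} this either returns a $(1\pm \bigO{\eps})$-approximation to $\D(i)/\D(r)$, which (once $n\D(r)$ has itself been calibrated to relative accuracy $\eps$ via a preliminary median-of-means over comparisons to fresh uniform samples) yields an estimate of $n\D(i)$; or it returns \textsf{High}/\textsf{Low}, which suffices to classify $i$ into the truncation regime. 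Averaging the (capped) estimates of $\abs{n\D(i)-1}$ over the uniform test points and applying Hoeffding's inequality gives an estimator accurate to within $\eps/2$ with constant probability, and a standard $\bigO{\log(1/\delta)}$-fold median-of-means amplification boosts this to $1-\delta$.

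The main obstacle is the simultaneous control of several sources of error: (i)~the truncation error introduced by capping $\abs{n\D(i)-1}$ at $\bigO{1/\eps^c}$, which must be shown to bias the estimator by at most $\bigO{\eps}$ even adversarially; (ii)~the calibration error of the reference weight $n\D(r)$, which multiplies through all subsequent ratio estimates; and (iii)~the per-sample \textsc{Compare} noise, which must be small enough to survive the final averaging. The $\tildeO{1/\eps^{20}}$ bound then arises as the product: $\tildeO{\poly(1/\eps)}$ outer uniform samples, each triggering a \textsc{Compare} of cost $\tildeO{K/\eta^2}=\tildeO{1/\eps^{2c+2}}$, with $c$ chosen (in the single digits) so that all three error sources above remain below $\eps/3$ simultaneously. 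Carefully threading these budgets, and handling the corner case in which no typical reference element $r$ exists (in which case \D is demonstrably far from uniform and can be rejected after a cheap preliminary scan), is where the bulk of the technical care is required.
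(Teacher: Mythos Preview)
This statement is not proved in the present paper: it is quoted as a black-box tool from \cite[Theorem~14]{CRS:12} in the preliminaries, and the paper offers no proof or sketch of its own. There is therefore nothing here to compare your proposal against; the result is simply imported and invoked later (in \autoref{lemma:cond:estimate:distance:flattening}).

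That said, your sketch has a concrete gap. The identity $\totalvardist{\D}{\uniform}=\tfrac{1}{2}\,\shortexpect_{i\sim\uniform}[\,|n\D(i)-1|\,]$ is correct, but the integrand is unbounded, and your truncation claim --- that elements with $\D(i)\notin[\eps^c/n,\eps^{-c}/n]$ contribute only $\bigO{\eps}$ to the total variation distance --- is false. Take $\D$ a point mass: a single coordinate has $n\D(i)-1=n-1$, the remaining $n-1$ coordinates contribute $1$ each, so $\shortexpect_{i\sim\uniform}[|n\D(i)-1|]=2(1-1/n)$; capping at $\eps^{-c}$ yields a truncated expectation of roughly $1$, hence a constant bias of about $1/2$ in your estimate of $\hat d$. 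The fact that $|\{i:\D(i)>t/n\}|\le n/t$ does not help, because those few heavy elements can carry essentially all of the $\lp[1]$ mass. An easy repair is to switch to the one-sided identity $\totalvardist{\D}{\uniform}=\shortexpect_{i\sim\uniform}[\max(0,1-n\D(i))]$, whose integrand lies in $[0,1]$; then your bucketing-plus-\textsc{Compare} plan does go through (light elements round to contribution~$1$, heavy ones to~$0$, moderate ones are handled by \textsc{Compare}). Your calibration of $n\D(r)$ via ``median-of-means over comparisons to fresh uniform samples'' is also more delicate than the one line you give it: the natural estimator again involves an unbounded ratio, and in \cite{CRS:12} the bulk of the work --- and of the $1/\eps^{20}$ --- lies precisely in obtaining and exploiting a suitably calibrated reference.
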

 
\section{Previous work: Standard model}\label{sec:samp:tester}
In this section, we describe the currently known results for monotonicity testing in the standard (sampling) oracle model. These bounds on the sample complexity, tight up to logarithmic factors, are due to Batu et al.~\cite{BKR:04};\footnote{\cite{BKR:04} originally claims an $\tildeO{{\sqrt{n}}/{\eps^4}}$ sample complexity, but their argument seems to only result in an $\tildeO{{\sqrt{n}}/{\eps^6}}$ bound. Subsequent work building on their techniques~\cite{CDGR:15} obtains the $\eps^4$ dependence.} while not directly applicable to the other access models we will consider, we note that some of the techniques they use will be of interest to us in \autoref{sec:polyneps:cond:tester}.
\begin{theorem}[{\cite[Theorem 10]{BKR:04}}]
There exists an $\bigO{\frac{\sqrt{n}}{\eps^6}\poly\!\log n}$-query tester for monotonicity in the \SAMP model.
\end{theorem}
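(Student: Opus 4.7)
The plan is to follow the ``partition into near-uniform intervals'' paradigm: reduce monotonicity testing to (i) a collection of $\ell_2$-uniformity subtests on intervals and (ii) a coarse monotonicity check on the induced ``flattened'' distribution. The key enabler is a structural lemma for monotone distributions, asserting that any monotone $\D$ on $[n]$ admits a partition of $[n]$ into $t = \poly(\log n, 1/\eps)$ intervals $J_1,\dots,J_t$ such that each $J_j$ is either \emph{light} ($\D(J_j) \leq \eps/t$) or \emph{near-uniform} ($\normtwo{\D_{J_j} - \uniformOn{J_j}}^2 \leq \eps^2/\abs{J_j}$). This is obtained by greedily growing intervals while the ratio $\max_{i\in J}\D(i)/\min_{i\in J}\D(i)$ stays bounded by $1+\bigO{\eps}$; monotonicity then caps the number of ``heavy'' intervals by $\bigO{\log n/\eps}$.

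The algorithm proceeds in three phases. First, I would use a modest batch of samples to adaptively reconstruct an analogue of such a partition, greedily extending each interval while the empirical $\ell_2$ mass within it remains consistent with uniformity. Second, on each constructed interval $J$, I would invoke the standard $\ell_2$-based uniformity tester on the samples falling in $J$; treating the realized count in $J$ as effective samples from $\D_J$, this subtest needs roughly $\tildeO{\sqrt{\abs{J}}/\eps^4}$ effective samples~--~equivalently $\tildeO{\sqrt{\abs{J}}/(\D(J)\eps^4)}$ raw samples from $\D$~--~while light intervals can be safely absorbed into the $\eps$ error budget and ignored. Third, I would estimate the interval weights $\D(J_j)$ and verify that the reduced distribution $\Dred$ (with one atom per interval of total mass $\D(J_j)$) is approximately non-increasing in the appropriate sense.

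The sample-complexity accounting is where the $\sqrt{n}$ appears: since each heavy interval has mass at least $\eps/t$, the worst per-interval budget is $\tildeO{t\sqrt{\abs{J}}/\eps^5}=\tildeO{t\sqrt{n}/\eps^5}$, and with $t=\poly(\log n, 1/\eps)$ this yields the overall $\tildeO{\sqrt{n}/\eps^6}$ bound (after union-bounding over the $t$ subtests).

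The main obstacle will be the completeness analysis: showing that if $\D$ is $\eps$-far from monotone, then no partition with polylogarithmically many intervals can simultaneously pass the $\ell_2$-uniformity test on every heavy interval \emph{and} produce a reduced distribution close to monotone. The natural route is contrapositive: if all per-interval $\ell_2$ tests pass, then $\D$ is total-variation close to its flattening with respect to the algorithm's partition; and if, on top, the reduced distribution is close to monotone, then one can perturb the flattening into a genuine monotone distribution at cost $\bigO{\eps}$, certifying that $\D$ is $\bigO{\eps}$-close to monotone. Getting a clean quantitative version of this argument and controlling how the errors compound across the three phases is ultimately what drives the final $\eps^6$ exponent.
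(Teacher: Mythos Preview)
Your proposal follows the same high-level strategy as the paper (and the original \cite{BKR:04} argument): partition $[n]$ into near-uniform or light intervals via $\ell_2$-uniformity subtests, learn the induced flat distribution $\bar{\D}$, and check offline whether $\bar{\D}$ is close to monotone. The structural lemma you invoke and the contrapositive correctness argument you sketch are exactly the ones driving the paper's proof.

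The one substantive difference is in how the partition is \emph{built}. You propose to grow each interval greedily from left to right until the empirical $\ell_2$ mass signals non-uniformity; the paper instead uses \emph{recursive bisection}: start with the whole interval $[n]$, test its conditional distribution for uniformity, and whenever the test fails, split the interval exactly in half and recurse on each half that still receives samples, rejecting outright if more than $\ell_{\max}=\bigO{\log^2 n/\eps}$ splits ever occur. This top-down scheme has two practical advantages over your left-to-right scan. First, it merges your phases~1 and~2 into a single procedure, since every uniformity test performed during the recursion is simultaneously the one certifying that leaf. Second, it supplies a clean, analyzable rejection criterion: the recursion depth is trivially at most $\log n$, and because the restriction of a monotone distribution to any subinterval is again monotone, one can bound the number of internal nodes directly by the structural lemma. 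Your greedy-growing scheme, by contrast, still needs to specify how to locate each interval's right endpoint without performing $\Theta(n)$ incremental tests, and needs a separate argument that the resulting number of intervals stays within $\poly(\log n,1/\eps)$ when $\D$ is monotone; recursive bisection sidesteps both issues.
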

\begin{proof}[Proof (sketch)]
Their algorithm works by taking this many samples from $\D$, and then using them to recursively split the domain $[n]$ in half, as long as the conditional distribution on the current interval is not close enough to uniform (or not enough samples fall into it). If the binary tree created during this recursive process exceeds $\bigO{\log^2 n/\eps}$ nodes, the tester rejects. Batu et al. then show that this succeeds with high probability, the leaves of the recursion yielding a partition of $[n]$ in $\ell_{\max}=\bigO{\log^2 n/\eps}$ intervals $I_1,\dots,I_{\ell_{\max}}$, such that either
\begin{enumerate}[\sf(a)]
  \item the conditional distribution $\D_{I_j}$ is $\bigO{\eps}$-close to uniform on this interval; or
  \item $I_j$ is ``light,'' i.e. has weight at most $\bigO{\eps/\ell_{\max}}$ under $\D$.
\end{enumerate}
(the first item relying on a lemma from \cite{BFRSW:00} relating distance to uniformity and collision count\footnote{We observe that the dependence on $\eps$ could be brought down to $\eps^4$, by using instead machinery from \cite[Theorem 11]{DKN:15} to perform this step.}). This implies this partition defines an $\ell_{\max}$-flat distribution $\bar{\D}$ which is $\eps/2$-close to $\D$, and can be easily learnt from another batch of samples; once this is done, it only remains to test (e.g., via linear programming, which can be done efficiently) whether this $\bar{\D}$ is itself $\eps/2$-close to monotone, and accept if and only this is the case.
\end{proof}
\begin{theorem}[{\cite[Theorem 11]{BKR:04}}]
Any tester for monotonicity in the \SAMP model must perform $\bigOmega{\frac{\sqrt{n}}{\eps^2}}$ queries.
\end{theorem}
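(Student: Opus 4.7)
My plan is to reduce from the $\bigOmega{\sqrt{n}/\eps^2}$ lower bound of Paninski for testing uniformity in the \SAMP model. Since the uniform distribution $\uniformOn{[n]}$ is itself monotone non-increasing, any sample-efficient monotonicity tester must in particular reject distributions that are both far from uniform \emph{and} far from monotone. To instantiate this I would use Paninski's pair-based hard family: partition $[n]$ into consecutive pairs $P_i = \{2i-1, 2i\}$, draw $\sigma = (\sigma_1,\dots,\sigma_{n/2})$ uniformly from $\{-1, +1\}^{n/2}$, and set $\D_\sigma(2i-1) = (1+c\sigma_i)/n$ and $\D_\sigma(2i) = (1-c\sigma_i)/n$ for a parameter $c = \bigTheta{\eps}$ to be fixed below. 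A direct computation gives $\totalvardist{\D_\sigma}{\uniformOn{[n]}} = c/2$, and Paninski's theorem says that distinguishing the two cases with constant success probability requires $\bigOmega{\sqrt{n}/c^2} = \bigOmega{\sqrt{n}/\eps^2}$ samples.

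The technical core is to show that, with probability $1-o(1)$ over $\sigma$, $\D_\sigma$ is $\bigOmega{c}$-far from \emph{every} monotone non-increasing distribution (not merely from uniform). I would prove this by a purely local per-pair argument. Fix any non-increasing $M$, and for each pair $P_i$ with $\sigma_i=-1$ (an ``increasing'' pair, where $\D_\sigma(2i-1) < \D_\sigma(2i)$), the single constraint $M(2i-1)\geq M(2i)$ already forces
\[
  \abs{M(2i-1) - \D_\sigma(2i-1)} + \abs{M(2i) - \D_\sigma(2i)} \;\geq\; \D_\sigma(2i) - \D_\sigma(2i-1) \;=\; 2c/n,
\]
via the elementary fact that $\min_{a\geq b}\bigl(\abs{a-L}+\abs{b-H}\bigr) = H-L$ whenever $L \leq H$ (achieved at any $a=b\in[L,H]$). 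Summing this per-pair bound over all increasing pairs, and using a Hoeffding bound to ensure that at least $n/8$ coordinates of $\sigma$ equal $-1$ with high probability, gives $\normone{\D_\sigma - M} \geq (n/8)\cdot(2c/n) = c/4$ \emph{simultaneously for every} non-increasing $M$, and hence $\totalvardist{\D_\sigma}{\mathcal{M}} \geq c/8$ with high probability. Choosing $c = 8\eps$ then makes a typical $\D_\sigma$ exactly $\eps$-far from the class $\mathcal{M}$ of monotone distributions.

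The main obstacle I anticipate is ensuring that the per-pair lower bound can be taken \emph{uniformly} over all non-increasing $M$ rather than holding only for a fixed one; fortunately this is immediate, since the pairwise inequality above depends only on the local constraint $M(2i-1)\geq M(2i)$ (which every $M\in\mathcal{M}$ must satisfy), and the Hoeffding concentration depends on $\sigma$ alone. Putting everything together, any $\eps$-tester for monotonicity would in particular distinguish $\uniformOn{[n]}$ from a typical $\D_\sigma$ drawn at Paninski parameter $c = 8\eps$, contradicting Paninski's $\bigOmega{\sqrt{n}/c^2} = \bigOmega{\sqrt{n}/\eps^2}$ lower bound unless it itself uses $\bigOmega{\sqrt{n}/\eps^2}$ samples.
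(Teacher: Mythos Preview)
Your proposal is correct, but it takes a genuinely different route from the paper's. The paper gives a \emph{black-box reduction} from uniformity testing: given an arbitrary distribution $\D$, run the monotonicity tester on both $\D$ and its mirror $\mirror{\D}(i)\eqdef\D(n-i)$, and accept iff both pass. If $\D$ is uniform then both are monotone; conversely, if both $\D$ and $\mirror{\D}$ are $(\eps/3)$-close to monotone non-increasing, one argues that $\D$ must be $\eps$-close to uniform. The lower bound then follows immediately from Paninski's $\bigOmega{\sqrt{n}/\eps^2}$ bound for uniformity.

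Your approach is instead \emph{white-box}: you open up Paninski's specific pair-perturbation family and verify directly, via the local per-pair inequality, that a typical $\D_\sigma$ is $\bigOmega{\eps}$-far from \emph{every} monotone distribution. This is more elementary (no mirror trick, no ``both close to non-increasing implies close to uniform'' lemma) and arguably cleaner for this particular statement. What the paper's approach buys is generality: the reduction is oblivious to the hard instance, so any lower bound for uniformity testing~--~in \emph{any} oracle model, not just \SAMP~--~transfers automatically to monotonicity (which is exactly how the paper populates the lower-bound column of its results table). Your argument, by contrast, is tied to the \SAMP model through its reliance on Paninski's construction.
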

\begin{proof}[Proof (sketch)]
To prove this lower bound, they reduce the problem of uniformity testing to monotonicity testing: from a distribution $\D$ over $[n]$ (where $n$ is for the sake of simplicity assumed to be even), one can run a monotonicity tester (with parameter $\eps^\prime\eqdef\eps/3$) on both $\D$ and $\mirror{\D}$, where the latter is defined as $\mirror{\D}(i)\eqdef \D(n-i)$, $i\in[n]$;  and accept if and only if both tests pass. If $\D$ is uniform, clearly $\D=\mirror{\D}$ is monotone; conversely, one can show that if both $\D$ and its ``mirrored version'' $\mirror{\D}$ pass the test (are $\eps^\prime$-close to monotone non-increasing), then it must be the case that $\D$ is $\eps$--close to uniform. The lower bound then follows from the $\bigOmega{\frac{\sqrt{n}}{\eps^2}}$ lower bound of \cite{Paninski:08} for testing uniformity.\footnote{\cite{BKR:04} actually only shows a $\bigOmega{\sqrt{n}}$ lower bound, as they invoke in the last step the (previously best known) lower bound of \cite{GRexp:00} for uniformity testing; however, their argument straightforwardly extends to the result of Paninski.}
\end{proof}
We note that the argument above extends to all models: that is, any lower bound for testing uniformity directly implies a corresponding lower bound for monotonicity in the same access model (giving the bounds in \autoref{table:summary:results}).

\paragraph{Open question.}
At a (very) high-level, the above results can be interpreted as ``relating monotonicity to uniformity.'' That is, the upper bound is essentially established by proving that monotonicity reduces to testing uniformity on polylogarithmically many intervals, while the lower bound follows from showing that it reduces from testing uniformity on a constant number of them. Thus, an interesting question is whether, \emph{qualitatively}, the former or the latter is tight in terms of $n$. Are uniformity and monotonicity strictly as hard, or is there an intrinsic gap, even if only polylogarithmic, between the two?
\begin{question}\label{question:monotonicity:uniformity}
Can monotonicity be tested in the \SAMP model with $\bigO{\sqrt{n}}$ samples, or are $\bigOmega{\sqrt{n}\log^c n}$ needed for some absolute constant $c > 0$?
\end{question}
\section{With conditional samples}\label{sec:cond}
\makeatletter{}In this section, we focus on testing monotonicity with a stronger type of access to the underlying distribution, that is given the ability to ask conditional queries. More precisely, we prove the following theorem:
\begin{theorem}\label{theo:tester:monotonicity:cond}
There exists an $\tildeO{\frac{1}{\eps^{22}}}$-query tester for monotonicity in the \COND model.
\end{theorem}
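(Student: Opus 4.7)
The approach follows the Birgé-flattening strategy outlined in the introduction: the plan is to test whether $\D$ is close to monotone by verifying two conditions on the flattened distribution $\bar\D \eqdef \birge[\D]{\eps}$ from \autoref{theorem:birge:obl:decomp}: \textbf{(a)} that $\D$ itself is $O(\eps)$-close to $\bar\D$ in total variation, and \textbf{(b)} that $\bar\D$ is $O(\eps)$-close to a monotone distribution. Completeness is immediate, since if $\D$ is monotone both hold by \autoref{theorem:birge:obl:decomp} and the fact that the Birgé flattening of a monotone distribution is itself monotone. Soundness follows from \autoref{coro:birge:decomposition:robust}: if $\D$ is $\eps$-far from monotone then so is $\bar\D$, so sub-test (b) must reject.

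I first draw $s=\tildeO{1/\eps^{2}}$ samples from $\D$ via \SAMP (which \COND simulates for free) and let $T\subseteq[\ell]$ be the set of Birgé indices $k$ such that $I_k$ receives at least one of these samples; crucially $|T|\le s$, a bound independent of $n$. The same samples also yield empirical estimates $\hat p_k$ of $\D(I_k)$ for every $k\in T$.

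For sub-test (a), the identity
\[
  \totalvardist{\D}{\bar\D}
  \;=\; \sum_{k=1}^{\ell}\D(I_k)\cdot\totalvardist{\D_{I_k}}{\uniformOn{I_k}}
  \;=\; \expect{\totalvardist{\D_{I_K}}{\uniformOn{I_K}}}
\]
(where $K$ is the Birgé index of a fresh $\D$-draw) shows that $\totalvardist{\D}{\bar\D}$ is an expectation of conditional distance-to-uniform terms under the distribution of indices induced by $\D$. For each $k\in T$ I therefore invoke \autoref{lem:cond:estim} on the subdomain $I_k$ (implementing \PCOND on $\D_{I_k}$ via \COND queries supported on $I_k$) to obtain $\hat d_k\approx\totalvardist{\D_{I_k}}{\uniformOn{I_k}}$ up to additive $O(\eps)$, at cost $\tildeO{1/\eps^{20}}$ queries per interval. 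Averaging the $\hat d_k$ over the samples (so that interval $I_k$ is weighted by its empirical frequency) gives, by Hoeffding and the fact that each $\hat d_k\in[0,1]$, an $O(\eps)$-accurate estimator of $\totalvardist{\D}{\bar\D}$; I reject if this estimator exceeds a constant times $\eps$. The total query cost of this phase is $|T|\cdot\tildeO{1/\eps^{20}}=\tildeO{1/\eps^{22}}$, the dominant term in the overall complexity.

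For sub-test (b), I combine the empirical weights $\hat p_k$ with the deterministically known lengths $\abs{I_k}$ to form a candidate piecewise-constant histogram $\hat{\bar\D}$, then compute offline---via $\lp[1]$-projection onto the monotone cone, solvable in $\poly(|T|)$ time---its distance to the closest monotone distribution, and reject if this exceeds $O(\eps)$. The principal obstacle is that the Birgé partition contains $\ell=\Theta(\log n/\eps)$ intervals in total, far more than $|T|$, and one must argue that intervals outside $T$ cannot hide a significant monotonicity violation: the intuition is that Birgé intervals grow geometrically in width, so an unsampled small-weight interval contributes a vanishing ``height'' $\D(I_k)/\abs{I_k}$ and cannot on its own account for $\eps$ worth of non-monotonicity once the heavy intervals on either side respect the monotone ordering. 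Making this quantitative---and calibrating the $O(\eps)$-accuracies and failure probabilities across the $\tildeO{1/\eps^2}$ interval-level invocations of \autoref{lem:cond:estim}, so the accumulated errors combine to $O(\eps)$ after union-bounding---is the main technical work behind the theorem.
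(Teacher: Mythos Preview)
Your sub-test (a) is exactly the paper's \autoref{lemma:cond:estimate:distance:flattening}, and that is indeed where the $\tildeO{1/\eps^{22}}$ comes from. Two issues remain.

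First, the soundness claim is stated backward. \autoref{coro:birge:decomposition:robust} (and \autoref{fact:birge:shrinking}) say that flattening can only \emph{decrease} distance to monotonicity, so ``$\D$ is $\eps$-far $\Rightarrow$ $\bar\D$ is $\eps$-far'' is false in general. The correct argument---which the paper uses---is the disjunction: if $\D$ is $\eps$-far from monotone then by the triangle inequality either $\totalvardist{\D}{\bar\D}>\eps/2$ or $\bar\D$ is $(\eps/2)$-far from monotone, so at least one sub-test rejects.

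Second, and this is the real gap, your sub-test (b) does not work as stated and is \emph{not} what the paper does. You flag the obstacle yourself, but it is not merely ``quantitative bookkeeping'': with only $s=\tildeO{1/\eps^{2}}$ \SAMP draws you cannot learn the $\ell=\Theta((\log n)/\eps)$-piece histogram $\bar\D$ to $O(\eps)$ accuracy independently of $n$. In the extreme where all Birg\'e intervals carry equal weight $1/\ell$, almost all of them go unsampled yet together they hold essentially all the mass; your heuristic that unsampled intervals have ``vanishing height'' does not control the distance of $\bar\D$ to monotone, which depends on the full sequence of heights. The paper takes a different route that uses \COND rather than \SAMP for this step: it reformulates ``$\bar\D$ monotone'' as the \emph{exponential property} $\property_\alpha$ of the reduced distribution $\Dred_\alpha$ on $[\ell]$ (i.e., $Q(k+1)\le(1+\alpha)Q(k)$ for all $k$), proves (\autoref{lemma:distance:palpha::weight:witnesses}, \autoref{lemma:distance:palpha::weight:tau:witnesses}) that if $\Dred_\alpha$ is $\Omega(\eps)$-far from $\property_\alpha$ then $\Omega(\eps^2)$ of its mass sits on \emph{witnesses} $k$ with $Q(k)>(1+\alpha+\tau)Q(k-1)$, draws $O(1/\eps^2)$ indices from $\Dred_\alpha$, and invokes \textsc{Compare} (\autoref{lem:cond:compare}) on each against its predecessor. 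This costs $\tildeO{1/\eps^{8}}$ \COND queries and never attempts to learn $\bar\D$ globally.
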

\noindent Furthermore, assuming only a (restricted) type of conditional queries are allowed, one can still get an exponential improvement from the standard sampling model:
\begin{theorem}\label{theo:tester:monotonicity:intcond}
There exists an $\tildeO{\frac{\log^5{n}}{\eps^4}}$-query tester for monotonicity in the \ICOND model.
\end{theorem}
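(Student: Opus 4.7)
I would follow the Birgé-flattening strategy described in the \emph{Techniques} subsection. Fix the oblivious partition $\mathcal{I}_\gamma=(I_1,\dots,I_\ell)$ of \autoref{def:birge:obl:decomp} for $\gamma=\Theta(\eps)$, so that $\ell=\bigTheta{(\log n)/\eps}$, and let $\bar{D}:=\Dflat{\gamma}$. The tester rejects iff either
(i) $\totalvardist{D}{\bar{D}}$ is much larger than $\eps$, or
(ii) $\totalvardist{\bar{D}}{\mathcal{M}}$ is much larger than $\eps$.
Passing both and invoking the triangle inequality yields $\totalvardist{D}{\mathcal{M}}=O(\eps)$; conversely, if $D$ is monotone then (i) follows from \autoref{theorem:birge:obl:decomp} and (ii) is immediate since the flattening of a monotone distribution is itself monotone (cf.\ the calculation inside the proof of \autoref{coro:birge:decomposition:robust}).

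\textbf{Implementing (i).} Using the identity
\[
  \totalvardist{D}{\bar{D}} \;=\; \sum_{k=1}^{\ell} D(I_k)\,\totalvardist{D_{I_k}}{\uniformOn{I_k}},
\]
I first draw $\tildeO{\ell/\eps^2}$ plain samples (via the trivial ICOND query $S=[n]$) to obtain empirical weights $\hat{p}_k$ with $\sum_k |\hat{p}_k-D(I_k)|=O(\eps)$. Intervals with $\hat{p}_k\le c\eps/\ell$ are declared \emph{light} and discarded, as their total weight is $O(\eps)$. On each remaining \emph{heavy} interval $I_k$, I run an ICOND subroutine restricted to subintervals of $I_k$ that \emph{estimates} (rather than merely tests) $\totalvardist{D_{I_k}}{\uniformOn{I_k}}$ up to additive $O(\eps)$, at a cost of $\poly(\log|I_k|,1/\eps)$ queries per interval. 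This subroutine is the ICOND analogue of~\autoref{lem:cond:estim}, and can be built from the bucketing/uniformity machinery of~\cite{CRS:12} applied to dyadic subintervals of $I_k$. I then reject if the weighted sum $\sum_k\hat{p}_k\hat{d}_k$ exceeds a suitable $c'\eps$.

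\textbf{Implementing (ii).} The flattening $\bar{D}$ is described by only $\ell$ numbers, and is monotone iff the ``heights'' $q_k:=D(I_k)/\abs{I_k}$ are non-increasing in $k$. Hence $\totalvardist{\bar{D}}{\mathcal{M}}$ is the optimum of a linear program in $\ell$ variables $q'_k$ with objective $\sum_k\abs{I_k}\cdot|q'_k-\hat{q}_k|$ (with $\hat{q}_k:=\hat{p}_k/\abs{I_k}$), which I solve directly from the previously drawn samples; \emph{no} additional queries are needed. I reject if the LP optimum exceeds $c''\eps$.

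\textbf{Main obstacle.} The critical subtlety is in (i): for monotone $D$, the Birgé bound only guarantees the \emph{weighted} sum $\sum_k D(I_k)\totalvardist{D_{I_k}}{\uniformOn{I_k}}\le\gamma$, \emph{not} that each summand is small on every heavy interval -- indeed, an individual heavy interval could exhibit $\Theta(1)$ distance to uniform while remaining consistent with global monotonicity. Thresholding a uniformity tester per interval at $O(\eps)$ would therefore produce false rejections, so one really must estimate each per-interval distance and combine them weighted by the $\hat{p}_k$. Obtaining the ICOND estimator with $\poly(\log n,1/\eps)$ complexity per interval is the technical heart of the proof; combined with the $\ell=\bigTheta{(\log n)/\eps}$ heavy intervals and a per-interval cost around $\tildeO{\log^4 n/\eps^3}$, this gives the claimed $\tildeO{\log^5 n/\eps^4}$ overall query bound.
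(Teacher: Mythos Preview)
Your proposal takes a different route than the paper and, as written, has a genuine gap at precisely the step you flag as ``the technical heart.'' You need, for each heavy Birg\'e interval $I_k$, an \ICOND procedure that \emph{estimates} $\totalvardist{D_{I_k}}{\uniformOn{I_k}}$ to additive $O(\eps)$ with $\poly(\log|I_k|,1/\eps)$ queries. The only distance-to-uniformity estimator available in the paper (and in~\cite{CRS:12}) is \autoref{lem:cond:estim}, which is a \PCOND routine with cost $\tildeO{1/\eps^{20}}$; there is no \ICOND analogue stated, and your sentence ``can be built from the bucketing/uniformity machinery of~\cite{CRS:12} applied to dyadic subintervals'' does not establish one. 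Tolerant uniformity testing is in general strictly harder than (non-tolerant) uniformity testing, so this is not a detail one can wave away: Fact~\ref{fact:crs:unif:testing} gives an \ICOND uniformity \emph{tester}, not an estimator, and the paper explicitly remarks that the \PCOND estimator of \autoref{lem:cond:estim} is unavailable in the \ICOND setting. Without this subroutine, step~(i) of your plan does not go through, and your quoted per-interval cost of $\tildeO{\log^4 n/\eps^3}$ is unjustified.

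The paper avoids this obstacle by \emph{not} using the oblivious Birg\'e partition for \ICOND at all. Instead it follows the Batu--Kumar--Rubinfeld adaptive bisection: recursively halve $[n]$, running on each current interval $I$ a (non-tolerant) \ICOND uniformity test that need only distinguish ``$D_I$ is $O(\eps/|I|)$-close to uniform in $\lp[\infty]$'' from ``$D_I$ is $\Omega(\eps)$-far in total variation.'' This weak bicriteria test is exactly what the \ICOND uniformity tester of~\cite{CRS:14} already checks (ratios of pairs of points), so no tolerant estimator is required. The BKR structural lemma guarantees that for monotone $D$ the recursion terminates with at most $\ell_{\max}=O(\log^2 n/\eps)$ leaves, each either $\lp[\infty]$-close to uniform or light; one then learns the induced flat distribution with $O(\log^4 n/\eps^2)$ plain samples and checks its distance to $\mathcal{M}$ by LP. The total cost is $\tildeO{\ell_{\max}\cdot \log^3 n/\eps^3 + \log^4 n/\eps^2}=\tildeO{\log^5 n/\eps^4}$. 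Your diagnosis of the ``main obstacle'' is correct for the Birg\'e partition, but the paper's point is that the \emph{adaptive} BKR partition makes every surviving interval individually close to uniform, so per-interval threshold testing suffices and no estimation is needed.
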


\noindent We now prove these two theorems, starting with~\autoref{theo:tester:monotonicity:intcond}. In doing so, we will also derive a weaker, $\poly(\log n, 1/\eps)$-query tester for \COND; before turning in~\autoref{sec:polyeps:cond:tester} to the constant-query tester of~\autoref{theo:tester:monotonicity:cond}.

\subsection{A \texorpdfstring{$\poly(\log n, 1/\eps)$}{poly(log n,1/eps)}-query tester for \ICOND}\label{sec:polyneps:cond:tester}
Our algorithm (\autoref{algo:general:monotonicity:algorithm}) follows the same overall idea as the one from \cite{BKR:04}, which a major difference. As in theirs, the first step will be to partition $[n]$ into a small number of intervals, such that the conditional distribution $\D_I$ on each interval $I$ is close to uniform; that is,
\begin{equation}
  \totalvardist{\D_I}{\uniform_I} = \sum_{i\in I} \abs{\frac{\D(i)}{\D(I)}-\frac{1}{\abs{I}}} \leq \frac{\eps}{4}\;.
\end{equation}
The original approach (in the sampling model) of Batu et al. was based on estimating the $\lp[2]$ norm of the conditional distribution \emph{via} the number of collisions from a sufficiently large sample; this yielded a $\tildeO{\sqrt{n}}$ sample complexity.

However, using directly as a subroutine (in the \COND model) an algorithm for (tolerantly) testing uniformity, one can perform this first step with $\ell_{\max}\log\frac{1}{\delta} = \ell_{\max}\log\ell_{\max}$ calls\footnote{Where the logarithmic dependence on $\delta$ aims at boosting the (constant) success probability of the uniformity testing algorithm, in order to apply a union bound over the $\bigO{\ell_{\max}}$ calls.} to this subroutine, each with approximation parameter $\frac{\eps}{4}$ (the proof of correctness from \cite{BKR:04} does not depend on how the test of uniformity is actually performed, in the partitioning step).\medskip 

\noindent A first idea would be to use for this the following result:
\begin{fact}[\cite{CRS:14}]\label{fact:crs:unif:testing}
  One can test \eps-uniformity of a distribution $\D_r$ over $[r]$ in the conditional sampling model:
    \begin{itemize}
      \item with $\tildeO{{1}/{\eps^2}}$ samples, given access to a $\COND_{\D_r}$ oracle;
      \item with $\tildeO{{\log^3 r}/{\eps^3}}$ samples, given access to a $\ICOND_{\D_r}$ oracle.
    \end{itemize}
\end{fact}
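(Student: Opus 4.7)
Both items can be attacked via the same underlying structural fact. If $\totalvardist{\D_r}{\uniform_{[r]}}\geq\eps$, then the \emph{heavy} set $H=\{i\in[r] : \D_r(i)\geq(1+\eps/4)/r\}$ satisfies $\D_r(H)=\bigOmega{\eps}$ and the \emph{light} set $L=\{i\in[r] : \D_r(i)\leq(1-\eps/4)/r\}$ satisfies $\abs{L}/r=\bigOmega{\eps}$; this follows from a standard averaging argument over the histogram of $\D_r$. Consequently, a sample drawn from $\COND_{\D_r}([r])$ hits $H$ with probability $\bigOmega{\eps}$, while an independent uniformly random index from $[r]$ hits $L$ with probability $\bigOmega{\eps}$. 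The \textsc{Compare} subroutine of \autoref{lem:cond:compare} will be the workhorse used to certify the resulting imbalance in probability mass.

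For the \COND bound, the plan is to draw $m=\tildeO{1/\eps}$ samples $X$ from $\D_r$ (via $\COND_{\D_r}([r])$) together with $m$ indices $Y$ chosen uniformly at random from $[r]$, independent of any oracle call. By the structural fact, with constant probability $X$ contains an element of $H$ and $Y$ contains an element of $L$; repeated batches boost this to high probability at the cost of an extra $\log(1/\delta)$ factor. I would then invoke \textsc{Compare} once on the (disjointified) aggregate sets $X$ and $Y$ with $K=\bigO{1}$ and $\eta=\Theta(\eps)$, whose cost is $\tildeO{1/\eps^2}$ queries. In the uniform case, the normalized ratio $(\D_r(X)/\abs{X})/(\D_r(Y)/\abs{Y})$ is $1\pm o(\eps)$; in the $\eps$-far case, the witnesses from $H$ and $L$ perturb it by $\Omega(\eps)$. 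The technically delicate step will be quantifying this signal against the noise contributed by ``typical'' (neither heavy nor light) elements of $X$ and $Y$: naively comparing all $m^2$ singleton pairs would give a $\tildeO{1/\eps^4}$ bound, so shaving this factor down likely requires either a median-of-batches trick or restricting the comparison to the empirically extremal singletons before calling \textsc{Compare}.

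For the \ICOND bound, arbitrary subset queries are unavailable, so I would use a recursive binary decomposition. Maintain a queue of active intervals, initialized to $\{[r]\}$; while the queue is nonempty, pop an interval $I$, split it into equal halves $I=I_L\cup I_R$, draw $\bigO{\log^2 r/\eps^2}$ samples from $\ICOND_{\D_r}(I)$ to estimate $\D_r(I_L)/\D_r(I)$ to additive accuracy $\Theta(\eps/\log r)$, and if this estimate differs from $\abs{I_L}/\abs{I}$ by more than $\Theta(\eps/\log r)$, enqueue both halves; otherwise mark $I$ as locally uniform. Reject if the recursion exceeds depth $\log r$ or touches more than $\bigO{\log r/\eps}$ intervals. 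If $\D_r$ is uniform, every split is balanced and the recursion halts at the root. If $\D_r$ is $\eps$-far from uniform, a potential-function argument shows that any uniform-looking recursion must accumulate total split imbalance $o(\eps)$, so some root-to-leaf chain must exhibit $\Omega(\log r)$ detectable imbalances of size $\Omega(\eps/\log r)$. Multiplying depth by level-width by per-test cost yields the claimed $\bigO{\log r \cdot \log r/\eps \cdot \log^2 r/\eps^2}=\tildeO{\log^3 r/\eps^3}$ bound. The main obstacle will be the global union bound across all tests (absorbed into $\tildeO{\cdot}$) and choosing the right additive threshold $\Theta(\eps/\log r)$ so that per-level errors do not accumulate to more than $\eps$.
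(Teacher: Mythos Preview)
First, note that the paper does not prove this statement at all: it is stated as a \emph{Fact} with a citation to~\cite{CRS:14}, and the surrounding text only remarks that both testers ``check that the ratio $\D(x)/\D(y)$ of any two points \dots be in $[1-\eps,1+\eps]$.'' So there is no in-paper proof to compare against; the question is whether your sketch actually recovers the cited bounds.

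\textbf{The \COND part has a real gap.} Calling \textsc{Compare} once on the aggregate sets $X$ and $Y$ does not work. Even granting that $X$ contains a heavy point and $Y$ a light one, the other $m-1\approx 1/\eps$ elements in each set are uncontrolled: they can (and in general will) swamp the $\Theta(\eps/r)$ contribution of the single witness, driving the ratio $\D_r(X)/\D_r(Y)$ to within $O(\eps^2)$ of $1$ --- or even flipping its sign if $X$ happens to pick up light points and $Y$ heavy ones. You acknowledge this, but the suggested fixes (``median-of-batches'', ``empirically extremal singletons'') are not workable as stated: you cannot identify the extremal singleton without already estimating individual weights, and a median over batches of aggregate comparisons inherits the same dilution. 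The actual~\cite{CRS:14} tester avoids aggregation entirely: it compares \emph{one} sample $s\sim\D_r$ against \emph{one} uniform index $u$ via \textsc{Compare} with $\eta=\Theta(\eps)$, and the structural lemma needed is stronger than yours --- one must show that with \emph{constant} probability (not $\Omega(\eps^2)$) the pair $(s,u)$ already has $\D_r(s)/\D_r(u)\notin[1-\Theta(\eps),1+\Theta(\eps)]$.

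\textbf{The \ICOND part is unsound as written.} Your recursion halts and accepts as soon as an interval $I$ has $\D_r(I_L)/\D_r(I)\approx \abs{I_L}/\abs{I}$. But a balanced split says nothing about uniformity \emph{within} the halves: take $\D_r$ putting mass $1/2$ on $1$ and $1/2$ on $r$. The root is perfectly balanced, your algorithm accepts immediately, yet $\D_r$ is $(1-2/r)$-far from uniform. Always recursing instead would touch $\Theta(r)$ intervals, destroying the bound. The~\cite{CRS:14} \ICOND tester does not test balance of dyadic halves; it first draws a sample $s\sim\D_r$ (heavy with probability $\Omega(\eps)$) and then uses a binary search over intervals to locate a \emph{single light point} $u$ to compare $s$ against, using \textsc{Compare} on nested interval pairs at each of the $\log r$ levels. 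The $\log^3 r/\eps^3$ cost comes from $O(1/\eps)$ outer repetitions, $O(\log r)$ levels, and $\tildeO{\log^2 r/\eps^2}$ per \textsc{Compare} call --- not from the tree-width argument you sketch.
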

However, this does \emph{not} suffice for our purpose: indeed,~\autoref{algo:general:monotonicity:algorithm} needs in Step~\ref{algo:general:step:testunif} not only to reject distributions that are too far from uniform, \emph{but also to accept those that are close enough.} A standard uniformity tester as the one above does not ensure the latter condition: for this, one would \textit{a priori} need \emph{tolerant} tester for uniformity. While \cite{CRS:14} does describe such a tolerant tester (see~\autoref{lem:cond:estim}), it only applies to \COND{} -- and we aim at getting an \ICOND tester.

To resolve this issue, we observe that what the algorithm requires is slightly weaker: namely, to distinguish distributions on an interval $I$ that \textsf{(a)} are $\Omega(\eps)$-far from uniform from those that are \textsf{(b)} $O(\eps/\abs{I})$-close to uniform \emph{in $\lp[\infty]$ distance}. It is not hard to see that the two testers of~\autoref{fact:crs:unif:testing} can be adapted in a straightforward fashion to meet this guarantee, with the same query complexity. Indeed, \textsf{(b)} is equivalent to asking that the ratio $\D(x)/\D(y)$ of any two points in $I$ be in $[1-\eps,1+\eps]$, which is exactly what both testers check.

\begin{algorithm}[H]
  \begin{algorithmic}[1]
  \algblock[FirstStep]{PartitionStart}{PartitionEnd}
    \Require $\mathcal{O}\in\{\COND,\ICOND\}$ access to $\D$
    \State Define $\ell_{\max}\eqdef \bigO{\frac{\log^2 n}{\eps}}$, $\delta\eqdef\bigO{\frac{1}{\ell_{\max}}}$.
    \State Draw $m\eqdef\bigO{\frac{\eps}{\ell_{\max}}\log\frac{1}{\delta}}$ samples $h_1,\dots,h_m$.
    \PartitionStart
        \State Start with interval $I\gets[n]$
        \Repeat
          \State\label{algo:general:step:testunif} Test (with probability $\geq 1-\delta$) if $\D_I$ is $\eps/4$-close to the uniform distribution on $I$
          \If{ $\totalvardist{\D_I}{\uniform_I} > \frac{\eps}{4}$ }
             \State bisect $I$ in half             \State recursively test each half that contains at least one of the $h_i$'s, mark them as ``light'' otherwise \label{algo:general:monotonicity:step:recurse}
          \ElsIf{ $\ell_{\max}$ splits have been made }
            \State \Return \fail
          \EndIf
        \Until{all intervals are close to uniform or have been marked ``light''}
    \PartitionEnd
    \State Let $\mathcal{I}_\ell=\langle I_1,\dots,I_\ell\rangle$ denote the partition of $[n]$ into intervals induced by the leaves of the recursion from the previous step.
    \State Obtain an additional sample $T$ of size $\bigO{\frac{\log^4 n}{\eps^2}}$.
    \State Let $\hat{\D}$ denote the $\ell$-flat distribution described by $(\vect{w}, \mathcal{I}_\ell)$ where $\omega_j$ is the fraction of samples from $T$ falling in $I_j$.
    \If{$\hat{\D}$ is $(\eps/2)$-close to monotone} \Comment{Can be tested in $\poly(\ell)$-time (\cite[Lemma 8]{BKR:04})}
      \State \Return \accept
    \Else
         \State \Return \fail
    \EndIf 
  \end{algorithmic}
  \caption{\label{algo:general:monotonicity:algorithm}General algorithm \textsc{TestMonCond\textsuperscript{$\mathcal{O}$}}}
\end{algorithm}

\noindent As a corollary, we get:
\begin{corollary}
  Given access to a conditional oracle $\mathcal{O}$ for a distribution $\D$ over $[n]$, the algorithm $\textsc{TestMonCond}^{\mathcal{O}}$ outputs \accept when $\D$ is monotone and \fail when it is \eps-far from monotone, with probability at least 2/3. The algorithm uses
    \begin{itemize}
      \item $\tildeO{\frac{\ell_{\max}}{\eps} + \frac{\ell_{\max}}{\eps^2}+\frac{\log^4 n}{\eps^2}} = \tildeO{\frac{\log^2 n}{\eps^3}+\frac{\log^4 n}{\eps^2}}$ samples, when $\mathcal{O}=\COND_{\D}$;
      \item $\tildeO{\frac{\ell_{\max}}{\eps} + \ell_{\max}\frac{\log^3 n}{\eps^3}+\frac{\log^4 n}{\eps^2}} = \tildeO{\frac{\log^5 n}{\eps^4}}$ samples, when $\mathcal{O}=\ICOND_{\D}$.
    \end{itemize}
\end{corollary}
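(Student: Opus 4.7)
My approach would be to mirror, essentially verbatim, the analysis that \cite{BKR:04} used for their \SAMP-model tester, since \autoref{algo:general:monotonicity:algorithm} follows exactly the same ``partition-then-flatten-then-test-flat-distribution'' template. The only substantive difference is that Step~\ref{algo:general:step:testunif} is now implemented via a conditional-query uniformity tester (\autoref{fact:crs:unif:testing}) rather than a collision-based one. The key observation I would establish up front is that the two CRS testers can in fact be used to solve the following strictly easier promise problem, with the same query complexity: accept if $\norminf{\D_I - \uniform_I} \leq \frac{\eps}{4\abs{I}}$ and reject if $\totalvardist{\D_I}{\uniform_I} > \frac{\eps}{4}$. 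This holds because both testers check whether the ratio $\D(x)/\D(y)$ over pairs $x,y\in I$ lies in $[1\pm O(\eps)]$, which is exactly equivalent to $\ell_\infty$-closeness to uniform on $I$. With this strengthening in hand, I would set the failure probability of each invocation to $\delta = 1/\ell_{\max}$ and union bound over the (at most $2\ell_{\max}$) calls made during the recursion.

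\textbf{Yes case.} Conditional on all uniformity tests being correct, I would invoke the Batu et al. structural lemma (\cite[Lemma~7]{BKR:04}): if $\D$ is monotone, then \emph{every} interval $I$ appearing at depth at most $\ell_{\max}$ in the recursion is either $\ell_\infty$-close to uniform on $I$ (so that Step~\ref{algo:general:step:testunif} accepts and the recursion halts on $I$) or marked light (Step~\ref{algo:general:monotonicity:step:recurse}, noting that taking $m = \tildeO{\ell_{\max}/\eps}$ initial samples suffices by a Chernoff+union bound to ensure every interval of $\D$-mass exceeding $\eps/(2\ell_{\max})$ contains some $h_i$). Hence the recursion produces at most $\ell_{\max}$ leaves, each of which is either $\ell_\infty$-close to uniform or has $\D$-mass at most $O(\eps/\ell_{\max})$. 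Summing up the per-interval contributions, the ``true'' $\ell$-flat distribution $\bar\D$ supported on the induced partition satisfies $\totalvardist{\D}{\bar\D} \leq \eps/4$. A further $\tildeO{\log^4 n/\eps^2}$ samples (plus a standard Chernoff/VC bound, cf.~\cite[Lemma~6]{BKR:04}) guarantee that $\totalvardist{\hat{\D}}{\bar\D} \leq \eps/4$, so $\totalvardist{\hat{\D}}{\D} \leq \eps/2$. Since $\D$ is monotone, $\hat{\D}$ is $\eps/2$-close to monotone and the final LP-based test accepts.

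\textbf{No case and complexity.} If $\D$ is $\eps$-far from monotone, then either the recursion fails (in which case we return \fail directly), or it succeeds and, by the same estimate $\totalvardist{\hat{\D}}{\D} \leq \eps/2$, the computed $\hat{\D}$ is at least $\eps/2$-far from monotone, so the LP test rejects; the LP test itself is exact, so no further probability loss occurs here. Finally, the query counts given in the statement are obtained by summing three contributions: $\tildeO{\ell_{\max}/\eps}$ for the initial $m$ samples, $\ell_{\max}$ invocations of the uniformity subroutine (each costing $\tildeO{1/\eps^2}$ in \COND and $\tildeO{\log^3 n/\eps^3}$ in \ICOND by \autoref{fact:crs:unif:testing}), and $\tildeO{\log^4 n/\eps^2}$ to estimate the weight vector; plugging in $\ell_{\max}=\bigO{\log^2 n/\eps}$ gives the two stated bounds. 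The one point that requires care (and is, I expect, the main obstacle) is the reduction of the needed promise to pairwise-ratio closeness; once this is noted, the remainder of the argument is a direct transport of \cite{BKR:04}'s analysis.
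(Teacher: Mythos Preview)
Your proposal is correct and follows essentially the same approach as the paper: the paper itself defers the structural and correctness arguments to \cite{BKR:04}, and singles out as the one nontrivial point exactly the observation you flag as ``the main obstacle,'' namely that the \COND/\ICOND uniformity testers of \autoref{fact:crs:unif:testing} actually solve the asymmetric promise problem (accept when $\norminf{\D_I-\uniform_I}\leq O(\eps/\abs{I})$, reject when $\totalvardist{\D_I}{\uniform_I}>\eps/4$) because they work by checking pairwise ratios. Your treatment of the \yes/\no cases, the light-interval accounting, and the three-term query bookkeeping all match the paper's (terse) discussion surrounding the corollary.
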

\noindent This in turn implies~\autoref{theo:tester:monotonicity:intcond}. Note that we make sure in Step~\ref{algo:general:monotonicity:step:recurse} that each of the intervals we recurse on contains at least one of the ``reference samples'' $h_i$: this is in order to guarantee all conditional queries made on a set with non-zero probability. Discarding the ``light intervals'' can be done without compromising the correctness, as with high probability each of them has probability weight at most $\frac{\eps}{4\ell_{\max}}$, and therefore in total the light intervals can amount to at most $\eps/4$ of the probability weight of $\D$ -- as in the original argument of Batu et al., we can still conclude that with high probability $\hat{\D}$ is \eps/2-close to $\D$.

\subsection{A \texorpdfstring{$\poly(1/\eps)$}{poly(1/eps)}-query tester for \COND}\label{sec:polyeps:cond:tester}

The idea in proving~\autoref{theo:tester:monotonicity:cond} is to reduce the task of testing monotonicity to another property, but on a (related) distribution \emph{over a much smaller domain}. We begin by introducing a few notations, and defining the said property:

\subsubsection{Reduction from testing properties over \texorpdfstring{$[\ell]$}{[l]}}

\noindent For fixed $\alpha$ and $\D$, let $\Dred_{\alpha}$ be the \emph{reduced} distribution on $[\ell]$ with respect to the oblivious decomposition $\mathcal{I}_\alpha$, where all throughout $\ell=\ell(\alpha,n)$ as per~\autoref{def:birge:obl:decomp}; i.e,
\[ \forall k\in[\ell],\quad \Dred_{\alpha}(k) = \D(I_k) = \Dflat{\alpha}(I_k) \]
Note that given oracle access $\SAMP_{D}$, it is easy to simulate $\SAMP_{\Dred_{\alpha}}$.

\begin{definition}[Exponential Property]\label{definition:exponential:property}
  Fix $n$, $\alpha$, and the corresponding $\ell=\ell(n,\alpha)$. For distributions over $[\ell]$, let the property $\property_\alpha$ be defined as 
  {\sf ``$Q\in\property_\alpha$ if and only if there exists $D\in\mathcal{M}$ over $[n]$ such that $Q=\Dred_{\alpha}$.''}
\end{definition}
\begin{fact}\label{fact:property:and:expanded:monotonicity}
  Given a distribution $Q$ over $[\ell]$, let $\operatorname{expand}_\alpha(Q)$ denote the distribution over $[n]$ obtained by ``spreading'' uniformly $Q(k)$ over $I_k$ (again, considering the oblivious decomposition of $[n]$ for $\alpha$). Then,
  \begin{equation}
    Q\in\property_\alpha \Leftrightarrow \operatorname{expand}_\alpha(Q)\in\mathcal{M}
  \end{equation}
\end{fact}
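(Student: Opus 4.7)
The plan is to prove both implications directly from the definitions, exploiting two elementary observations: (i) $\operatorname{expand}_\alpha$ and the reduction $\D \mapsto \Dred_\alpha$ are essentially inverses of one another (when restricted to distributions that are piecewise constant on the partition $\mathcal{I}_\alpha$); and (ii) flattening a monotone non-increasing distribution preserves monotonicity. In particular, a key identity to observe first is that $\operatorname{expand}_\alpha(\Dred_\alpha) = \Dflat{\alpha}$, and conversely, if $\D' = \operatorname{expand}_\alpha(Q)$, then $\D'(I_k) = |I_k|\cdot Q(k)/|I_k| = Q(k)$, so $(\D')^{\mathrm{red}}_\alpha = Q$.

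For the $(\Leftarrow)$ direction, assume $\operatorname{expand}_\alpha(Q)\in\mathcal{M}$. Take $\D := \operatorname{expand}_\alpha(Q)$ as the witness: it is by assumption a monotone distribution on $[n]$, and by the identity above its reduction satisfies $\Dred_\alpha(k) = \D(I_k) = Q(k)$ for each $k\in[\ell]$. Hence $Q = \Dred_\alpha$ for a monotone $\D$, i.e.\ $Q\in\property_\alpha$. This direction is almost immediate.

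For the $(\Rightarrow)$ direction, suppose $Q\in\property_\alpha$, so $Q = \Dred_\alpha$ for some $\D\in\mathcal{M}$. Then $\operatorname{expand}_\alpha(Q) = \Dflat{\alpha}$, which is constant on each interval $I_k$ with value $\D(I_k)/|I_k|$. To show this is monotone non-increasing on $[n]$, it suffices to verify that the sequence of interval averages $\D(I_k)/|I_k|$ is non-increasing in $k$. Since $\D$ is non-increasing on $[n]$ and every element of $I_k$ precedes every element of $I_{k+1}$, we have $\min_{i\in I_k}\D(i) \geq \max_{j\in I_{k+1}}\D(j)$, and therefore
\[
    \frac{\D(I_k)}{|I_k|} \;\geq\; \min_{i\in I_k}\D(i) \;\geq\; \max_{j\in I_{k+1}}\D(j) \;\geq\; \frac{\D(I_{k+1})}{|I_{k+1}|}\,.
\]
Combined with the fact that $\Dflat{\alpha}$ is constant within each interval, this shows $\operatorname{expand}_\alpha(Q) = \Dflat{\alpha}\in\mathcal{M}$.

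There is no real obstacle here; the statement is essentially a restatement of definitions, and the only substantive content is the classical fact that averaging a monotone function over consecutive blocks preserves monotonicity, which follows from the min/max sandwich above. The only minor subtlety is to be explicit that $\operatorname{expand}_\alpha$ and the reduction commute as claimed, so that one can move freely between $Q$, $\Dred_\alpha$, $\operatorname{expand}_\alpha(Q)$, and $\Dflat{\alpha}$ in the argument.
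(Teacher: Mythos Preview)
Your proof is correct. The paper actually states this result as a \texttt{Fact} without proof, so there is no argument to compare against; your write-up simply fills in the (straightforward) details, and the min/max sandwich you use for the $(\Rightarrow)$ direction is exactly the standard way to see that block-averaging preserves monotonicity.
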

\begin{fact}\label{fact:property:and:exponential:factor}
  Given a distribution $Q$ over $[\ell]$, the following also holds:\footnote{We point out that the equivalence stated here once again ignores, for the sake of conceptual clarity, technical details arising from the discrete setting. Taking these into account would yield a slightly weaker characterization, with a twofold implication instead of an equivalence; which would still be good enough for our purpose.}
  \begin{equation}
    \property_\alpha(Q) \Leftrightarrow \forall k < \ell,\quad Q(k+1) \leq (1+\alpha)Q(k)
  \end{equation}
\end{fact}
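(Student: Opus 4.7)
The plan is to chain together the two equivalences: first use \autoref{fact:property:and:expanded:monotonicity} to rephrase membership in $\property_\alpha$ as monotonicity of the expanded distribution, and then observe that a distribution which is constant on each interval $I_k$ is monotone non-increasing if and only if the constant values drop from one block to the next. The key idea is that $\operatorname{expand}_\alpha(Q)$ is, by construction, piecewise constant on the $\mathcal{I}_\alpha$ partition: on $I_k$ its pmf takes the uniform value $Q(k)/|I_k|$. Hence within each $I_k$ monotonicity is automatic, and the only constraints come from the transitions at the boundaries between $I_k$ and $I_{k+1}$.

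Concretely, first I would invoke \autoref{fact:property:and:expanded:monotonicity} to obtain $Q\in\property_\alpha \Leftrightarrow \operatorname{expand}_\alpha(Q)\in\mathcal{M}$. Next, since $\operatorname{expand}_\alpha(Q)$ is constant on each $I_k$ with value $Q(k)/|I_k|$, I would note that $\operatorname{expand}_\alpha(Q)$ is monotone non-increasing exactly when, for every $k<\ell$, the value on $I_{k+1}$ does not exceed the value on $I_k$, i.e., $\frac{Q(k+1)}{|I_{k+1}|} \leq \frac{Q(k)}{|I_k|}$. Rearranging yields $Q(k+1) \leq \frac{|I_{k+1}|}{|I_k|}\,Q(k)$, and using the defining property $|I_k|=\lfloor(1+\alpha)^k\rfloor$ of the oblivious decomposition one has $|I_{k+1}|/|I_k|=1+\alpha$ (in the idealised floor-free regime the paper is adopting for exposition), giving exactly the stated criterion.

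The main (and only) real obstacle is the floor function mentioned in the footnote: the ratio $|I_{k+1}|/|I_k|$ is only approximately $1+\alpha$, and for the very first intervals (whose sizes are rounded down to $1$) this ratio can differ substantially from $1+\alpha$. This is why the paper signals a twofold implication rather than a strict equivalence: one direction bounds $Q(k+1)/Q(k)$ by the true ratio $|I_{k+1}|/|I_k|\le 1+2\alpha$, while the converse direction needs $Q(k+1)/Q(k)\le |I_{k+1}|/|I_k|$, for which a sufficient clean condition is $Q(k+1)\le (1+\alpha/2)Q(k)$. For the intended application, however, the exact constant is immaterial, and one simply remarks that the characterization holds up to multiplicative slack of $1\pm O(\alpha)$ in the ratio, absorbed into the final parameters. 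Beyond this bookkeeping, the proof is a direct two-line unpacking of definitions.
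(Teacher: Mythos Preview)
Your proposal is correct and is precisely the natural argument; the paper itself does not give a proof of this fact (it is stated without proof, the footnote merely flagging the floor issue you identified). Your unpacking via \autoref{fact:property:and:expanded:monotonicity} and the ratio $|I_{k+1}|/|I_k|$ is exactly what the statement is implicitly relying on.
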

\begin{remark}\label{remark:property:and:expanded:monotonicity}
  It follows from~\autoref{fact:property:and:expanded:monotonicity} that, for $D$ over $[n]$, 
    \begin{equation}
      \Dflat\alpha\in\mathcal{M}  \Leftrightarrow \Dred_\alpha\in\property_\alpha
    \end{equation}
\end{remark}
We shall also use the following result on flat distributions (adapted from \cite[Lemma 7]{BKR:04}):  
  \begin{fact}\label{fact:flat:close:monotone:iff:close:flat:monotone}
    $\Dflat{\alpha}$ is \eps-close to monotone if and only if it is \eps-close to a $\mathcal{I}_\alpha$-flat monotone distribution (that is, a monotone distribution piecewise constant, according to the same partition $\mathcal{I}_\alpha$).
  \end{fact}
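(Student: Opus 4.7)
The statement is an equivalence between $\Dflat{\alpha}$ being close to \emph{some} monotone distribution versus being close to a monotone distribution that is additionally $\mathcal{I}_\alpha$-flat. The ``if'' direction is immediate, so the work lies in the ``only if'' direction: I need to take the closeness witness $M$ (an arbitrary monotone distribution at distance $\leq\eps$ from $\Dflat{\alpha}$) and convert it into an $\mathcal{I}_\alpha$-flat monotone witness at distance no larger than $\eps$.

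The natural candidate is the Birg\'e flattening of $M$ itself, i.e.\ $M' \eqdef \birge[M]{\alpha}$. First, I would verify $M'$ has the required structure: it is $\mathcal{I}_\alpha$-flat by definition, and it is monotone non-increasing because averaging a monotone non-increasing function over the intervals $I_1,I_2,\dots$ (which are consecutive and ordered from left to right) yields a non-increasing sequence of averages~--~since the maximum value on $I_{k+1}$ is bounded by the minimum value on $I_k$ when $M$ is non-increasing.

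Next, I would bound $\totalvardist{\Dflat{\alpha}}{M'}$. The key observation is that since $\Dflat{\alpha}$ is already $\mathcal{I}_\alpha$-flat, applying the Birg\'e flattening to it is idempotent:
\begin{equation*}
\birge[\Dflat{\alpha}]{\alpha} = \Dflat{\alpha},
\end{equation*}
because $\Dflat{\alpha}(I_k) = \D(I_k)$ and so spreading this mass uniformly on $I_k$ reproduces $\Dflat{\alpha}$. Combining this with~\autoref{fact:birge:shrinking} applied to $P=M$ and $\D=\Dflat{\alpha}$ gives
\begin{equation*}
\totalvardist{\Dflat{\alpha}}{M'} = \totalvardist{\birge[\Dflat{\alpha}]{\alpha}}{\birge[M]{\alpha}} \leq \totalvardist{\Dflat{\alpha}}{M} \leq \eps,
\end{equation*}
which concludes the only-if direction.

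There is no real obstacle here; the proof is essentially a two-line application of the contraction property (\autoref{fact:birge:shrinking}) together with the idempotence of the Birg\'e flattening on flat distributions. The only subtlety worth stating carefully is the claim that flattening preserves monotonicity, which I would justify in one sentence using the interval-averaging argument above.
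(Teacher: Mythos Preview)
Your proposal is correct and takes essentially the same approach as the paper: flatten the monotone witness $M$ to $\birge[M]{\alpha}$ and argue the distance does not increase. The only difference is cosmetic~--~you invoke \autoref{fact:birge:shrinking} together with the idempotence $\birge[\Dflat{\alpha}]{\alpha}=\Dflat{\alpha}$, whereas the paper spells out the underlying one-line triangle-inequality computation explicitly (and leaves the monotonicity of $\birge[M]{\alpha}$ implicit, which you rightly make explicit).
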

  \begin{proof}
    The sufficient condition is trivial; for the necessary one, assume $\Dflat{\gamma}$ is \eps-close to monotone, and let $Q$ be a monotone distribution proving it. We show that \mbox{$\totalvardist{\Dflat{\gamma}}{\birge[Q]{\gamma}} \leq \eps$}:
    \begin{align*}
      2\totalvardist{\Dflat{\gamma}}{\birge[Q]{\gamma}} &= \sum_{k=1}^\ell\abs{\Dflat{\gamma}(I_k) - \birge[Q]{\gamma}(I_k)}
      = \sum_{k=1}^\ell\abs{\Dflat{\gamma}(I_k) - Q(I_k)} \\
      &= \sum_{k=1}^\ell\abs{\sum_{i\in I_k} \left( \Dflat{\gamma}(i) - Q(i) \right) }
      \leq \sum_{k=1}^\ell \sum_{i\in I_k} \abs{ \Dflat{\gamma}(i) - Q(i) } \\
      &= \sum_{i=1}^n \abs{ \Dflat{\gamma}(i) - Q(i) } = 2\totalvardist{\Dflat{\gamma}}{Q} \\
      &\leq 2\eps.
    \end{align*}
  \end{proof}
  
\noindent Observe that~\autoref{fact:flat:close:monotone:iff:close:flat:monotone},~\autoref{remark:property:and:expanded:monotonicity} and~\autoref{fact:property:and:expanded:monotonicity} altogether imply that, for $\mathcal{I}_\alpha$-flat distributions, distance to monotonicity and distance to $\property_\alpha$ of the reduced distribution are equal.

\subsubsection{Efficient approximation of distance to \texorpdfstring{$\Dflat{}$}{Dflat}}
\begin{lemma}\label{lemma:cond:estimate:distance:flattening}
Given \COND access to a distribution $\D$ over $[n]$, there exists an algorithm that, on input $\alpha$ and $\eps,\delta\in(0,1]$, makes $\tildeO{\frac{1}{\eps^{22}}\log\frac{1}{\delta}}$ queries (independent of $\alpha$) and outputs a value $\hat{d}$ such that, with probability at least $1-\delta$, $\abs{\hat{d} - \totalvardist{\D}{\Dflat{\alpha}}} \leq \eps$.
\end{lemma}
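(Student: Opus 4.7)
The plan is to rewrite the target distance as an expectation over intervals of the Birg\'e partition, and then estimate that expectation by sampling. Unpacking the definition of $\Dflat{\alpha}$ yields the identity
\begin{equation*}
\totalvardist{\D}{\Dflat{\alpha}} \;=\; \sum_{k=1}^\ell \D(I_k)\cdot \totalvardist{\D_{I_k}}{\uniformOn{I_k}} \;=\; \expect{\totalvardist{\D_{I_K}}{\uniformOn{I_K}}},
\end{equation*}
where $K\sim \Dred_\alpha$, i.e.\ $K$ is the index of the Birg\'e interval hit by a single $\SAMP$ sample from $\D$. It therefore suffices to estimate the mean of a $[0,1]$-valued function of a random interval, using that (a) sampling $K\sim \Dred_\alpha$ is simulated by one $\COND$ query on $[n]$, and (b) for any fixed $k$, $\totalvardist{\D_{I_k}}{\uniformOn{I_k}}$ can be estimated to additive accuracy $\eps/2$ by the tolerant tester of \autoref{lem:cond:estim}, since conditioning on $I_k$ turns $\COND$ queries on $\D$ into $\COND$ (and a fortiori $\PCOND$) queries on $\D_{I_k}$.

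The algorithm is then: draw $T=\bigTheta{\log(1/\delta)/\eps^2}$ samples from $\D$, record the indices $k_1,\dots,k_T$ of the Birg\'e intervals they land in (assigning $\hat d_{k_t} \eqdef 0$ for free whenever $I_{k_t}$ is a singleton, since the conditional distribution is then tautologically uniform), and for every non-trivial $k_t$ invoke \autoref{lem:cond:estim} on $I_{k_t}$ with accuracy $\eps/2$ and confidence $\delta/(2T)$ to obtain $\hat d_{k_t}$; finally, return the empirical mean $\hat d \eqdef \frac{1}{T}\sum_{t=1}^T \hat d_{k_t}$.

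For correctness, a union bound over the $T$ invocations ensures that with probability at least $1-\delta/2$ every $\hat d_{k_t}$ is within $\eps/2$ of the true value $\totalvardist{\D_{I_{k_t}}}{\uniformOn{I_{k_t}}}$. Conditioned on this event, the $\hat d_{k_t}$ are i.i.d.\ $[0,1]$-bounded with common expectation within $\eps/2$ of $\totalvardist{\D}{\Dflat{\alpha}}$, so Hoeffding's inequality on $T$ such variables gives $\abs{\hat d - \totalvardist{\D}{\Dflat{\alpha}}} \leq \eps$ with probability at least $1-\delta/2$; combining the two failure events yields overall confidence $1-\delta$. The total query cost is $T + T\cdot \tildeO{\eps^{-20}\log(T/\delta)} = \tildeO{\eps^{-22}\log(1/\delta)}$, with polylogarithmic factors in $1/\eps$ absorbed by $\tildeO{\cdot}$.

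The central point, and the main potential obstacle, is that this bound is \emph{independent of $\alpha$}, and therefore of $n$, despite the Birg\'e partition having $\ell=\bigTheta{\log n/\alpha}$ intervals. A naive per-interval scheme that estimated $\totalvardist{\D_{I_k}}{\uniformOn{I_k}}$ separately for each $k$ would incur a $\log n$ factor. The expectation-and-sampling formulation circumvents this by visiting only $\bigO{1/\eps^2}$ intervals, drawn proportionally to their weight under $\D$: light intervals contribute negligibly to both the true expectation and the empirical average and are correspondingly rarely sampled, whereas the intervals that actually drive $\totalvardist{\D}{\Dflat{\alpha}}$ are, by definition, sampled often enough to be accurately handled by the tolerant uniformity tester. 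The $\alpha$-dependence survives only in the purely combinatorial bucketing of samples into the fixed partition $\mathcal{I}_\alpha$, which requires no additional oracle access.
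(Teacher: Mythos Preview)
Your proof is essentially the paper's own argument: rewrite $\totalvardist{\D}{\Dflat{\alpha}}$ as $\shortexpect_{K\sim\Dred_\alpha}[\totalvardist{\D_{I_K}}{\uniformOn{I_K}}]$, sample $K$ via $\SAMP_\D$, and estimate each term with the tolerant uniformity estimator of \autoref{lem:cond:estim}. One minor discrepancy: by folding $\delta$ directly into both the number of outer samples $T=\Theta(\eps^{-2}\log(1/\delta))$ and the inner confidence $\delta/(2T)$, your query count is actually $\tildeO{\eps^{-22}}\cdot\log^2(1/\delta)$, not $\log(1/\delta)$; the paper avoids this by proving the constant-confidence version and then boosting via the median trick, and your phrasing ``the $\hat d_{k_t}$ are i.i.d.\ after conditioning'' should more carefully apply Hoeffding to the true values $\totalvardist{\D_{I_{k_t}}}{\uniformOn{I_{k_t}}}$ rather than to the estimates.
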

\begin{proof}
We describe such algorithm for a constant probability of success; boosting the success probability to $1-\delta$ at the price of a multiplicative $\log\frac{1}{\delta}$ factor can then be achieved by standard techniques (repetition, and taking the median value). Let $\D$, \eps and $\mathcal{I}_{\alpha}$ be defined as before; define $Z$ to be a random variable taking values in $[0,1]$, such that, for $k\in[\ell]$, $Z$ is equal to $\totalvardist{\D_{I_k}}{\uniform_{I_k}}$ with probability $\omega_k=\D(I_k)$. It follows that
\begin{align}
  \shortexpect Z &= \sum_{k=1}^\ell \omega_k \totalvardist{\D_{I_k}}{\uniform_{I_k}} = \half\sum_{k=1}^\ell \omega_k \sum_{i\in I_k} \abs{\D_{I_k}(i)-\frac{1}{\abs{I_k}}} \notag \\
  &= \half\sum_{k=1}^\ell \sum_{i\in I_k} \abs{\D(i)-\frac{\D(I_k)}{\abs{I_k}}} = \half\sum_{i=1}^n \abs{\D(i)-\Dflat{\alpha}} \notag \\
  &= \totalvardist{\D}{\Dflat{\alpha}}.
\end{align}
Putting aside for now the fact that we only have (using as a subroutine the \COND algorithm from~\autoref{lem:cond:estim} to estimate the distance to uniformity) access to \emph{additive approximations} of the $\totalvardist{\D_{I_k}}{\uniform_{I_k}}$'s, one can simulate independent draws from $Z$ by taking each time a fresh sample $i\sim \D$, looking up the $k$ for which $i\in I_k$, and calling the \COND subroutine to get the corresponding value. Applying a Chernoff bound, only $\bigO{1/\eps^2}$ such draws are needed, each of them costing $\tildeO{1/\eps^{20}}$ \COND queries.

\paragraph{Dealing with approximation.} It suffices to estimate $\shortexpect Z$ within an additive $\eps/2$, which can be done with probability 9/10 by simulating $m=\bigO{1/\eps^2}$ samples from $Z$. To get each sample, for the index $k$ drawn we can call the \COND subroutine with parameters $\eps/2$ and $\delta=1/(10m)$ to obtain an estimate of $\totalvardist{\D_{I_k}}{\uniform_{I_k}}$. By a union bound we get that, with probability at least 9/10, all estimates are within an additive $\eps/2$ of the true value, incurring only a $\bigO{\log 1/\eps}$ additional factor in the overall sample complexity $\tildeO{1/\eps^{20}}$. Conditioned on this, we get that the approximate value we compute instead of $\shortexpect Z$ is off by at most $\eps/2+\eps/2=\eps$ (where the first term corresponds to the approximation of the value of $Z$ for each draw, and the second comes from the additive approximation of $\shortexpect Z$ by sampling).
\end{proof}

\subsubsection{The algorithm}

\begin{algorithm}[H]
  \begin{algorithmic}[1]
    \Require \COND access to $\D$
    \State\label{algo:cond:2:step:1} Simulating $\COND_{\Dred_{\alpha}}$, check if $\Dflat{\alpha}$ is $(\eps/4)$-close to monotone by testing $(\eps/4)$-farness (of $\Dred_{\alpha}$) to $\property_{\alpha}$; \Return \fail if not.
    \State\label{algo:cond:2:step:2} Test whether $\Dflat{\alpha}$ is $(\eps/4)$-close to $\D$ using the sampling approach discussed above; \Return \fail if not.
    \State\Return\accept
  \end{algorithmic}
  \caption{\label{algo:cond:monotonicity:algorithm}Algorithm \textsc{TestMonCond}}
\end{algorithm}
The tester is described in~\autoref{algo:cond:monotonicity:algorithm}. The second step, as argued in~\autoref{lemma:cond:estimate:distance:flattening}, uses $\tildeO{{1}/{\eps^{22}}}$ samples; we will show in~\autoref{ssec:testing:dist:to:exp:property} that \emph{efficiently testing \eps-farness to $\property_\gamma$} is also achievable with $\tildeO{{1}/{\eps^6}}$ \COND queries -- concluding the proof of~\autoref{theo:tester:monotonicity:cond}.

\paragraph{Correctness of~\autoref{algo:cond:monotonicity:algorithm}.}
Assume we can efficiently perform the two steps, and condition on their execution being correct (as each of them is run with for instance parameter $\delta=1/10$, this happens with probability at least $3/4$).
\begin{itemize}
  \item If $\D$ is monotone non-increasing, so is $\Dflat{\alpha}$; by~\autoref{remark:property:and:expanded:monotonicity}, this means that $\property_{\alpha}(\Dred_{\alpha})$ holds, and the first step passes.~\autoref{theorem:birge:obl:decomp} then ensures that $D$ and $\Dflat{\alpha}$ are $\alpha$-close, and the algorithm outputs \accept;
  \item If $\D$ is \eps-far from monotone, then either \textsf{(a)} $\Dflat{\alpha}$ is $\frac{\eps}{2}$-far from monotone or \textsf{(b)} $\totalvardist{\D}{\Dflat{\alpha}} > \frac{\eps}{2}$; if \textsf{(b)} holds, no matter how the algorithm behaves in first step, the algorithm not go further that the second step, and output \fail. Assume now that \textsf{(b)} does not hold, i.e. only \textsf{(a)} is satisfied. By putting together~\autoref{fact:flat:close:monotone:iff:close:flat:monotone},~\autoref{remark:property:and:expanded:monotonicity} and~\autoref{fact:property:and:expanded:monotonicity}, we conclude that \textsf{(a)} implies that $\Dred_{\alpha}$ is $\frac{\eps}{2}$-far from $\property_{\alpha}$, and the algorithm outputs \fail in the first step.
\end{itemize}

\subsubsection{Testing \texorpdfstring{\eps-farness to $\property_\gamma$}{eps-farness to P}}\label{ssec:testing:dist:to:exp:property}

To achieve this objective, we begin with the following lemma, which relates the distance between a distribution $Q$ and $\property_\alpha$ to the total weight of points that violate the property.
\begin{lemma}\label{lemma:distance:palpha::weight:witnesses}
Let $Q$ be a probability distribution over $[\ell]$, and $W = \setOfSuchThat{ i }{ Q(i) > (1+\alpha)Q(i-1) }$ be the set of \emph{witnesses} (points which violate the property). Then, the distance from $Q$ to the property $\property_\alpha$ is $\bigO{1/\alpha} Q(W)$.
\end{lemma}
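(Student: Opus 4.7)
My plan is to construct an explicit $Q' \in \property_\alpha$ from $Q$ by pushing probability mass \emph{backward} to smooth out each violation, and then show that the total mass added is controlled by $Q(W)/\alpha$. Concretely, I will define the ``upper envelope''
\[
  Q^+(i) \eqdef \max_{j\ge i}\frac{Q(j)}{(1+\alpha)^{j-i}}, \qquad i\in[\ell],
\]
and normalize to obtain $Q'' \eqdef Q^+/\normone{Q^+}$. The final witness distribution in $\property_\alpha$ will be $Q''$.

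\textbf{Step 1: $Q^+$ has the right properties.} Directly from the definition, $Q^+(i)\ge Q(i)$ for every $i$, and
\[
  \frac{Q^+(i+1)}{1+\alpha} = \max_{j\ge i+1}\frac{Q(j)}{(1+\alpha)^{j-i}} \le \max_{j\ge i}\frac{Q(j)}{(1+\alpha)^{j-i}} = Q^+(i),
\]
so $Q^+$ satisfies the $(1+\alpha)$-growth condition. Consequently $Q''$ lies in $\property_\alpha$, and $\normone{Q^+} - 1 = \sum_i (Q^+(i)-Q(i)) \ge 0$.

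\textbf{Step 2: Boosted blocks and charging.} The key observation is that whenever $Q^+(i) > Q(i)$, the maximum in the definition is attained at some $j^*>i$, and one checks that this forces $Q^+(i) = Q^+(i+1)/(1+\alpha)$ exactly. Group the ``boosted'' indices into maximal blocks $[a,b]$ on which $Q^+(i)>Q(i)$ for all $i\in[a,b]$. Since $Q^+(\ell)=Q(\ell)$, every block satisfies $b<\ell$, and by maximality $Q^+(b+1) = Q(b+1)$. Propagating the identity $Q^+(i) = Q^+(i+1)/(1+\alpha)$ backwards through the block gives $Q^+(i) = Q(b+1)/(1+\alpha)^{b+1-i}$ for $i\in[a,b]$, and in particular $Q(b+1)/(1+\alpha) = Q^+(b) > Q(b)$, so the right neighbor $b+1$ is a witness. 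Different blocks yield different witnesses, so summing the geometric series,
\[
  \sum_{i=a}^{b}\bigl(Q^+(i)-Q(i)\bigr) \le \sum_{i=a}^{b} \frac{Q(b+1)}{(1+\alpha)^{b+1-i}} \le \frac{Q(b+1)}{\alpha}.
\]
Adding over all blocks gives $\normone{Q^+}-1 \le Q(W)/\alpha$.

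\textbf{Step 3: From $Q^+$ to the normalized $Q''$.} Since $Q'' = Q^+/\normone{Q^+}$ with $\normone{Q^+}\ge 1$, the triangle inequality yields
\[
  \normone{Q-Q''} \le \normone{Q-Q^+} + \normone{Q^+-Q''} = (\normone{Q^+}-1) + (\normone{Q^+}-1) \le \frac{2 Q(W)}{\alpha},
\]
so $\totalvardist{Q}{Q''} \le Q(W)/\alpha = \bigO{1/\alpha}\,Q(W)$, which is exactly what is claimed.

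\textbf{Main obstacle.} The nontrivial part is the charging argument in Step~2: one must show that every boosted block is ``anchored'' on its right by a genuine witness (not merely by some index where $Q$ decreases), and that different blocks cannot share the same witness. Both facts are essentially forced by the structure of the envelope formula and the equality $Q^+(i) = Q^+(i+1)/(1+\alpha)$ on boosted indices; verifying them rigorously (together with the edge case $b=\ell$, which cannot occur) is the only real content of the proof. The renormalization step is standard but worth stating explicitly because $Q^+$ itself is not a distribution.
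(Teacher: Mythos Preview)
Your proof is correct and takes a genuinely different route from the paper's. The paper constructs the target distribution via an iterative \textsc{Fixup} procedure: it processes the witnesses one by one from left to right, and for each witness $i$ it raises the weights of the predecessors $i-1,i-2,\dots$ to $Q(i)/(1+\alpha)^d$ until this is no longer needed, then strips the corresponding excess mass from the right tail of the distribution. The total mass moved is bounded by $\sum_{i\in W} Q(i)\sum_{d\ge 1}(1+\alpha)^{-d}\le Q(W)\cdot\frac{1+\alpha}{\alpha}$. Your approach replaces this sequential fix-up by a single closed-form ``upper envelope'' $Q^+(i)=\max_{j\ge i} Q(j)/(1+\alpha)^{j-i}$, followed by a global renormalization. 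The advantage of your version is cleanliness: there is no need to argue that later iterations of the fix-up do not interfere with earlier ones, and the block-charging argument (each maximal boosted block is anchored by a distinct witness $b+1$) makes the bound on $\normone{Q^+}-1$ transparent. The paper's version, on the other hand, is closer in spirit to an algorithm and is what the authors build on directly in the subsequent~\autoref{lemma:distance:palpha::weight:tau:witnesses}, where a two-stage fix-up (first with parameter $\alpha+\tau$, then with $\alpha$) is analyzed; your envelope construction would also adapt to that setting, but the iterative phrasing makes the two stages more natural to compose.
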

\begin{proof}
  One can define the procedure $\textsc{Fixup}_\alpha$ which, given a distribution $Q$ and the corresponding $W$, acts as follows:
  \begin{algorithmic}
    \Ensure $\textsc{Fixup}_\alpha(Q)$ is a distribution satisfying $\property_\alpha$
    \State $Q^\prime \leftarrow Q$, $W^\prime \leftarrow W$
    \While{$W^\prime\neq\emptyset$}
      \State Let $i > 1$ be the smallest (leftmost) point in $W^\prime$, set $\Delta\leftarrow 0$ and $d\leftarrow 1$.
      \While{$Q(i-d) \geq \frac{Q^\prime(i)}{(1+\alpha)^d}$} \Comment{Increase the weight of the predecessors of $i$}
        \State $\Delta\leftarrow \Delta + \left(Q^\prime(i)(1+\alpha)^{-d}-Q^\prime(i-d)\right)$
        \State $Q^\prime(i-d) \leftarrow Q^\prime(i)(1+\alpha)^{-d}$
      \EndWhile
      \While{$\Delta>0$}  \Comment{Remove this weight from the rightmost points}
        \State $Q^\prime(k) \leftarrow Q^\prime(k) - \min(\Delta, Q^\prime(k))$
        \State $\Delta\leftarrow \Delta - \min(\Delta, Q^\prime(k))$
      \EndWhile
    \EndWhile
    \State\Return $Q^\prime$
  \end{algorithmic}
   If $\bar{\Delta}$ denotes the total probability weight reassigned (i.e, the sum of the $\bar{\Delta}_i$'s, where $\bar{\Delta}_i$ is the total weight reassigned for witness $i$), then we have that $2\totalvardist{Q}{\textsc{Fixup}_\alpha(Q)} = \sum_{i=1}^\ell\abs{\textsc{Fixup}_\alpha(Q)(i)-Q(i)}\leq \sum_{i\in W} 2\bar{\Delta}_i = 2\bar{\Delta}$; and since 
   \[ \bar{\Delta} \leq \sum_{i \in W} Q(i)\cdot\left( \frac{1}{1+\alpha} + \frac{1}{(1+\alpha)^2} + \dots \right) \leq Q(W)\cdot\frac{1+\alpha}{\alpha}\]
   we get that $\totalvardist{Q}{\textsc{Fixup}_\alpha(Q)}= \bigO{\frac{1}{\alpha}} Q(W)$ (and $\textsc{Fixup}_\alpha(Q)$ clearly satisfies $\property_\alpha$).
\end{proof}

\begin{remark}~\autoref{lemma:distance:palpha::weight:witnesses} implies that when $Q$ is \eps-far from having the property, it suffices to sample $\bigO{\frac{1}{\alpha\eps}}$ points according to $Q$ and compare them to their neighbors to detect a violation with high probability. Note that this last test would be easy, granted access to an \emph{exact} \EVAL oracle; for the purpose of this section, however, we can only use an approximate one. The lemma below addresses this issue, by ensuring that there will be many points ``patently'' violating the property.
\end{remark}

\begin{lemma}\label{lemma:distance:palpha::weight:tau:witnesses}
Let $Q$ be as above, and, for $\tau >0$, let $W_\tau = \setOfSuchThat{ i }{ Q(i) > (1+\alpha+\tau)Q(i-1) }$ be the set of $\tau$-\emph{witnesses} (so that $W=\bigcup_{\tau > 0} W_\tau)$. Then, the distance from $Q$ to the property $\property_\alpha$ is at most $\bigO{1/(\alpha+\tau)} Q(W_\tau)+\bigO{\tau/\alpha^2}$.
\end{lemma}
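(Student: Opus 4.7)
The plan is to re-analyze the \textsc{Fixup}$_\alpha(Q)$ procedure from the proof of \autoref{lemma:distance:palpha::weight:witnesses} more carefully, partitioning witnesses into $W_\tau$ (where the ratio overshoot is substantial) and $W \setminus W_\tau$ (where it is at most $\tau$), and then bounding their separate contributions to the total probability mass $\bar{\Delta}$ that \textsc{Fixup} moves. The point is that the coarse per-witness bound $\bar\Delta_i \leq Q(i)/\alpha$ from the previous lemma is already good enough for the $\tau$-witnesses, but wastes a factor of $\tau$ for the mild ones, and exploiting this will produce exactly the two terms in the claim.

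For $i \in W_\tau$, keep the estimate $\bar\Delta_i \leq Q(i)/\alpha$, giving $\sum_{i\in W_\tau}\bar\Delta_i \leq Q(W_\tau)/\alpha$. The elementary identity
\[
  \frac{Q(W_\tau)}{\alpha} = \frac{Q(W_\tau)}{\alpha+\tau} + \frac{\tau\,Q(W_\tau)}{\alpha(\alpha+\tau)} \leq \frac{Q(W_\tau)}{\alpha+\tau} + \frac{\tau}{\alpha^2}
\]
(using $Q(W_\tau)\leq 1$) isolates both advertised terms. For mild witnesses $i \in W\setminus W_\tau$, the stronger upper bound $Q(i) \leq (1+\alpha+\tau)Q(i-1)$ allows one to chain ratio inequalities inductively into
\[
  Q(i-d) \;\geq\; \frac{Q(i)}{(1+\alpha+\tau)(1+\alpha)^{d-1}},
\]
so that the per-depth excess \textsc{Fixup} must push back to $i-d$ becomes
\[
  \frac{Q(i)}{(1+\alpha)^d} - Q(i-d) \;\leq\; \frac{\tau\,Q(i)}{(1+\alpha)^d(1+\alpha+\tau)}.
\]
Summing the geometric series in $d\geq 1$ yields $\bar\Delta_i = O(\tau Q(i)/\alpha)$ per mild witness, and summing again over $i \in W\setminus W_\tau$ gives an aggregate of at most $O(\tau/\alpha)\cdot Q(W\setminus W_\tau) = O(\tau/\alpha)$, which is absorbed into the $O(\tau/\alpha^2)$ slack.

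The main obstacle will be verifying that the inductive lower bound on $Q(i-d)$ above continues to hold \emph{at the moment} \textsc{Fixup} arrives at $i$, despite the fact that earlier iterations on witnesses $i' < i$ will have already modified the distribution at various positions. This calls for unpacking the left-to-right ordering of the procedure: processing $i' < i$ only ever \emph{adds} mass to indices strictly below $i'$ (hence at or below $i-1$), and the rightmost-removal step only touches positions far to the right of $i$ (as long as $i$ is not itself among the last indices), so the chain of ratio inequalities survives and is only tightened by the prior steps. Once this bookkeeping is in hand, combining the two contributions yields
\[
  \totalvardist{Q}{\property_\alpha} \leq \bar\Delta \leq \bigO{\frac{Q(W_\tau)}{\alpha+\tau}} + \bigO{\frac{\tau}{\alpha^2}},
\]
as claimed.
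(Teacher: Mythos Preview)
Your argument is correct but takes a genuinely different route from the paper's. The paper proceeds by a \emph{two-stage} repair: first apply (the analogue of) \textsc{Fixup}$_{\alpha+\tau}$ to $Q$ to obtain an intermediate $Q'$ satisfying $Q'(i)\le(1+\alpha+\tau)Q'(i-1)$ everywhere, at cost $O(Q(W_\tau)/(\alpha+\tau))$ by \autoref{lemma:distance:palpha::weight:witnesses} with parameter $\alpha+\tau$; then apply \textsc{Fixup}$_\alpha$ to $Q'$. In this second stage the now-\emph{global} lower bound $Q'(i-d)\ge Q'(i)/(1+\alpha+\tau)^d$ yields a per-depth excess of at most $Q'(i)\cdot\tau d/(1+\alpha)^d$, which after summing the series $\sum_d d\,x^d$ and then summing over the starting indices gives the $O(\tau/\alpha^2)$ term. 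Your approach instead runs a single \textsc{Fixup}$_\alpha$ and splits the accounting by witness type. The key ingredient you need, and correctly flag as the main obstacle, is the invariant that right before processing a mild witness $i$ the already-processed prefix satisfies $Q'(j)\le(1+\alpha)Q'(j-1)$ for all $j<i$; this does hold by a straightforward induction on the left-to-right order, provided the mass-removal step is deferred to the end so that it cannot interfere with later iterations (your caveat ``as long as $i$ is not among the last indices'' is not quite enough as stated, but deferral fixes it cleanly). The paper's two-stage decomposition sidesteps this bookkeeping entirely, since after stage one the $(1+\alpha+\tau)$-ratio bound holds uniformly; conversely, your single-pass analysis yields a slightly sharper $O(\tau/\alpha)$ for the mild-witness contribution (versus the paper's $O(\tau/\alpha^2)$), though this gain is immaterial for the stated lemma.
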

\begin{corollary}\label{cor:distance:palpha::weight:tau:witnesses}
  Taking $\alpha=\bigTheta{\eps}$ and $\tau=\eps\alpha^2$, we get that if $Q(W_\tau)\leq\eps^2$, then $Q$ is $\bigO{\eps}$-close to $\property_\alpha$.
  \end{corollary}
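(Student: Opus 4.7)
The plan is to derive this corollary as a direct parameter substitution into the bound supplied by \autoref{lemma:distance:palpha::weight:tau:witnesses}. Since the lemma asserts that the distance from $Q$ to $\property_\alpha$ is at most
\[
    \bigO{\tfrac{1}{\alpha+\tau}} Q(W_\tau) + \bigO{\tfrac{\tau}{\alpha^2}},
\]
all that remains is to plug in the prescribed values of $\alpha$ and $\tau$ and verify that both terms are $\bigO{\eps}$. No new probabilistic or structural argument is needed: the corollary is purely a quantitative specialization.

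Concretely, I would argue as follows. With $\alpha = \bigTheta{\eps}$ and $\tau = \eps\alpha^2 = \bigTheta{\eps^3}$, the factor $\tfrac{1}{\alpha+\tau}$ is dominated by $\tfrac{1}{\alpha} = \bigO{1/\eps}$ (since $\tau \ll \alpha$ for small $\eps$). Combining this with the hypothesis $Q(W_\tau) \leq \eps^2$, the first term becomes
\[
    \bigO{\tfrac{1}{\alpha+\tau}} Q(W_\tau) \leq \bigO{\tfrac{1}{\eps}} \cdot \eps^2 = \bigO{\eps}.
\]
For the second term, the chosen value of $\tau$ was precisely calibrated so that $\tau/\alpha^2 = \eps$, yielding $\bigO{\tau/\alpha^2} = \bigO{\eps}$. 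Summing the two contributions gives the claimed $\bigO{\eps}$ distance to $\property_\alpha$.

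There is no real obstacle here; the only subtlety is checking that the choice $\tau = \eps\alpha^2$ is exactly the sweet spot that balances the two error terms (neither so large that the additive $\tau/\alpha^2$ slack dominates, nor so small that the ``missed'' witnesses between $W$ and $W_\tau$ would undermine the $\bigO{1/(\alpha+\tau)}\cdot Q(W_\tau)$ estimate). The substantive content lies in \autoref{lemma:distance:palpha::weight:tau:witnesses} itself; the corollary merely records the parameter setting that will be used by the tester in \autoref{ssec:testing:dist:to:exp:property}, where the algorithm only needs to detect whether $Q(W_\tau)$ exceeds $\eps^2$ rather than the harder task of estimating $Q(W)$ directly.
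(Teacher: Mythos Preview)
Your proposal is correct and matches the paper's approach: the corollary is stated without a separate proof and is indeed an immediate parameter substitution into \autoref{lemma:distance:palpha::weight:tau:witnesses}, exactly as you carry out.
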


\begin{proof}[Proof of~\autoref{lemma:distance:palpha::weight:tau:witnesses}]
We first apply the ``fix-up'' as defined in the proof of~\autoref{lemma:distance:palpha::weight:witnesses} to get $Q'$ such that $Q'(i) \leq (1+\alpha+\tau)Q'(i-1)$ for all $i$, at a cost of $\bigO{\frac{1}{\alpha+\tau}}Q(W_\tau)$. Next, we obtain a distribution $Q''$ satisfying $\property_\alpha$ by apply the fix-up to all $i$  such that $Q'(i) > (1+\alpha)Q'(i-1)$.
If we start from some violating $i$ (until  we reach some $k_i= i-d$ such that $Q'(k)$ does not need to be fixed since $Q''(k+1) \leq (1+\alpha)Q'(k)$), we know that before the fix-up, for each $1\leq d\leq k_i$, 
$Q'(i-d) \geq  \frac{Q'(i)}{(1+\alpha+\tau)^d}$, and now, after the fix-up $Q''(i) = Q'(i)$ and $Q''(i-d) = \frac{Q'(i)}{(1+\alpha)^d}$. 
The cost of this increase is:
\begin{equation}
  Q''(i-d) - Q'(i-d) \leq   Q'(i)\cdot \left(\frac{1}{(1+\alpha)^d}  - \frac{1}{(1+\alpha + \tau)^d}\right)\;
\end{equation}
Using the fact that 
$(1+\alpha + \tau) = (1+\alpha)(1+\tau/(1+\alpha)) < (1+\alpha)(1+\tau)$ so that $\frac{1}{1+\alpha+\tau} > \frac{1}{(1+\alpha)(1+\tau)}$,
we get
\begin{align*}
Q''(i-d) - Q'(i-d) 
  &\leq  Q'(i)\cdot\frac{1}{(1+\alpha)^d}\cdot \left(1 - \frac{1}{(1+\tau)^d}\right) \\
&= Q'(i)\cdot\frac{1}{(1+\alpha)^d}\cdot \frac{(1+\tau)^d -1}{(1+\tau)^d} \\
&\leq Q'(i)\cdot\frac{\tau d}{(1+\alpha)^d}
\end{align*}
(where the last inequality uses the fact that $(1+\tau)^d -1 = d\tau + \binom{d}{2}\tau^2 +\dots + d\tau^{d-1}+ \tau^d$
which is less than $d\tau\cdot (1+\tau)^d = d\tau + d^2\tau^2 + d\binom{d}{2}\tau^3 + \dots + d^2\tau^d + d\tau^{d+1}$).
Since, for $x\in[0,1)$
\begin{equation}
\sum_{i=1}^n i\cdot x^i \leq \sum_{i=1}^\infty i\cdot x^i = \frac{x}{(1-x)^2}\;,
\end{equation}
we get that
\begin{equation}
\sum_{1\leq d\leq k_i} (Q''(i-d)-Q'(i-d)) \leq Q'(i)\cdot \tau \sum_{d=1}^\infty d\cdot \frac{1}{(1+\alpha)^d}
   < Q'(i)\cdot\frac{\tau(1+\alpha)}{\alpha^2}\;.
\end{equation}
By summing over all $i$ from which we start the (second) ``fix-up,'' we get an increase of at most $\frac{\tau(1+\alpha)}{\alpha^2}$. By the triangle inequality, the total distance from $Q$ to $Q''$ is therefore at most
\begin{equation}
  \frac{1+\alpha+\tau}{\alpha+\tau}Q(W_\tau) + \frac{\tau(1+\alpha)}{\alpha^2}\;.
\end{equation}
\end{proof}

  By leveraging~\autoref{cor:distance:palpha::weight:tau:witnesses}, we are able to obtain efficient approximation of the distance of a distribution to the ``exponential property'':
  \begin{theorem}
    There exists a constant $0<c<1$ such that, for any $\eps>0$: if $Q$ satisfies $\property_{\alpha}$ (where $\alpha=c\eps$),then with probability at least 2/3 Algorithm \textsc{TestingExponentialProperty} returns \accept, and if $Q$ is $\bigOmega{\eps}$-far from $\property_{\alpha}$, then with probability at least 2/3 Algorithm \textsc{TestingExponentialProperty} returns \fail. The number of \PCOND queries performed by the algorithm is $\tildeO{\frac{1}{\eps^8}}$.
  \end{theorem}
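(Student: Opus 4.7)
The plan is to reduce testing $\property_\alpha$ to \emph{finding a $\tau$-witness}, exploiting the quantitative robustness provided by~\autoref{cor:distance:palpha::weight:tau:witnesses}. Set $\alpha=c\eps$ for a small absolute constant $c$ (to be fixed below) and $\tau=\eps\alpha^2=\bigTheta{\eps^3}$. The corollary tells us that any distribution $Q$ that is $\bigOmega{\eps}$-far from $\property_\alpha$ must satisfy $Q(W_\tau)>\eps^2$, so a sample of size $m=\bigTheta{1/\eps^2}$ drawn from $Q$ hits $W_\tau$ with high constant probability.

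The algorithm \textsc{TestingExponentialProperty} accordingly would proceed in three steps: \textbf{(i)} draw $m=\bigTheta{1/\eps^2}$ samples $i_1,\dots,i_m\sim Q$ (using the \SAMP oracle implicit in \PCOND); \textbf{(ii)} for each $i_j>1$, invoke the procedure \textsc{Compare} of~\autoref{lem:cond:compare} on the singletons $X=\{i_j-1\}$ and $Y=\{i_j\}$, with parameters $K=2$, $\eta=\tau/4$, and $\delta=1/(10m)$; \textbf{(iii)} return \fail if any invocation returns {\sf High} or a numerical estimate $\rho>1+\alpha+\tau/2$, and \accept otherwise. The total cost is $m$ initial samples plus $m\cdot \bigO{K\log(1/\delta)/\eta^2}=\tildeO{1/\eps^2\cdot 1/\eps^6}=\tildeO{1/\eps^8}$ \PCOND queries, matching the claim (in fact slightly tighter than the stated bound).

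For completeness, if $Q\in\property_\alpha$ then $Q(i_j)/Q(i_j-1)\leq 1+\alpha\leq K$ for every $j$, so each \textsc{Compare} call lands in case \emph{(i)} -- returning $\rho\leq(1+\eta)(1+\alpha)<1+\alpha+\tau/2$ by our choice of $\eta$ -- or in case \emph{(iii)} -- returning {\sf Low} or an even smaller $\rho$. Neither outcome triggers a reject, and a union bound over the $m$ calls gives overall acceptance with probability at least $2/3$. For soundness, the $\bigOmega{\eps}$-farness of $Q$ guarantees $Q(W_\tau)>\eps^2$, so with probability $\geq 9/10$ some $i_j\in W_\tau$; for this index the true ratio exceeds $1+\alpha+\tau$, so \textsc{Compare} either outputs {\sf High} or a value at least $(1-\eta)(1+\alpha+\tau)>1+\alpha+\tau/2$, and the algorithm rejects.

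The principal technical obstacle is calibrating \textsc{Compare}'s multiplicative accuracy $\eta$ finely enough to separate ratios of $1+\alpha$ from ratios of $1+\alpha+\tau$; because $\tau=\bigTheta{\eps^3}$ this forces $\eta=\bigTheta{\eps^3}$ and hence a $1/\eps^6$ cost per comparison, which combined with $m=1/\eps^2$ samples yields the claimed $\tildeO{1/\eps^8}$ bound. A minor edge case arises when $Q(i_j-1)=0$: case \emph{(ii)} of~\autoref{lem:cond:compare} ensures \textsc{Compare} outputs {\sf High} (or a numerical value above the threshold), which correctly triggers rejection, so this degenerate situation requires no special handling.
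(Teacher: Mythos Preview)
Your proposal is correct and follows essentially the same approach as the paper: draw $m=\Theta(1/\eps^2)$ samples, for each sample $i_j$ call \textsc{Compare} on $\{i_j-1\},\{i_j\}$ with accuracy $\eta=\Theta(\tau)=\Theta(\eps^3)$, and reject upon detecting a ratio inconsistent with $\property_\alpha$; the query count $m\cdot\tildeO{1/\eta^2}=\tildeO{1/\eps^8}$ matches. The only noteworthy difference is cosmetic: you reject when the estimated ratio $\rho\approx Q(i_j)/Q(i_j-1)$ \emph{exceeds} $1+\alpha+\tau/2$, whereas the paper phrases its threshold as $\rho<(1+\eta)/(1+\alpha+\tau)$; your direction is in fact the one consistent with the definition of $W_\tau$, and your explicit handling of the degenerate case $Q(i_j-1)=0$ is a small improvement over the paper's sketch.
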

  \begin{proof}[Proof (sketch)] The algorithm can be found in~\autoref{algo:testing:palpha}. We here prove its correctness, before turning to its sample complexity.
  \paragraph{Correctness.}
  Conditioning on the events of all calls to \textsc{Compare} returning a correct value (by a union bound, this happens with probability at least $9/10$), we have that:
  \begin{itemize}
    \item if $Q$ satisfies $\property_\alpha$, then for any sample $s_i>1$, \textsc{Compare} can only return \textsf{Low} or a value $\rho$. In the latter case, since $s_i\notin W=\emptyset$, it holds that $Q(s_i-1)\leq(1+\alpha)Q(s_i)$, and therefore $\rho \geq (1-\eta)\frac{Q(s_i)}{Q(s_i-1)} \geq \frac{1-\eta}{1+\alpha} > \frac{1+\eta}{1+\alpha+\tau}$ (where the last inequality holds because of the choice of $\eta$), and the algorithm does not reject;
    \item if however $Q$ is $\bigOmega{\eps}$-far from $\property_\alpha$,~\autoref{cor:distance:palpha::weight:tau:witnesses} ensures that with probability at least $9/10$ one of the samples will belong to $W_\tau$. For such a $s_i$, \textsc{Compare} will either return \textsf{High} (and the algorithm will reject) or a value $\rho$. In the latter case, it will be the case that $Q(s_i-1)>(1+\alpha+\tau)Q(s_i)$, and thus $\rho < (1+\eta)\frac{Q(s_i)}{Q(s_i-1)} \geq \frac{1+\eta}{1+\alpha+\tau}$, and the algorithm will reject.
  \end{itemize}
  The outcome of the algorithm will hence be correct with probability at least $3/4$.
  
  \paragraph{Sample complexity.} By choice of $\alpha$, $m=\bigTheta{1/\eps^2}$ and $\tau=\bigTheta{1/\eps^3}$;  each of the $m$ calls to \textsc{Compare} costs $\bigO{\frac{K\log 1/\delta}{\eta^2}}=\bigO{\frac{\log m}{\tau^2}}=\tildeO{\frac{1}{\eps^6}}$.
  \end{proof}

  \begin{algorithm}
  \begin{algorithmic}
    \Require $\PCOND$ access to $Q$, $\alpha\in[0,1)$ \Comment{Useful for $\alpha=\bigTheta{\eps} < 1$}
    \Ensure with probability at least $3/4$ returns \fail if $Q$ is $\bigO{\eps}$-close to $\property_\alpha$, and \accept if it satisfies $\property_\alpha$.
    \State Set $\tau\eqdef\eps\alpha^2$
    \State Draw $m\eqdef\bigTheta{\frac{1}{\eps\alpha}}$ samples $s_1,\dots,s_m$ from $Q$  \Comment{Contains an element from $W_\tau$ w.h.p.}
    \For{$i=1 \textbf{ to } m$}
      \If{$s_i \geq 2$}
        \State Call \textsc{Compare} (from~\autoref{lem:cond:compare}) on $\{s_{i}-1\}$, $\{s_i\}$ with $\eta=\frac{\tau}{2}$, $K=2$ and $\delta=\frac{1}{10m}$.
        \If{ the procedure outputs \textsf{High} }           \Return \fail
        \ElsIf{it outputs a value $\rho$} \Comment{$\frac{1-\eta}{\rho}\cdot Q(s_i) \leq Q(s_i-1) \leq \frac{1+\eta}{\rho}\cdot Q(s_i)$}
          \If{$\rho < \frac{1+\eta}{1+\alpha+\tau}$} \Return \fail \EndIf
        \EndIf
      \EndIf
    \EndFor
    \State\Return \accept
  \end{algorithmic}
    \caption{\label{algo:testing:palpha}\textsc{TestingExponentialProperty}}
  \end{algorithm}
 
\section{With \EVAL access}\label{sec:eval}
\makeatletter{}In this section, we describe a $\poly(\log n, 1/\eps)$-query tester for monotonicity in the Evaluation Query model (\EVAL), in which the testing algorithm is granted query access to the probability mass function unknown distribution -- but not the ability to sample from it.

\begin{remark}[On the relation to {$\lp[p]$}-testing for functions on the line]
We observe that the results of Berman et al. \cite{BRY:14} in testing monotonicity of functions with relation to $\lp[p]$ distances do not directly apply here. Indeed, while their work is indeed concerned with functions $f\colon[n]\to[0,1]$ to which query access is granted, two main differences prevent us from using their techniques for \EVAL access to distributions: first, the distance they consider is normalized, by a factor $n$ in the case of $\lp[1]$ distance. A straightforward application of their result would therefore imply replacing $\eps$ by $\eps^\prime = \eps/n$ in their statements, incurring a prohibitive sample complexity. Furthermore, even adapting their techniques and structural lemmata is not straightforward, as distance to monotone $[0,1]$-valued \emph{functions} is not directly related to distance to monotone distributions: specifically, the main tool leveraged in their reduction to Boolean Hamming testing (\cite[Lemma 2.1]{BRY:14}) does no longer hold for distributions.
\end{remark}

\subsection{A \texorpdfstring{$\poly(\log n, 1/\eps)$}{poly(log n, 1/eps)}-query tester for \EVAL}\label{sec:polylogneps:eval:ub}

We start by stating two results we shall use as subroutines, before stating and proving our theorem.

\begin{lemma}\label{lemma:eval:learn:monotone}
Given \EVAL access to a \emph{monotone} distribution $\D$ over $[n]$, there exists a (non-adaptive) algorithm\footnotemark that, on input $\eps$, makes $\bigO{\frac{\log n}{\eps}}$ queries and outputs a monotone distribution $\hat{\D}$ such that $\totalvardist{\hat{\D}}{\D} \leq \eps$. Furthermore, $\hat{\D}$ is an $\bigO{\frac{\log n}{\eps}}$-histogram.
\end{lemma}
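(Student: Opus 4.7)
The plan is to exploit the Birgé oblivious decomposition $\mathcal{I}_\eps = (I_1, \ldots, I_\ell)$ with $\ell = \bigTheta{(\log n)/\eps}$ and to construct the histogram $\hat{\D}$ using a single \EVAL query per interval (plus a handful more to cover the initial block). Letting $a_k = \min I_k$ and $b_k = \max I_k$, and letting $M = \tildeO{1/\eps}$ be the right endpoint of the $K_0$th interval (where by \autoref{def:birge:obl:decomp} the first $K_0 = \tildeO{1/\eps}$ intervals are of constant size), I will query $\D$ at every position in $[1, M]$ (total $\tildeO{1/\eps}$ queries) and at $b_k$ for each $K_0 < k < \ell$ (an additional $\bigO{(\log n)/\eps}$ queries). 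All query positions are fixed in advance, giving a non-adaptive algorithm with query complexity $\bigO{(\log n)/\eps}$.

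Then I will define $\hat{\D}$ piecewise: set $\hat{\D}(i) = \D(i)$ for $i \leq M$ (using the exhaustive queries), and $\hat{\D}(i) = \D(b_{k-1})$ for $i \in I_k$ with $k > K_0$ (using the query at the previous interval's right endpoint; note $b_{K_0} = M$ is already available from the initial block). The resulting function is an $\ell$-piece histogram, and it is monotone non-increasing: it agrees with the monotone $\D$ on $[1, M]$, past the initial block the value $\D(b_{k-1})$ is non-increasing in $k$, and at the boundary $\hat{\D}(M) = \D(M) = \D(b_{K_0})$ matches the value used on $I_{K_0+1}$.

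To bound $\totalvardist{\D}{\hat{\D}}$ after normalization, I will first verify that $\hat{\D}(i) \geq \D(i)$ pointwise: equality holds on the initial block, and on $I_k$ for $k > K_0$ we have $\D(b_{k-1}) \geq \D(a_k) \geq \D(i)$ by monotonicity, so $Z \eqdef \sum_i \hat{\D}(i) \geq 1$. For the matching upper bound $Z \leq 1 + \bigO{\eps}$, I will combine (i) the ratio bound $|I_k|/|I_{k-1}| \leq 1 + 2\eps$ for $k > K_0$ (per the footnote of \autoref{def:birge:obl:decomp}), and (ii) the inequality $|I_{k-1}|\D(b_{k-1}) \leq \D(I_{k-1})$, which holds because $\D(b_{k-1})$ is the minimum of $\D$ on $I_{k-1}$. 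Together these give $|I_k| \D(b_{k-1}) \leq (1+2\eps)\D(I_{k-1})$; summing over $k > K_0$ and telescoping against $\sum_j \D(I_j) = 1$ yields $Z \leq 1 + 2\eps + \D(I_{K_0})$, and finally $\D(I_{K_0}) \leq |I_{K_0}|/a_{K_0} = \bigO{\eps}$ (from $\D(a_{K_0}) \leq 1/a_{K_0}$, itself a consequence of monotonicity and $\sum_i \D(i) = 1$).

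With $Z \in [1, 1+\bigO{\eps}]$ and $\hat{\D} \geq \D$ pointwise, a short triangle-inequality computation gives $\normone{\hat{\D}/Z - \D} \leq 2(Z - 1) = \bigO{\eps}$, so the normalized $\hat{\D}/Z$ is the desired monotone $\bigO{(\log n)/\eps}$-histogram at total variation distance $\bigO{\eps}$ from $\D$ (running the procedure with $\eps/c$ for a suitable constant $c$ turns the $\bigO{\eps}$ into $\eps$ at no asymptotic cost). The main delicate point will be the treatment of the initial block: naively using the ``shift by one'' estimate $\D(b_{k-1})$ there would fail since the tail bound $\D(i) \leq 1/i$ is not yet $\bigO{\eps}$ for small $i$, so those $\tildeO{1/\eps}$ positions must be queried directly—which fits within the overall query budget.
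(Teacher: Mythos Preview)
Your approach differs from the paper's: rather than your shifted estimator $\D(b_{k-1})$ plus an exhaustively-queried initial block, the paper sets $\tilde{\D}(i)=\tfrac{1}{2}(\D(a_k)+\D(a_{k+1}))$ on each $I_k=(a_k,a_{k+1}]$, makes exactly $\ell=\bigO{(\log n)/\eps}$ queries with no special-casing, invokes Birg\'e's own analysis to conclude $\normone{\tilde{\D}-\D}\leq 2\eps$, and then projects onto the closest monotone $\ell$-histogram distribution. Your route is more self-contained (no appeal to the internals of Birg\'e's proof), at the cost of handling the discrete rounding explicitly.

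That handling, however, has a quantitative gap. You need simultaneously (i) $M=\tildeO{1/\eps}$, for the query count, and (ii) $|I_k|/|I_{k-1}|\leq 1+2\eps$ for all $k>K_0$, for the bound $Z\leq 1+\bigO{\eps}$. These are incompatible: at the jump from $|I_{k-1}|=m$ to $|I_k|=m+1$ the ratio is $1+1/m$, so (ii) requires $|I_{K_0}|\gtrsim 1/\eps$, and then $M=\sum_{j\leq K_0}|I_j|=\Theta(1/\eps^2)$ rather than $\tildeO{1/\eps}$. (The footnote of \autoref{def:birge:obl:decomp} you cite is itself imprecise here: after $\tildeO{1/\eps}$ intervals the sizes have already grown to $\Theta(1/\eps)$, not ``constant.'') If instead you keep $K_0$ small enough that all initial intervals have size $O(1)$, the excesses $(1/m)\D(I_{k-1})\lesssim \eps/m$ at the jumps $m=1,\dots,\Theta(1/\eps)$ sum to $\Theta(\eps\log(1/\eps))$, so $Z-1=\Theta(\eps\log(1/\eps))$ rather than $\bigO{\eps}$. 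Either way you end up a factor of $1/\eps$ or $\log(1/\eps)$ short of the stated $\bigO{(\log n)/\eps}$; the paper's endpoint-average estimator sidesteps the issue by deferring to Birg\'e's (continuous) analysis.
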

\footnotetext{Recall that a non-adaptive tester is an algorithm whose queries do not depend on the answers obtained from previous ones, but only on its internal randomness. Equivalently, it is a tester that can commit ``upfront'' to all the queries it will make to the oracle.}
\begin{proof}
  This follows from adapting the proof of \autoref{theorem:birge:obl:decomp} as follows: we consider the same oblivious partition of $[n]$ in $\ell=\bigO{\log n/\eps}$ intervals, but instead of taking as in \eqref{eq:birge:def:dflat} the weight of a point $i\in I_k$ to be the average $\D(I_k)/\abs{I_k}$, we consider the average of the \emph{endpoints} of $I_k=(a_k, a_{k+1}]$:
  \[
    \forall k\in [\ell], \forall i\in I_k,\quad \tilde{\D}_{\eps}(i) = \frac{\D(a_k)+\D(a_{k+1})}{2}\;.
  \]
Clearly, this hypothesis can be (exactly) computed by making $\ell$ \EVAL queries. The result directly follows from observing that, in the proof of his theorem, Birg\'e first upperbounds $\normone{\birge[\D]{\eps} - \D}$ by $\normone{\tilde{\D}_{\eps}-\D}$, before showing the latter -- which is the quantity we are interested in -- is at most 2\eps (see \cite[Eq. (2.4)--(2.5)]{Birge:87}). \new{The last step to be taken care of is the fact that $\tilde{\D}_{\eps}$, as defined, might not be a distribution -- i.e., it may not sum to one. But as $\tilde{\D}_{\eps}$ is fully known, it is possible to efficiently (and without taking any additional sample) compute the $\ell$-histogram monotone \emph{distribution} $\widehat{\D}_{\eps}$ which is closest to it. We are guaranteed that $\widehat{\D}_{\eps}$ will be at most $4\eps$-far from $\tilde{\D}_{\eps}$ in $\lp[1]$ distance, as there exists one particular distribution, namely $\birge[\D]{\eps}$, that is (being at a distance at most $2\eps$ of $\D$ as well). Therefore, overall $\widehat{\D}_{\eps}$ is a monotone distribution that is at most $6\eps$-far from $\D$ in $\lp[1]$ distance, i.e. $\totalvardist{\D}{\widehat{\D}_{\eps}} \leq 3\eps$.}
\end{proof}

\todonote[inline, color=cyan!30]{Could we make this variant of Birg\'e also robust? Not straightforward -- for a start, we cannot argue that $\totalvardist{\widehat{P}_{\alpha}}{\widehat{\D}_{\alpha}} \leq \eps$, as we had for $\totalvardist{\birge[P]{\alpha}}{\birge[\D]{\alpha}}$ in \autoref{coro:birge:decomposition:robust}. And indeed the approach is very brittle -- e.g., if a lot of error is concentrated on the endpoints of the intervals. However, if we could, the approach here would give tolerant testing.}

\begin{theorem}[Tolerant identity testing ({\cite[Remark 3 and Corollary 1]{CR:14}})]\label{lemma:estimate:tolerant:identity}
Given \EVAL access to a distribution $\D$ over $[n]$, there exists a (non-adaptive) algorithm that, on input $\eps,\delta\in(0,1]$ and the full specification of a distribution $\D^\ast$, makes $\bigO{\frac{1}{\eps^2}\log\frac{1}{\delta}}$ queries and outputs a value $\hat{d}$ such that, with probability at least $1-\delta$, $\abs{\hat{d} - \totalvardist{\D}{\D^\ast}} \leq \eps$.
\end{theorem}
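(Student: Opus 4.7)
The plan is to reduce the task to averaging a $[0,1]$-valued random variable that can be evaluated with a single \EVAL query per sample, using the asymmetric formulation of total variation distance. Concretely, since $\sum_i (\D(i) - \D^\ast(i)) = 0$, for $B \eqdef \{i : \D(i) < \D^\ast(i)\}$ one has
\[
\totalvardist{\D}{\D^\ast} = \sum_{i \in B} (\D^\ast(i) - \D(i)) = \sum_{i=1}^n \D^\ast(i) \cdot W(i),
\]
where $W(i) \eqdef \max\!\left(0,\, 1 - {\D(i)}/{\D^\ast(i)}\right)$ (with the convention that $W(i) \eqdef 0$ on atoms with $\D^\ast(i)=0$). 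The key observation is that $W(i)\in[0,1]$ deterministically, and that $\expect{W(i)} = \totalvardist{\D}{\D^\ast}$ when the index $i$ is drawn from $\D^\ast$; the rest of the argument is then a routine concentration bound.

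From there the algorithm is immediate: draw $m \eqdef \bigO{(1/\eps^2)\log(1/\delta)}$ i.i.d.\ samples $i_1,\dots,i_m$ from $\D^\ast$ (this costs no oracle calls, since $\D^\ast$ is fully specified and available to the algorithm in closed form), issue the $m$ queries $\D(i_1),\dots,\D(i_m)$ to the \EVAL oracle, and return the empirical mean $\hat{d} \eqdef \frac{1}{m}\sum_{j=1}^m W(i_j)$. Since each $W(i_j)$ lies in $[0,1]$, a direct application of Hoeffding's inequality yields $\probaOf{\abs{\hat{d} - \totalvardist{\D}{\D^\ast}} > \eps} \leq 2 e^{-2 m \eps^2} \leq \delta$ for the claimed choice of $m$, matching the query complexity in the statement. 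Non-adaptivity is free, since the query points $i_1,\dots,i_m$ are chosen from $\D^\ast$ alone, before any response from the oracle is observed.

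The only technicality worth a moment of care is the treatment of atoms with $\D^\ast(i)=0$: these are never sampled, so they never enter the estimator, yet any mass $\D$ places on them still contributes to $\totalvardist{\D}{\D^\ast}$. This is precisely why I would work with the asymmetric identity above (the sum over $B\subseteq \supp{\D^\ast}$) rather than with the symmetric $\frac{1}{2}\sum_i \abs{\D(i) - \D^\ast(i)}$: the former is automatically supported on $\supp{\D^\ast}$ and requires no preprocessing, truncation, or knowledge of the support of $\D$. I expect no real obstacle beyond this observation; the hard work is in finding the $[0,1]$-valued unbiased estimator, after which Hoeffding closes the argument.
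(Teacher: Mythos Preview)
Your argument is correct. The paper does not give its own proof of this statement: it is quoted verbatim as a black box from~\cite{CR:14} (specifically, Remark~3 and Corollary~1 there), so there is nothing in the present paper to compare against. Your reduction to a $[0,1]$-valued unbiased estimator $W(i)=\max(0,1-\D(i)/\D^\ast(i))$ under sampling from the \emph{known} $\D^\ast$, followed by Hoeffding, is exactly the natural approach and matches the spirit of the cited result; the asymmetric formulation you chose cleanly sidesteps the $\D^\ast(i)=0$ issue, and non-adaptivity is immediate since the query points depend only on $\D^\ast$.
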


\begin{theorem}\label{theo:tester:monotonicity:eval}
There exists an $\bigO{\max\mleft( \frac{\log n}{\eps}, \frac{1}{\eps^2}\mright)}$-query tester for monotonicity in the \EVAL model.
\end{theorem}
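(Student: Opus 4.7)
The plan is to combine the two subroutines stated just above in the most direct way possible: learn a monotone hypothesis, then tolerantly test whether $\D$ is close to it. The crucial observation making this simple reduction work is that the learner of \autoref{lemma:eval:learn:monotone} always outputs a monotone histogram \emph{by construction} (namely, the closest monotone $\ell$-histogram on Birg\'e's oblivious partition); this holds on every input, regardless of whether $\D$ itself is monotone. Thus the output of the learner is always a legitimate ``anchor'' for the second step to compare $\D$ against.

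Concretely, I would first invoke \autoref{lemma:eval:learn:monotone} with accuracy parameter $\eps/4$, obtaining a monotone $O(\log n / \eps)$-histogram $\hat{\D}$ at a cost of $\bigO{\log n / \eps}$ evaluation queries. Since $\hat{\D}$ is an explicit histogram with polylogarithmically many pieces, its full specification is available for the next step. I would then invoke \autoref{lemma:estimate:tolerant:identity} on $\hat{\D}$, with accuracy parameter $\eps/8$ and failure probability $1/10$, producing an estimate $\hat{d}$ of $\totalvardist{\D}{\hat{\D}}$ at a cost of $\bigO{1/\eps^2}$ further queries. The tester would accept iff $\hat{d} \leq \eps/2$.

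For correctness, condition (via a union bound) on both subroutines succeeding. If $\D$ is monotone, \autoref{lemma:eval:learn:monotone} guarantees $\totalvardist{\D}{\hat{\D}} \leq \eps/4$, so $\hat{d} \leq \eps/4 + \eps/8 < \eps/2$ and the algorithm accepts; if $\D$ is $\eps$-far from monotone, then since $\hat{\D}$ is itself monotone we must have $\totalvardist{\D}{\hat{\D}} \geq \eps$, so $\hat{d} \geq \eps - \eps/8 > \eps/2$ and the algorithm rejects. Both subroutines being non-adaptive, the combined tester inherits non-adaptivity, which matches the asterisk attached to the bound in \autoref{table:summary:results}. The total query complexity is $\bigO{\log n / \eps} + \bigO{1/\eps^2} = \bigO{\max(\log n / \eps, 1/\eps^2)}$. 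There is no real obstacle: the only point to check is that $\hat{\D}$ is a valid monotone distribution on arbitrary input (not merely on monotone input), which is immediate from the construction in \autoref{lemma:eval:learn:monotone} — the ``projection'' step producing $\widehat{\D}_{\eps}$ is applied regardless of the nature of $\D$.
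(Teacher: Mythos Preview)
Your proposal is correct and follows essentially the same two-step scheme as the paper: learn a candidate via \autoref{lemma:eval:learn:monotone}, then estimate the distance to it via \autoref{lemma:estimate:tolerant:identity}. The one difference is that the paper inserts an intermediate offline check that $\widehat{\D}$ is $(\eps/4)$-close to monotone before proceeding; you instead observe that the projection step in the proof of \autoref{lemma:eval:learn:monotone} always returns a monotone histogram regardless of the input, which lets you dispense with that check and argue rejection directly from $\totalvardist{\D}{\hat{\D}}\geq\eps$ in the far case. This is a legitimate (and slightly cleaner) shortcut, though note it relies on reading the \emph{proof} of \autoref{lemma:eval:learn:monotone} rather than just its statement, since the lemma is only stated under the hypothesis that $\D$ is monotone.
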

\begin{proof}
Calling \autoref{lemma:eval:learn:monotone} with accuracy parameter $\eps/4$ enables us to learn a histogram $\widehat{\D}$ guaranteed, if $\D$ is monotone, to be $(\eps/4)$-close to $\D$; this by  making $\ell\eqdef\bigO{\frac{\log n}{\eps}}$ queries. Using \autoref{lemma:estimate:tolerant:identity}, we can also get with $\bigO{1/\eps^2}$ queries an estimate $\hat{d}$ of $\totalvardist{\D}{\widehat{\D}}$, accurate up to an additive $\eps/4$ with probability at least $2/3$. Combining the two, we get, with probability at least $2/3$,
\begin{enumerate}[(i)]
  \item $\widehat{\D}$, $(\eps/4)$-close to $\D$ if $\D$ is monotone;
  \item\label{item:eval:estimate} $\hat{d}\in [\totalvardist{\D}{\widehat{\D}} - \eps/4, \totalvardist{\D}{\widehat{\D}} + \eps/4]$.
\end{enumerate}\medskip

\noindent We can now describe the testing algorithm:
\begin{algorithm}[H]
  \begin{algorithmic}[1]
    \Require \EVAL access to $\D$
    \State\label{algo:eval:step:0} Set $\alpha\eqdef\eps/4$, and compute $\mathcal{I}_\alpha$.
    \State\label{algo:eval:step:1} Get a candidate approximation of $\D$ and test it for monotonicity, by:
      \begin{enumerate}[\sf(a)]
        \item Applying \autoref{lemma:eval:learn:monotone} with parameter $\alpha$ to obtain $\widehat{\D}$, histogram on $\mathcal{I}_\alpha$;
        \item\label{algo:eval:step:1:lp} Checking (offline) whether $\widehat{\D}$ is $(\eps/4)$-close to monotone; \Return \fail if not.       \end{enumerate}
    \State\label{algo:eval:step:2} Get an estimate $\hat{d}$ of $\totalvardist{\D}{\widehat{\D}}$ up to additive $\eps/4$, as per \ref{item:eval:estimate}; \Return \fail if $\hat{d} > \eps/2$.
    \State\Return\accept
  \end{algorithmic}
  \caption{\label{algo:eval:monotonicity:algorithm}Algorithm \textsc{TestMonEval}}
\end{algorithm}

To argue correctness, it suffices to observe that, conditioning on the estimate being as accurate as required (which happens with probability at least $2/3$):
\begin{itemize}
  \item if $\D\in\mathcal{M}$, then $\widehat{\D}\in\mathcal{M}$ as well and we pass the first step. We also know by \autoref{lemma:eval:learn:monotone} that in this case $\totalvardist{\D}{\widehat{\D}}\leq \alpha$, so that our estimate satisfies $\hat{d}\leq \alpha+\eps/4 = \eps/2$. Therefore, the algorithm does not reject here either, and eventually outputs \accept.
  \item conversely, if the algorithm outputs \accept, then we have both that \textsf{(a)} the distance of $\widehat{\D}$ to $\mathcal{M}$ is at most $\eps/4$, and \textsf{(b)} $\totalvardist{\D}{\widehat{\D}}\leq \hat{d} + \eps/4 \leq 3\eps/4$; so overall $\totalvardist{\D}{\mathcal{M}}\leq \eps$.
\end{itemize}

As for the query complexity, it is straightforward from the setting of $\alpha=\bigTheta{\eps}$ and the foregoing discussion (recall that Step~\ref{algo:eval:step:1:lp} can be performed efficiently, e.g. via linear programming (\cite[Lemma 8]{BKR:04})).
\end{proof}

\subsection{An \texorpdfstring{$\bigOmega{\log n}$}{Omega(log n)} (non-adaptive) lower bound for \EVAL}\label{sec:logn:eval:lb}
In this section, we show that, when focusing on \emph{non-adaptive} testers, \autoref{theo:tester:monotonicity:eval} is tight (note that the tester described in the previous section is, indeed, non-adaptive).
\begin{theorem}\label{theo:tester:monotonicity:eval:lb}
For any $\eps\in(0,1/2)$, any non-adaptive \eps-tester for monotonicity in the \EVAL model must perform $\frac{1}{4}\frac{\log n}{\eps}$ queries.
\end{theorem}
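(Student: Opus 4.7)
The approach is Yao's minimax principle: it is enough to exhibit a probability distribution $\mathcal{Y}$ supported on monotone distributions on $[n]$ and a probability distribution $\mathcal{N}$ supported on distributions \eps-far from monotone such that no \emph{deterministic} non-adaptive algorithm making fewer than $\frac{1}{4}\frac{\log n}{\eps}$ queries can distinguish a draw from $\mathcal{Y}$ from a draw from $\mathcal{N}$ with success probability at least $2/3$.

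Both families will be built around a base ``Birg\'e staircase.'' Concretely, fix a scale $\alpha=\bigTheta{\eps}$, let $\mathcal{I}_\alpha=(I_1,\dots,I_L)$ be the corresponding oblivious partition (\autoref{def:birge:obl:decomp}) with $L=\bigTheta{\log n /\eps}$, and let $\D^{\star}$ be the monotone piecewise-constant distribution assigning equal weight $1/L$ to each $I_k$. Take $\mathcal{Y}$ to be the point mass at $\D^{\star}$, and $\mathcal{N}$ to be the uniform mixture over a family $\{\D^{\mathrm{no}}_k\}_{k=1}^{T}$ of $T=\bigTheta{\log n /\eps}$ local perturbations of $\D^{\star}$. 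Each $\D^{\mathrm{no}}_k$ is obtained by shifting mass slightly greater than $\eps$ between a small sub-interval at the right end of $I_k$ and an adjacent sub-interval at the left end of $I_{k+1}$; a pool-adjacent-violators computation on the staircase shows that this produces a local monotonicity violation whose magnitude exceeds $\eps$ against \emph{every} monotone distribution, hence $\D^{\mathrm{no}}_k$ is genuinely \eps-far from the class $\mathcal{M}$ of monotone distributions (not only from $\D^{\star}$).

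The key combinatorial observation is that the signature set $R_k = \{\,i : \D^{\mathrm{no}}_k(i) \neq \D^{\star}(i)\,\}$ is entirely contained in $I_k \cup I_{k+1}$, so every $i\in[n]$ belongs to at most two of the $R_k$'s. For any fixed non-adaptive query set $S$, a union bound then gives
\[
    \proba_{K \sim [T]}\bigl[\, R_K \cap S \neq \emptyset \,\bigr] \;\leq\; \frac{2\,|S|}{T}.
\]
When $R_K \cap S = \emptyset$ the tester sees exactly the tuple $\D^{\star}(S)$ that it would see under $\mathcal{Y}$, and hence outputs the same answer. Consequently, a deterministic tester that is correct on $\mathcal{Y}$ with probability at least $2/3$ can reject under $\mathcal{N}$ with probability at most $2|S|/T$; demanding that this be at least $2/3$ forces $|S| \geq T/3$, which, after calibrating $T$ and $\alpha$ so that $T \geq \frac{3}{4}\cdot\frac{\log n}{\eps}$, yields the claimed lower bound $|S| \geq \frac{1}{4}\frac{\log n}{\eps}$.

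The main obstacle is the calibration of the perturbation. The shift defining $\D^{\mathrm{no}}_k$ must simultaneously (i) be small enough to leave $R_k$ inside $I_k \cup I_{k+1}$ (so as to preserve the multiplicity bound of $2$); (ii) be large enough that the resulting violation cannot be absorbed by any nearby monotone distribution, which requires it to dominate the intrinsic PAV threshold $\bigTheta{\alpha^2/\log n}$ carried by the staircase; and (iii) fit within the available weight $1/L$ of each Birg\'e interval, so as to make room for $T=\bigTheta{\log n/\eps}$ distinct perturbations. Tuning the scale $\alpha$ and the shift amount to balance (i)--(iii)---possibly distributing perturbations across several scales in the regime where a single-scale shift would be too coarse---is the delicate step of the argument; the rest of the proof is a direct Yao-style bookkeeping around the union bound above.
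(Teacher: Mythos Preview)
There is a genuine gap, and it lies precisely in the step you flag as ``delicate.'' The three constraints (i)--(iii) are in fact mutually inconsistent, and no calibration can save the construction as stated.

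Here is the obstruction. In your setup $\mathcal{Y}$ is a point mass at the fixed staircase $\D^\star$, and each no-instance $\D^{\mathrm{no}}_k$ agrees with $\D^\star$ outside $R_k\subseteq I_k\cup I_{k+1}$. Since both are probability distributions and coincide on $[n]\setminus R_k$, we have $\D^{\mathrm{no}}_k(R_k)=\D^\star(R_k)\leq \D^\star(I_k\cup I_{k+1}) = 2/L$. But $\D^\star$ is itself monotone, so
\[
\totalvardist{\D^{\mathrm{no}}_k}{\mathcal{M}} \;\leq\; \totalvardist{\D^{\mathrm{no}}_k}{\D^\star} \;\leq\; \D^\star(R_k) \;\leq\; \frac{2}{L} \;=\; \Theta\!\left(\frac{\eps}{\log n}\right)\;\ll\;\eps.
\]
Thus $\D^{\mathrm{no}}_k$ is never $\eps$-far from monotone, regardless of how the PAV analysis goes: the nearby monotone witness is simply $\D^\star$ itself. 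The phrase ``shifting mass slightly greater than $\eps$'' is impossible when the total mass available in $I_k\cup I_{k+1}$ is $\Theta(\eps/\log n)$. Spreading the perturbation ``across several scales'' does not help either: if the signature set $R_k$ grows to contain $\Theta(L)$ Birg\'e intervals so as to carry $\Theta(\eps)$ mass, then a single query point lies in $\Theta(L)$ of the $R_k$'s and your union bound collapses to the trivial $|S|\geq\Omega(1)$.

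The paper sidesteps this by letting the \emph{yes}-instance vary as well. It chooses a random scale $m$ from a geometric grid of size $\Theta(\log n/\eps)$ and sets $\D_1$ to be uniform on $\{1,\dots,(2+\kappa\eps)m\}$, while $\D_2$ is a four-piece histogram on the same support with a bump of total mass $2\eps$ in the middle. Because the per-point density is now $\Theta(1/m)$ rather than $\Theta(1/(Lm))$, a window of width $\Theta(\eps m)$ carries $\Theta(\eps)$ mass; the distributions differ only on that window, and each fixed query point falls in it for at most one value of $m$. The lower bound then follows by the same union-bound arithmetic you describe. The essential idea you are missing is that a \emph{fixed} monotone background cannot accommodate $\Theta(\log n/\eps)$ disjoint $\Theta(\eps)$-mass regions; randomizing the background solves this.
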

\makeatletter{}\begin{proof}\ We hereafter assume without loss of generality that $n/3$ is a power of two; and shall define a distribution over pairs of distributions, $\distrD$, such that the following holds. A random pair of distributions $(\D_1,\D_2)$ drawn from $\distrD$ will have $\D_1$ monotone, but $\D_2$ ${\eps}$-far from monotone. Yet, no non-adaptive deterministic testing algorithm can distinguish with probability $2/3$ (over the draw of the distributions) between $\D_1$ and $\D_2$, unless it performs $c\log n$ \EVAL queries. By Yao's minimax principle, this will guarantee that any non-adaptive randomized tester must perform at least $c\log n$ queries in the worst case.\medskip

More specifically, a pair of distributions $(\D_1,\D_2)$ is generated as follows. A parameter $m$ is chosen uniformly at random in the set 
\[
M\eqdef \mleft\{ \frac{2}{\kappa\eps}, \frac{2}{\kappa\eps}(1+\kappa), \dots, \frac{2}{\kappa\eps}(1+\kappa\eps)^k,\dots, \frac{n}{3} \mright\}\;,
\]
where $\kappa\eqdef\frac{4}{1-2\eps}$. $\D_1$ is then set to be the uniform distribution on $\{1,\dots,(2+\kappa\eps)m\}$; as for $\D_2$, it is defined as the histogram putting weight:
\begin{itemize}
  \item  $\frac{1}{2}-\eps$ on $\{1,\dots, m\}$;
  \item 0 on $I_m\eqdef \{m+1,\dots, \flr{(1+\frac{\kappa\eps}{2})m}\}$;
  \item $2\eps$ on  $J_m\eqdef \{\flr{(1+\frac{\kappa\eps}{2})m}+1,\dots, \flr{(1+\kappa\eps)m}\}$;
  \item and $\frac{1}{2}-\eps$ on $\{\flr{(1+\kappa\eps)m}+1,\dots, \flr{(2+\kappa\eps)m}\}$.
\end{itemize}
It is not hard to see that $\D_1$ is indeed monotone, and that the distance of $\D_2$ from monotone is exactly $\eps$.

\begin{figure}[!ht]\centering
  \begin{tikzpicture}[x=4pt, y=2pt, scale=0.7]
  \pgfmathsetmacro{\xsupport}{60}
  \pgfmathsetmacro{\xmax}{100}
  \pgfmathsetmacro{\ymax}{100}
  
    \draw[ultra thin, help lines, dotted] (0,0) grid (\xmax,\ymax);
  \draw [<->] (0,\ymax) node[above] {$D_j(i)$} -- (0,0) -- (\xmax,0)  node[right] {$i$};
    
  \draw[thin]  (0,{\ymax/3}) -- ({\xsupport/3},{\ymax/3}) -- ({\xsupport/3},0);
  \draw[thin]  ({\xsupport/2},0) -- ({\xsupport/2},{2*\ymax/3}) -- ({2*\xsupport/3},{2*\ymax/3}) -- ({2*\xsupport/3},{\ymax/3});
  \draw[dashed]  ({\xsupport/3},{\ymax/3}) -- ({2*\xsupport/3},{\ymax/3});
  \draw[thin]  ({2*\xsupport/3},{\ymax/3}) -- ({\xsupport},{\ymax/3})-- (\xsupport,0);
  
  \node[below] at ({\xsupport/3},0) {\scriptsize $\vphantom{\frac{3}{2}}m$};
  \node[below] at ({2*\xsupport/3},0) {\scriptsize $\vphantom{\frac{3}{2}}{(1+\kappa\eps)m}$};
  \node[below] at ({\xsupport},0) {\scriptsize $\vphantom{\frac{3}{2}}{(2+\kappa\eps)m}$};
  
  \foreach \i in {1,...,3}
     		\draw ({\i*\xsupport/3},1pt) -- ({\i*\xsupport/3},-3pt);

  \end{tikzpicture}\caption{\label{fig:eval:nonadapt:lb}Construction of $\D_1$ (dotted) and $\D_2$.}
\end{figure}
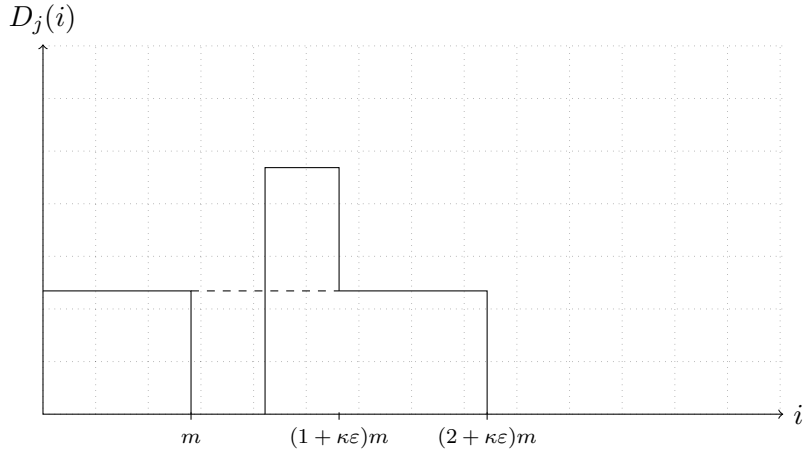

The key of the argument is to observe that if too few queries are made, then with high probability over the choice of $m$ no queries will hit the interval $I_m\cup J_m$; and that conditioning on this, what the tester sees in the \yes- and \no-cases is indistinguishable.

\begin{claim}
Let  \Tester be a deterministic, non-adaptive algorithm making $q \leq \frac{\log n}{4\eps}$ queries to the \EVAL oracle. Then, the probability (over the choice of $m$) that a query hits $I_m\cup J_m$ is less than $1/3$.
\end{claim}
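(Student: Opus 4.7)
The plan is to use a union bound together with the crucial geometric structure of the set $M$. The key observation is that consecutive elements of $M$ differ by a multiplicative factor $(1+\kappa\eps)$, while for a fixed query point $i$ the set of values $m$ for which $i \in I_m \cup J_m$ is exactly those satisfying $m < i \leq \flr{(1+\kappa\eps)m}$, i.e., $m$ lying in a range whose ratio of endpoints is (at most) $(1+\kappa\eps)$. Because the elements of $M$ form a geometric progression with ratio $(1+\kappa\eps)$, this range contains at most one element of $M$.

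With this in hand, I would proceed in three short steps. First, fix the deterministic non-adaptive query set $Q = \{i_1,\dots,i_q\}$ (this is allowed because \Tester is non-adaptive). Second, by the observation above, for each $i_j \in Q$ there is at most one $m \in M$ such that $i_j \in I_m \cup J_m$; therefore the set $B \subseteq M$ of "bad" values $m$ (those for which some query falls into $I_m \cup J_m$) satisfies $|B| \leq q$. Third, since $m$ is drawn uniformly from $M$, the probability that the queries hit $I_m \cup J_m$ is at most $|B|/|M| \leq q/|M|$, and it only remains to lower bound $|M|$.

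The size of $M$ is $1 + \flr{\log_{1+\kappa\eps}(n\kappa\eps/6)}$; using $\ln(1+x) \leq x$, this is at least $\frac{\ln(n\kappa\eps/6)}{\kappa\eps}$. Since $\kappa\eps = \frac{4\eps}{1-2\eps}$ with $\eps < 1/2$, a direct (but mildly careful) calculation shows that for all sufficiently large $n$ we have $|M| \geq \frac{3}{4}\cdot\frac{\log n}{\eps}$, hence $q/|M| \leq \frac{\log n/(4\eps)}{3\log n/(4\eps)} = 1/3$, with strict inequality after accounting for the $-\log(6/(\kappa\eps))$ lower-order term.

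The main (minor) obstacle is just the arithmetic of relating $\log_{1+\kappa\eps}$ to the clean bound $\log n/(4\eps)$ stated in the claim; the conceptual content — the "geometric spacing beats interval length" argument in step one — is immediate. Everything else is a union bound over queries and Yao's principle to convert this into a worst-case lower bound for randomized non-adaptive testers.
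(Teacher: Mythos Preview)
Your approach is correct and essentially identical to the paper's: the paper also argues that any fixed point $x$ can land in $I_m\cup J_m$ for at most one choice of $m\in M$ (exactly your geometric-spacing observation), then applies a union bound over the $q$ queries to get probability at most $q/\abs{M}<1/3$ for $n$ large enough. The paper phrases the second step as ``Markov's inequality'' and is just as loose as you are about the final constant arithmetic, so your write-up matches both the method and the level of rigor of the original.
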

\begin{proof}
  This follows from observing that the probability that any fixed point $x\in[n]$ belongs to $I_m\cup J_m$ is at most $\frac{1}{\abs{M}}=(1+\littleO{1})\frac{\eps}{\log n}$, as this can only happen for at most one value of $m$ (among $\abs{M}$ equiprobable choices). By Markov's inequality, this implies that the probability of any query falling into $I_m$ is at most $\frac{q}{\abs{M}} < \frac{1}{3}$ (for $n$ big enough).
\end{proof}
To see why the claims directly yields the theorem, observe that the above implies that for any such algorithm, $\abs{ \probaDistrOf{\D_1}{\Tester^{\D_1}=\yes} - \probaDistrOf{\D_2}{\Tester^{\D_2}=\yes} } < \frac{1}{3}$. But then \Tester cannot be a successful monotonicity tester, as otherwise it would accept $\D_1$ with probability at least $2/3$, and $\D_2$ with probability at most $1/3$.
\end{proof}

\subsection{An \texorpdfstring{$\tildeOmega{\log n}$}{Omega(log n/log log n)} (adaptive) lower bound for \EVAL}\label{sec:tlogn:eval:lb}
While the above lower bound is tight, it only applies to non-adaptive testers; and it is natural to ask whether allowing adaptivity enables one to bypass this impossibility result -- and, maybe, to get constant-query testers. The following result shows that it is not the case: even for constant $\eps$, adaptive testers must also make (almost) logarithmically many queries in the worst case.
\begin{theorem}\label{theo:tester:monotonicity:eval:lb:adaptive}
There exist absolute constants $\eps_0>0$ and $c > 0$ such that the following holds. Any $\eps_0$-tester for monotonicity in the \EVAL model must perform $c\frac{\log n}{\log\log n}$ queries. (Furthermore, one can take $\eps_0=1/2$.)
\end{theorem}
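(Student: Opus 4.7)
The plan is once again to apply Yao's minimax principle and exhibit a distribution $\distrD$ over pairs $(\D_1,\D_2)$ on which every deterministic adaptive tester must fail. The major hurdle in extending \autoref{theo:tester:monotonicity:eval:lb} to the adaptive setting is that the non-adaptive hard distribution leaks substantial information about the hidden parameter $m$ through each query: querying any point $x$ reveals whether $x\leq (2+\kappa\eps)m$ (via the answer being nonzero or zero), which lets an adaptive tester binary-search for $m$ in $O(\log |M|)$ queries and then query $J_m$ to distinguish the two cases. To circumvent this, I would redesign the family so that (i) the hidden parameter $m$ ranges over a set $M$ of size $n^{\Omega(1)}$, giving entropy $\Omega(\log n)$; and (ii) each \EVAL response takes at most $(\log n)^{O(1)}$ distinct values, so a single query reveals at most $O(\log\log n)$ bits about $m$.

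Concretely, I plan to use a multi-scale hole-and-bump family analogous to the non-adaptive construction, but with $K=\Theta(\log n/\log\log n)$ independent perturbations at different geometric scales of $[n]$. Here $m$ is indexed by $K$ independent coordinates, each chosen uniformly from $\Theta(\log n)$ candidate positions within its scale, giving $|M|=n^{\Omega(1)}$. Both $\D_1$ and $\D_2$ share a common, $m$-independent support equal to all of $[n]$ (so that no support-boundary query gives information) and a fixed monotone ``background'' outside the perturbations; the $K$ local perturbations have identical shape in $\D_1$ and $\D_2$ but opposite orientation (monotone vs.\ non-monotone), and the macroscopic $(1/2)$-farness of $\D_2$ from $\mathcal{M}$ is obtained from a single global mass transfer layered on top of them, designed to vanish outside the witness regions so that it leaks no additional information.

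The central analytical step, and the main obstacle, is the adaptive information-theoretic accounting. I would follow $\Tester$ along any root-to-leaf branch of its adaptive decision tree, maintaining the posterior distribution over $m$ given the observed transcript, and prove that: (a) each answer takes values in a set of size $(\log n)^{O(1)}$ depending only on the query and on coarse features of $m$; (b) hence the Shannon entropy of the posterior drops by at most $O(\log\log n)$ bits per query; and (c) in order to reliably distinguish the $\yes$-marginal from the $\no$-marginal (which coincide outside the witness regions parameterized by $m$), the tester must effectively localize at least one coordinate of $m$, requiring $\Omega(\log n)$ bits of posterior information. Chaining (b) and (c) yields the desired lower bound $q=\Omega(\log n/\log\log n)$. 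The subtle point in proving (b) is that the posterior after a transcript need not remain a product distribution, so one must argue that the coordinates of $m$ that have not yet been ``touched'' by any hitting query still carry nearly their full prior entropy, which I would handle by showing that a single non-hitting answer only rules out a $(\log n)^{O(1)}$-fraction of the coordinate's candidates.

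Finally, to secure the explicit constant $\eps_0=1/2$, I would verify by direct computation that the global mass shift used to define the $\no$-side---moving exactly half of the total probability past the midpoint of $[n]$---places $\D_2$ at total variation distance exactly $1/2$ from its best monotone approximation, which, up to the common symmetric perturbations shared with $\D_1$, is itself monotone by construction.
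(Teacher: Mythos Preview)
Your proposal takes a very different route from the paper, and as written it contains a real gap in the accounting.

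The paper does not build a multi-scale hole-and-bump family at all. Instead it reduces adaptive \EVAL monotonicity testing to the following promise problem: given query access to a non-increasing sequence $a_1\ge\cdots\ge a_n\ge 0$, decide whether $\sum_k a_k=1$ or $\sum_k a_k\le 1/2$. A construction of Har-Peled then gives the $\Omega(\log n/\log\log n)$ lower bound for that problem: one partitions $[n]$ into $L=\Theta(\log n/\log\log n)$ blocks of sizes $L,L^2,\dots,L^L$, each contributing weight $\Theta(1/L)$; the \no-instance boosts one uniformly random block by a factor $L$. Any algorithm must probe $\Omega(L)$ blocks to find the boosted one. The reduction back to monotonicity simply hides the (possible) missing mass $1/2$ at a uniformly random point of $[n]$. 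No entropy bookkeeping is needed, and the whole argument is a page.

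The gap in your plan is step~(c). You assert that distinguishing the \yes{} and \no{} marginals forces the tester to ``localize at least one coordinate of $m$, requiring $\Omega(\log n)$ bits.'' But in your construction the \yes{} and \no{} instances agree outside the witness regions, so the tester succeeds as soon as it lands in \emph{any one} perturbation; localizing a single coordinate of $m$ costs only $\log(\Theta(\log n))=\Theta(\log\log n)$ bits, not $\Omega(\log n)$. Plugging that into your per-query budget of $O(\log\log n)$ bits yields only $q=\Omega(1)$. The entropy-of-$m$ quantity $\Omega(\log n)$ is the cost of learning \emph{all} of $m$, which is not what the tester needs to do. In fact, if your background is truly $m$-independent (as you stipulate), then every answer prior to the first hit is deterministic, adaptivity buys nothing until that moment, and the right argument is not entropy accounting but a direct hitting-probability bound over the randomness of $m$---essentially the non-adaptive argument again. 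Whether that yields $\Omega(\log n/\log\log n)$ or something else depends entirely on construction details (scale widths, perturbation sizes, how monotonicity of $\D_1$ is maintained across $K$ scales, how the ``global mass transfer'' is confined to witness regions while still producing $1/2$-farness) that your proposal leaves unspecified.
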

\makeatletter{}\noindent Intuitively, if one attempts to design hard instances for this problem against adaptive algorithms, one has to modify a \yes-instance to get a \no-instance by ``removing'' some probability weight and ``hiding'' it somewhere else (where it will then violates monotonicity). The difficult part in doing so does \emph{not} lie in hiding that extra weight: one can always choose a random element $k\in\{n/2,..,n\}$ and add some probability to it. Arguing that any \EVAL algorithm cannot find $k$ unless it makes $\bigOmega{n}$ queries is then not difficult, as it is essentially tantamount to finding a needle in a haystack.

Thus, the key is to \emph{take} some probability weight from a subset of points of the support, in order to redistribute it. Note that this cannot this time be a local modification, as in a monotone distribution one cannot obtain $\bigOmega{1}$ weight from a constant number of points unless these are amongst the very first elements of the domain; and such case is easy to detect with $\bigO{1}$ queries. Equivalently, we want to describe how to obtain two non-negative monotone sequences that are hard to distinguish, one summing to one (i.e., already being a probability distribution) and the other having sum bounded away from one (the slack giving us some ``weight to redistribute''). To achieve this, we will rely on the following result due to Sariel Har-Peled \cite{HarPeled:CS:Overflow:15}, whose proof is reproduced below:\footnote{For more results on approximating the discrete integral of sequences, including an upper bound for monotone sequences reminiscent of Birg\'e's oblivious decomposition, one may consult \cite{HarPeled:06}.}
\begin{proposition}\label{prop:sariel:sum:monotone:lb}
Given query access to a sequence of non-negative numbers $a_n \geq \dots\geq a_1$ and $\eps \in (0,1)$, along with the promise that either $\sum_{k=1}^n a_k = 1$ or $\sum_{k=1}^n a_k \leq 1-\eps$, any (possibly adaptive) randomized algorithm that distinguishes between the two cases with probability at least $2/3$ must make $\bigOmega{\frac{\log n}{\log\log n}}$ queries in the worst case. (Moreover, the result even holds for $\eps=1/2$).
\end{proposition}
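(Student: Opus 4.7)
The plan is to invoke Yao's minimax principle and reduce to constructing a pair of distributions $\mathcal{Y}, \mathcal{N}$ over non-decreasing non-negative sequences such that no deterministic $q$-query algorithm can distinguish $\mathcal{Y}$ (supported on sum-$1$ sequences) from $\mathcal{N}$ (supported on sequences of sum at most $1/2$) with probability $2/3$ unless $q = \bigOmega{\log n / \log\log n}$. The construction must respect the monotonicity constraint, which by the paragraph preceding the proposition rules out local ``hole'' perturbations of the yes-sequence, forcing the deficit to be spread out across many scales.

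To this end, I would partition $[n]$ into $\ell = \bigTheta{\log n / \log\log n}$ intervals $I_1,\dots,I_\ell$ of geometrically shrinking sizes with ratio $b = \log n$ (so that $b^\ell \asymp n$), and take the yes-sequence $a^\star$ to be constant $v_i = 1/(\ell\abs{I_i})$ on each $I_i$; this sequence is non-decreasing, each $I_i$ carries mass $1/\ell$, and the total is $1$. The no-family is parametrized by tuples $j = (j_1,\dots,j_\ell) \in [b]^\ell$, yielding $b^\ell = \bigTheta{n}$ instances: for each coordinate $j_d$ one selects one of $b$ pre-designed ``drop patterns'' on $I_d$, each of which replaces $v_d$ by some smaller value $v'_{d,j_d} \in [v_{d-1}, v_d)$ on a prefix of $I_d$ and leaves $v_d$ on the tail. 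Monotonicity of every $a^{(j)}$ is preserved because the prefix values lie in $[v_{d-1}, v_d]$ and the tail values equal $v_d \le v_{d+1}$. The drop magnitudes and prefix lengths are calibrated so that every tuple $j$ produces mass loss at least $1/2$, which requires each level's contribution to be $\bigTheta{1/\ell}$ uniformly in $j_d$, not merely on average.

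With such a family in hand, at any fixed query position $k \in I_d$ the value $a^{(j)}_k$ is determined by $j_d$ alone and takes at most $b+1 = \bigO{\log n}$ distinct values as $j$ ranges over the no-family (plus the yes-value). An adaptive $q$-query algorithm therefore induces a decision tree with branching at most $\bigO{\log n}$ per node and at most $b^q$ leaves total, and a standard Yao-style decision-tree argument converts the $\bigTheta{n}$ no-hypotheses into the lower bound $q = \bigOmega{\log_b n} = \bigOmega{\log n / \log\log n}$. The main obstacle will lie precisely in designing the drop patterns so that three requirements are satisfied simultaneously: (i) monotonicity of every $a^{(j)}$, which couples drops across adjacent intervals since the allowed range $[v_{d-1},v_d)$ shrinks relative to $v_d$ for large $d$; (ii) a per-tuple (not just average) mass deficit of at least $1/2$, forcing the level contributions to add up robustly over all $j$; and (iii) the per-position value-multiplicity bound of $\bigO{\log n}$, which is what ultimately converts the $n$ hypotheses into the $\log n/\log\log n$ query bound. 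Balancing these against the geometric shrinkage $\abs{I_d} \propto b^{1-d}$, under which successive intervals carry the same mass $1/\ell$ despite decreasing in size, is the technical heart of the construction and accounts for the $\log\log n$ factor in the denominator.
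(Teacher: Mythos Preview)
Your construction has a fatal flaw. You require that \emph{every} tuple $j$ incur a mass deficit of at least $1/2$, and you (correctly) observe that this forces every drop pattern on every block $I_d$ to remove $\Theta(1/\ell)$ mass. But the total mass of $I_d$ in the yes-sequence is exactly $1/\ell$, and the only way to remove a constant fraction of it while preserving monotonicity is to take a prefix of length $\Theta(\abs{I_d})$ down to a value close to $v_{d-1}=v_d/b$. In particular, the prefix is never empty, so the \emph{leftmost} point of $I_d$ carries value $v'_{d,j_d}<v_d$ in \emph{every} no-instance and value $v_d$ in the yes-instance. One query at that point distinguishes the two cases with probability~$1$. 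The counting argument you invoke (``$b^q$ leaves versus $\Theta(n)$ hypotheses'') bounds the number of queries needed to \emph{identify} the no-instance, not to separate yes from no; for the latter what matters is the statistical distance between the yes-transcript and the no-transcript distributions, and in your family that distance is~$1$ after a single well-chosen query.

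The paper's argument is both simpler and structurally different. It uses only $L=\Theta(\log n/\log\log n)$ blocks (of sizes $L,L^2,\dots,L^L$), a \emph{single} base sequence on one side, and on the other side $L$ sequences, each obtained by raising the values in \emph{exactly one} randomly chosen block by a factor $L$ (which is exactly the slack monotonicity allows between consecutive blocks). Outside the modified block the two sides are identical, so an algorithm that makes fewer than $cL$ queries misses the modified block with probability at least $1-c$, and conditioned on missing it the transcripts coincide. There is no need for $\Theta(n)$ hypotheses or a branching-factor argument: the $L$ blocks themselves are the $L$ ``coins'' the algorithm must inspect, and that is already the desired $\Omega(\log n/\log\log n)$.
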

\begin{proof}
Let $(a_k)_{k\in[n]}$ be a sequence defined as follows: we partition the sequence into $L$ blocks. In the $i$-th block there are going to be $n_i$ elements (i.e., $\sum_i n_i = n$).
Set the $i$-th block size to be $n_i = L^i$, where $L \eqdef \bigTheta{ \log n / \log \log n }$ is the number of blocks. Let $\beta\eqdef(2L-1)/({2L})$ be a normalizing factor; an element in the $i$-th block has value $\alpha_i = \frac{\beta}{2Ln_i}$, so that the total sum of the values in the sequence is $\beta/2 < 1/2$.

From $(a_k)_{k\in[n]}$, we obtain another sequence $(b_k)_{k\in[n]}$ by picking uniformly at random an arbitrary block, say the $j$-th one, and set all values in its block to be $\alpha_{j-1} = L \alpha_j$ (instead of $\alpha_j$). This increases the contribution of the $j$-th block from $\beta/2L$ to $\beta/2$, and increase the total sum of the sequence to $\beta(1-\frac{1}{2L}) = 1$. Furthermore, it is straightforward to see that both sequences are indeed non-decreasing.

Informally, the idea is that to distinguish $(a_k)$ from $(b_k)$, any randomized algorithm must check the value in each one of the blocks. As such, it must read at least $\bigOmega{L}$ values of the sequence. To make the above argument more formal, with probability $p=1/2$, give the original sequence of sum $1$ as the input (we refer to this as original input). Otherwise, randomly select the block that has the increased values (modified input). Clearly, if the randomized algorithm reads less than, say, $L/8$ entries, it has probability (roughly) $1/8$ to detect a modified input.  As such, the probability this algorithm fails, if it reads less than $L/8$ entries, is at least $(1-p)(7/8) > 7/16 > 1/3$.
\end{proof}
\begin{proofof}{\autoref{theo:tester:monotonicity:eval:lb:adaptive}}
To get \autoref{theo:tester:monotonicity:eval:lb:adaptive} from \autoref{prop:sariel:sum:monotone:lb}, we define a reduction in the obvious way:  any \EVAL monotonicity tester $\Tester$ can be used to solve the promise problem above by first choosing uniformly at random an element $k$ in $\{2,\dots,n\}$, and then answering any query $j\in[n]\setminus\{k\}$ from $\Algo$ by returning the value $a_j$. (This indeed defines a probability distribution that is either monotone (if $\sum_k a_k = 1$) or far from it  (if $\sum_k a_k = 1/2$): $k$ is the index where the -- possibly -- extra weight $1/2$ would have been ``hidden,'' in a \no-instance; and is therefore the only query point we cannot answer.) Conditioning on $k$ not being queried (which occurs with probability $1-O(1/n)$ given the random choice of $k$, it is straightforward to see that outputting the value returned by \Tester yields the correct answer with probability $2/3$. From the above, any such \Tester must therefore have query complexity $\bigOmega{\frac{\log n}{\log\log n}}$.
\end{proofof}

\paragraph*{Open question.} It is worth noting that a different construction, also due to \cite{HarPeled:CS:Overflow:15}\ignore{ (with this time $\log n$ consecutive blocks of size increasing by a factor 2)}, yields a different lower bound of $\bigOmega{1/\eps}$ for the promise problem of \autoref{prop:sariel:sum:monotone:lb}. Combining the two (and applying the same reduction as above), we obtain a lower bound of $\bigOmega{\max(\log n/\log\log n, 1/\eps)}$ for testing monotonicity in the \EVAL model. However, we do conjecture the right dependence on $n$ to be logarithmic; more specifically, the author believe the above upper bound to be tight:
\begin{conjecture}\label{conjecture:monotonicity:eval}
Monotonicity testing in the \EVAL model has query complexity $\bigOmega{\frac{\log{n}}{\eps}}$.
\end{conjecture}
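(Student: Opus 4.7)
We propose to prove Conjecture~\ref{conjecture:monotonicity:eval} by combining the construction underlying the non-adaptive $\bigOmega{\log n/\eps}$ lower bound (\autoref{theo:tester:monotonicity:eval:lb}) with the adversarial-randomization argument of Har-Peled (used in \autoref{prop:sariel:sum:monotone:lb}). The goal is to design a distribution over (monotone, far-from-monotone) instance pairs $(\D_1,\D_2^{(m)})$ parametrized by a hidden $m \in M$ with $\abs{M}=\Theta(\log n/\eps)$, such that \textsf{(a)} $\D_1$ is monotone and \emph{does not depend on $m$}, \textsf{(b)} each $\D_2^{(m)}$ is exactly $\eps$-far from monotone, and \textsf{(c)} for uniformly random $m$ every fixed \EVAL query yields identical answers in the two cases except with probability $\bigO{\eps/\log n}$. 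By a union bound over the (adaptive) queries and Yao's minimax principle, any tester succeeding with probability $2/3$ must then use $\bigOmega{\log n/\eps}$ queries.

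Concretely, the plan is as follows. First, choose $M=\{m_k\}$ to be a geometric progression with ratio $1+c\eps$ spanning $[\bigTheta{1/\eps},n]$, so that the ``anomaly windows'' $W_m\eqdef [m,(1+c\eps)m]$ are pairwise disjoint, each of width $\bigTheta{\eps m}$, and cover an $\bigO{\eps}$-fraction of $[n]$. Second, construct a single fixed monotone $\D^*$ on $[n]$ and, for each $m \in M$, a local modification $\D_2^{(m)}$ supported on a change that lies entirely inside $W_m$ (so $\D_2^{(m)}(x)=\D^*(x)$ for every $x\notin W_m$), is mass-preserving, and whose correction requires moving $\Omega(\eps)$ probability mass --- so that $\totalvardist{\D_2^{(m)}}{\mathcal{M}}\geq \eps$. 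Third, observe that for any $x\in[n]$, at most $\bigO{1}$ windows contain $x$, hence $\probaDistrOf{m \sim \uniform(M)}{x \in W_m}=\bigO{1/\abs{M}}=\bigO{\eps/\log n}$; therefore over $Q$ queries the transcript TV distance between the \yes{} oracle ($\D^*$) and the \no{} oracle ($\D_2^{(m)}$) is at most $Q\cdot\bigO{\eps/\log n}$, forcing $Q=\bigOmega{\log n/\eps}$.

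The main obstacle lies in Step~2: constructing a single monotone $\D^*$ that simultaneously accommodates local $\eps$-far-from-monotone modifications at \emph{all} $\abs{M}=\Theta(\log n/\eps)$ candidate positions. A uniform $\D^*(x)=1/n$ fails because the density modification required at position $m$ is $\bigTheta{\eps/\abs{W_m}}=\bigTheta{1/m}$, which exceeds $1/n$ for small $m$ (one cannot create a negative ``dip''). More ambitiously, a harmonic-style $\D^*(x)\propto 1/x$ has density $\bigTheta{1/(x\log n)}$ at position $x$, still a $\log n$ factor short; and summing the ``density budget'' demanded by all $\abs{M}$ windows gives $\sum_{m\in M} \frac{\eps}{\abs{W_m}}\cdot\abs{W_m}=\abs{M}\cdot\eps=\Theta(\log n)$, exceeding the unit mass of $\D^*$. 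A further difficulty is that a purely local bump-plus-dip modification within $W_m$ is only $\eps/2$-far from $\D^*$ in $\ell_1$, hence at most $\eps/2$-far from monotone.

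We suspect both obstacles can be resolved by letting $\D^*$ be a Birg\'e-style staircase distribution whose \emph{drops} are aligned with the positions in $M$: a small bump placed immediately after a drop ``reuses'' the density differential created by the drop and cannot be smoothed without redistributing mass across the step, forcing a true $\eps$-far violation. A more speculative alternative is to randomize $\D_1$ itself over an ensemble of monotone distributions so that even queries falling inside $W_m$ do not reveal whether the oracle is in the \yes{} or \no{} case --- moving the argument from indistinguishable \emph{responses} to indistinguishable \emph{response distributions}, in the spirit of Har-Peled's multi-block masking. Either route would require a careful hybrid argument to bound the per-query mutual information between the transcript and $m$, which we expect to be the technical heart of the proof.
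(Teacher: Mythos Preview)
This statement is labeled a \emph{conjecture} in the paper; the author explicitly presents it as open, so there is no proof in the paper to compare your attempt against. What follows is an assessment of your plan on its own merits.

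The framework of your first two paragraphs is sound and, unlike the construction behind \autoref{theo:tester:monotonicity:eval:lb}, would indeed handle adaptive testers: since $\D^\ast$ does not depend on $m$, queries outside $W_m$ leak nothing about $m$, and the standard posterior argument goes through. The trouble is that the obstacle you flag in your third paragraph is not a technicality to be engineered around but a \emph{provable barrier} to this approach. If $\D_2^{(m)}$ agrees with the monotone $\D^\ast$ outside $W_m$ and both are probability distributions, then $\D^\ast$ itself witnesses
\[
\totalvardist{\D_2^{(m)}}{\mathcal{M}} \;\leq\; \totalvardist{\D_2^{(m)}}{\D^\ast} \;=\; \tfrac12\sum_{x\in W_m}\abs{\D_2^{(m)}(x)-\D^\ast(x)} \;\leq\; \D^\ast(W_m).
\]
Requiring this to be at least $\eps$ for every $m\in M$ while the $W_m$ are pairwise disjoint forces $1 \geq \sum_{m} \D^\ast(W_m) \geq \abs{M}\cdot\eps = \Theta(\log n)$, a contradiction. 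Your Birg\'e-staircase suggestion does not escape this, because the inequality above holds for \emph{every} monotone $\D^\ast$; aligning drops with $M$ reshapes pointwise densities but cannot manufacture the $\Theta(\log n)$ units of total mass the windows collectively demand. Your second alternative --- randomizing the \yes-instance over $m$ as well --- reintroduces precisely the vulnerability that confines \autoref{theo:tester:monotonicity:eval:lb} to non-adaptive testers: once $\D_1$ carries information about $m$, an adaptive tester can use its first few answers to localize $m$ (for the uniform-on-$[1,(2+\kappa\eps)m]$ family, a single \EVAL query inside the support already determines $m$) and then probe the anomaly window directly. Any successful argument must break at least one of ``$\D_1$ independent of $m$,'' ``modification confined to disjoint windows,'' or ``$\abs{M}=\Theta(\log n/\eps)$''; your proposal does not yet indicate how, which is consistent with the problem remaining open.
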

 
\section{With \texorpdfstring{\Cdfsamp}{cumulative dual }access}\label{sec:polyeps:extended:tester}
\makeatletter{}
\begin{theorem}\label{theo:tester:monotonicity:extension:cdf}
There exists an $\tildeO{\frac{1}{\eps^4}}$-query (independent of $n$) tester for monotonicity in the \Cdfsamp model.
\end{theorem}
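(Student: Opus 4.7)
The plan mirrors the two-step strategy of \autoref{algo:cond:monotonicity:algorithm}, but exploits the fact that in the \Cdfsamp model any interval probability $\D(I) = F_\D(b) - F_\D(a-1)$ (and in particular any single-point mass $\D(i)$) can be computed \emph{exactly} with two \CDFEVAL queries. Fix $\alpha \eqdef c\eps$ for a small constant $c$, and let $\mathcal{I}_\alpha = (I_1,\dots,I_\ell)$ be the corresponding oblivious Birg\'e decomposition. As in~\autoref{sec:polyeps:cond:tester}, if $\D$ is monotone then $\totalvardist{\D}{\Dflat{\alpha}} \leq \alpha$ and $\Dflat{\alpha}$ is monotone, whereas if $\D$ is \eps-far from monotone then either $\totalvardist{\D}{\Dflat{\alpha}} > \eps/3$ or $\Dflat{\alpha}$ is $(\eps/3)$-far from monotone. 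It therefore suffices to check both conditions with accuracy $\bigTheta{\eps}$.

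\textbf{Testing $\totalvardist{\D}{\Dflat{\alpha}}$.} Using the same identity as in~\autoref{lemma:cond:estimate:distance:flattening}, one has
\begin{equation*}
  \totalvardist{\D}{\Dflat{\alpha}} \;=\; \sum_{k=1}^\ell \D(I_k)\,\totalvardist{\D_{I_k}}{\uniform_{I_k}} \;=\; \shortexpect_{k\sim\Dred_\alpha}\!\left[\totalvardist{\D_{I_k}}{\uniform_{I_k}}\right].
\end{equation*}
We estimate this expectation to additive $\bigO{\eps}$ by a Chernoff bound with $\bigO{1/\eps^2}$ samples $s\sim\D$: each sample lands in some $I_k$, and then $\totalvardist{\D_{I_k}}{\uniform_{I_k}}$ is estimated to additive $\bigO{\eps}$. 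The critical point is that \CDFEVAL access to $\D$ yields exact \EVAL access to the conditional $\D_{I_k}(i) = (F_\D(i)-F_\D(i-1))/\D(I_k)$, so we may invoke the tolerant identity tester of~\autoref{lemma:estimate:tolerant:identity} against the fully-specified distribution $\uniform_{I_k}$, at the cost of $\tildeO{1/\eps^2}$ \CDFEVAL queries per sample (the guarantee is independent of $|I_k|$). Summing over samples yields $\tildeO{1/\eps^4}$ queries for this step.

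\textbf{Testing that $\Dflat{\alpha}$ is close to monotone.} By~\autoref{remark:property:and:expanded:monotonicity} and~\autoref{fact:flat:close:monotone:iff:close:flat:monotone}, it suffices to test that $\Dred_\alpha$ is close to the exponential property $\property_\alpha$. By~\autoref{lemma:distance:palpha::weight:witnesses}, if $\Dred_\alpha$ is $\bigOmega{\eps}$-far from $\property_\alpha$ then its witness set $W = \{k : \Dred_\alpha(k) > (1+\alpha)\Dred_\alpha(k-1)\}$ has probability at least $\bigOmega{\alpha\eps} = \bigOmega{\eps^2}$. Hence drawing $\bigO{1/\eps^2}$ samples from $\D$ (which, by noting the index of the containing interval, are samples from $\Dred_\alpha$) will, with high probability, hit $W$; and for every such sample's index $k$ we can verify the inequality $\Dred_\alpha(k) \leq (1+\alpha)\Dred_\alpha(k-1)$ \emph{exactly} using four \CDFEVAL queries to compute $\D(I_k) = \Dred_\alpha(k)$ and $\D(I_{k-1}) = \Dred_\alpha(k-1)$. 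This step uses $\tildeO{1/\eps^2}$ queries.

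Combining both tests gives a $\tildeO{1/\eps^4}$-query tester, independent of $n$. The only real subtlety I anticipate is the accuracy accounting: the expectation estimate in the first step involves approximating each realization of the random variable $Z = \totalvardist{\D_{I_k}}{\uniform_{I_k}}$, so one must control both the sampling error in estimating $\shortexpect Z$ and the per-call additive error from tolerant identity testing, precisely as in the proof of~\autoref{lemma:cond:estimate:distance:flattening}. A standard union bound over the $\bigO{1/\eps^2}$ calls, with $\delta = \bigO{\eps^2}$, adds only logarithmic factors and preserves the overall $\tildeO{1/\eps^4}$ bound.
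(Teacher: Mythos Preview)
Your proposal is correct and follows essentially the same two-step plan as the paper: test that $\Dred_\alpha$ satisfies $\property_\alpha$ via~\autoref{lemma:distance:palpha::weight:witnesses} using exact interval weights from \CDFEVAL, and estimate $\totalvardist{\D}{\Dflat{\alpha}}$ via the same expectation identity $\shortexpect Z$. The only cosmetic difference is that for estimating $\totalvardist{\D_{I_k}}{\uniform_{I_k}}$ you invoke the black-box tolerant identity tester of~\autoref{lemma:estimate:tolerant:identity}, whereas the paper writes out the underlying estimator directly (sampling $j\sim\uniform_{I_k}$ and averaging $\abs{\frac{\D(j)}{\D(I_k)}\abs{I_k}-1}\indic{\cdot}$); both use $\tildeO{1/\eps^2}$ \CDFEVAL queries per interval and yield the same $\tildeO{1/\eps^4}$ total.
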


\begin{proof}
We first give the overall structure of the tester -- without surprise, very similar to the ones in \autoref{sec:polyeps:cond:tester} and \autoref{sec:polylogneps:eval:ub}:
\begin{algorithm}[H]
  \begin{algorithmic}[1]
    \Require \CDFEVAL and \SAMP access to $\D$
    \State\label{algo:cdf:2:step:0} Set $\alpha\eqdef\eps/4$, and compute $\mathcal{I}_\alpha$.
    \State\label{algo:cdf:2:step:1} Test if $\Dflat{\alpha}$ is $(\eps/4)$-close to monotone by testing $(\eps/4)$-closeness (of $\Dred_{\alpha}$) to $\property_{\alpha}$; \Return \fail if the tester rejects.
    \State\label{algo:cdf:2:step:2} Get an estimate $\hat{d}$ of $\totalvardist{\D}{\Dflat{\alpha}}$ up to additive $\eps/4$; \Return \fail if $\hat{d} > \eps/2$.
    \State\Return\accept
  \end{algorithmic}
  \caption{\label{algo:cdf:monotonicity:algorithm}Algorithm \textsc{TestMonCumulative}}
\end{algorithm}
Before diving into the actual implementation of Steps~\ref{algo:cdf:2:step:1} and \ref{algo:cdf:2:step:2}, we first argue that, conditioned on their outcome being correct, \textsc{TestMonCumulative} outputs the correct answer. The argument is almost identical as in the proof of \autoref{theo:tester:monotonicity:eval}:
\begin{itemize}
  \item if $\D\in\mathcal{M}$, then $\Dflat{\alpha}\in\mathcal{M}$ as well; by \autoref{remark:property:and:expanded:monotonicity} it follows that $\Dred_{\alpha}\in\property_{\alpha}$ and we pass the first step. We also know (\autoref{theorem:birge:obl:decomp}) that $\totalvardist{\D}{\Dflat{\alpha}}\leq \alpha$, so that our estimate satisfies $\hat{d}\leq \alpha+\eps/4 = \eps/2$. Therefore, the algorithm does not reject here either, and eventually outputs \accept.
  \item conversely, if the algorithm outputs \accept, then we have both that \textsf{(a)} the distance of $\Dflat{\alpha}$ to $\mathcal{M}$ is at most $\eps/4$, and \textsf{(b)} $\totalvardist{\D}{\Dflat{\alpha}}\leq \hat{d} + \eps/4 \leq 3\eps/4 $; so overall $\totalvardist{\D}{\mathcal{M}}\leq \eps$.
\end{itemize}
It remains to show how to perform steps \ref{algo:cdf:2:step:1} and \ref{algo:cdf:2:step:2} -- namely, testing $\Dred_{\alpha}$ for $\property_{\alpha}$ given \CDFEVAL and \SAMP access to $\D$, and approximating $\totalvardist{\D}{\Dflat{\alpha}}$.

\paragraph{Testing \texorpdfstring{$\gamma$-closeness to $\property_\alpha$}{closeness to P}}

This part is performed similarly as in \autoref{ssec:testing:dist:to:exp:property}, observing that one can easily simulate access to $\PCOND_Q$  from a $\CDFEVAL_Q$  oracle.  Indeed, \autoref{lemma:distance:palpha::weight:witnesses} implies that when $Q$ is \eps-far from having the property, it suffices to sample $\bigO{\frac{1}{\alpha\eps}}$ points according to $Q=\Dred_{\alpha}$ and compare them to their neighbors to detect a violation with probability at least $9/10$. Note that this last test is easy, as we have query access to $Q$ (recall that we have a $\CDFEVAL_\D$ oracle, and that $\Dred_{\alpha}(k)=\D(I_k)$).

\paragraph{Efficient approximation of distance to \texorpdfstring{$\Dflat{}$}{Dflat}}
Let $\D$, \eps and $\mathcal{I}_{\alpha}$ be as before; define $Z$ to be a random variable taking values in $[0,1]$, such that, for $k\in[\ell]$, $Z$ is equal to $\totalvardist{D_{I_k}}{\uniform_{I_k}}$ with probability $w_k=D(I_k)$. It follows that
\begin{align}
  \shortexpect Z &= \sum_{k=1}^\ell w_k \totalvardist{D_{I_k}}{\uniform_{I_k}} = \half\sum_{k=1}^\ell w_k \sum_{i\in I_k}^\ell \abs{D_{I_k}(i)-\frac{1}{\abs{I_k}}} \notag \\
  &= \half\sum_{k=1}^\ell \sum_{i\in I_k} \abs{D(i)-\frac{D(I_k)}{\abs{I_k}}} = \half\sum_{i=1}^n \abs{D(i)-\Dflat{\alpha}} \notag \\
  &= \totalvardist{\D}{\Dflat{\alpha}}.
\end{align}
Furthermore, one can simulate $m=\bigO{1/\eps^2}$ i.i.d. draws from $Z$ by repeating independently the following for each of them:
\begin{itemize}
  \item draw $i\sim D$ by calling $\SAMP_D$, and look up the $k$ for which $i\in I_k$;
  \item get the value $\D(I_k)$ with 2 $\CDFEVAL$ queries (note that $\D(I_k)> 0$, as we just got a sample from $I_k$);
  \item estimate $\totalvardist{D_{I_k}}{\uniform_{I_k}}$ up to $\pm\eps$ (with failure probability at most $\frac{1}{10m}$)   by drawing $\tildeO{1/\eps^2}$ uniform samples from $I_k$ and querying the values of $\D$ on them, to estimate
  \begin{align*}
    \totalvardist{D_{I_k}}{\uniform_{I_k}} &= \sum_{j\in I_k} \abs{D_{I_k}(j)-\frac{1}{\abs{I_k}}}\indic{\abs{I_k}D_{I_k}(j)< 1} = \sum_{i\in{I_k}} \abs{\frac{D(j)}{D(I_k)}\abs{I_k}-1}\cdot \frac{\indic{\abs{I_k}D_{I_k}(j)< 1}}{\abs{I_k}}  \\
    &= \shortexpect_{j\sim \uniform_{I_k}}\! \left[ \abs{  \frac{D(j)}{D(I_k)}\abs{I_k} -1}\cdot\indic{\frac{D(j)}{D(I_k)}\abs{I_k}< 1} \right]
  \end{align*}
\end{itemize}
Applying a Chernoff bound (and a union bound over all such simulated draws), performing this only $m$ times is sufficient to get an \eps-additive estimate of $\shortexpect Z$ with probability at least $9/10$, for an overall $\tildeO{1/\eps^{4}}$ number of queries (sample and evaluation).

\paragraph{Proof of \autoref{theo:tester:monotonicity:extension:cdf}}
Correctness has already been argued, provided both subroutines do not err; by a union bound, the ``good event'' that the two steps produce such an outcome happens with probability at least $4/5$.
\noindent The query complexity is the sum of $\bigO{\frac{1}{\alpha\eps}}=\bigO{\frac{1}{\eps^2}}$ (Step \ref{algo:cdf:2:step:1}) and $\tildeO{1/\eps^{4}}$ (Step \ref{algo:cdf:2:step:2}), yielding overall the claimed $\tildeO{1/\eps^{4}}$ sample complexity.
\end{proof}

\paragraph{Acknowledgments}

The bulk of \autoref{sec:polyeps:cond:tester} originates from discussions and correspondence with Dana Ron and Rocco Servedio, and owes its existence to them; similarly, the contents of \autoref{sec:polyeps:extended:tester} derive from work with Ronitt Rubinfeld. However, errors, mistakes and typos are of the author's own device, and he claims sole ownership of these.
\clearpage
 \bibliographystyle{alpha}
 \bibliography{monotonedistributions} 
 
 \clearpage
 \appendix\addtocontents{toc}{\protect\setcounter{tocdepth}{1}}
 \makeatletter{}\section{Models of distribution testing: definitions}\label{app:def:models}
In this appendix, we define precisely the notion of \emph{testing} of distributions over a domain $\domain$ (e.g., $\domain=[n]$), and the different models of access covered in this work.\medskip

Recall that a \emph{\indexed{property}} $\property$ of distributions over $\domain$ is a set consisting of all distributions that have the property. The distance from $\D$ to a property $\property$, denoted $\totalvardist{\D}{\property}$, is then defined as $\inf_{\D^\prime \in \property} \totalvardist{\D}{\property}$.
We use the standard definition of testing algorithms for properties of distributions over $\domain$, where $n$ is the relevant parameter for $\domain$ (i.e., in our case, its size $\abs{\domain}$). We chose here to phrase it in the most general setting possible with regard to how the unknown distribution is ``queried'': and will specify this aspect further in the next paragraphs (sampling access, conditional access, etc.).
\begin{definition}\label{def:testing}\index{testing algorithm}
  Let $\property$ be a property of distributions over $\domain$.  Let $\ORACLE_\D$ be an oracle providing some type of access to $\D$. A \emph{$q$-query $\ORACLE_\D$ testing algorithm for $\property$} is a randomized algorithm $\Tester$ which takes as input $n$, $\eps\in(0,1]$, as well as access to $\ORACLE_\D$.  After making at most $q(\eps,n)$ calls to the oracle, \Tester either outputs \accept or \reject, such that the following holds:
  \begin{itemize}
  \item if $\D \in \property$, \Tester outputs \accept with probability at least $2/3$;
  \item if $\totalvardist{\D}{\property} \geq \eps$, \Tester outputs \reject with probability at least $2/3$.
  \end{itemize}
\end{definition}

\subsection{The \SAMP model}
In this first and most common setting, the testers access the unknown distribution by getting independent and identically distributed (\iid) samples from it.
\begin{definition}[Standard access model (sampling)]\label{def:sampling:oracle}
    Let $\D$ be a fixed distribution over $\domain$. A \emph{sampling oracle for $\D$} is an oracle $\SAMP_\D$ defined as follows: when queried, $\SAMP_\D$ returns an element $x\in\domain$, where the probability that $x$ is returned is $\D(x)$ independently of all previous calls to the oracle.
\end{definition}
\noindent This definition immediately implies that all algorithms in this model are by essence non-adaptive: indeed, any tester or tolerant tester can be converted into a non-adaptive one, without affecting the sample complexity. (This is a direct consequence of the fact that all an adaptive algorithm can do when interacting with a \SAMP oracle is deciding to stop asking for samples, based on the ones it already got, or continue.)

\subsection{The \COND model}
\begin{definition}[Conditional access model \cite{CFGM:13,CRS:12}]\label{def:conditional:oracle}
Fix a distribution $\D$ over $\domain$.  A \emph{\COND oracle for $\D$}, denoted
$\COND_\D$, is defined as follows:
 the oracle takes as input a \emph{query set}
 $S \subseteq \domain$, chosen by the algorithm,  that has $\D(S) > 0$. The oracle returns an element $i \in S$, where
 the probability that element $i$ is returned is $\D_S(i) = \D(i)/\D(S),$
 independently of all previous calls to the oracle.
\end{definition}
Note that as described above the behavior of $\COND_\D(S)$ is undefined if $\D(S)=0$, i.e., the set $S$ has zero probability under $\D$.  Various definitional choices could be made to deal with this: e.g., Canonne et al. \cite{CRS:12} assume that in such a case the oracle (and hence the algorithm) outputs ``failure'' and terminates, while Chakraborty et al. \cite{CFGM:13} define the oracle to return in this case a sample uniformly distributed in $S$. In most situations, this distinction does not make any difference, as most algorithms can always include in their next queries a sample previously obtained; however, the former choice does rule out the possibility of \emph{non-adaptive} testers taking advantage of the additional power \COND provides over \SAMP; while such testers are part of the focus of \cite{CFGM:13}.\medskip

Testing algorithms can often only be assumed to have the ability to query sets $S$ that have some sort of ``structure,'' or are in some way ``simple.'' To capture this, one can define specific restrictions of the general $\COND$ model, which do not allow \emph{arbitrary} sets to be queried but instead enforce some constraints on the queries: \cite{CRS:12} introduces and studies two such restrictions, ``\PCOND''  and ``\ICOND.''
\begin{definition}[Restricted conditional oracles]\label{def:conditional:oracle:pcond:icond}
A \emph{\PCOND} (``pair-cond'')
\emph{oracle for $\D$} is a restricted version
of $\COND_\D$ that only accepts input sets $S$ which are either
$S=\domain$ (thus providing the power of a $\SAMP_\D$ oracle)
or $S=\{x,y\}$ for some $x,y \in \domain$, i.e. sets of size two.

In the specific case of $\domain=[n]$, an \emph{\ICOND} (``interval-cond'')
\emph{oracle for $\D$} is a restricted version
of $\COND_\D$ that only accepts input sets $S$ which are intervals
$S=[a,b]=\{a,a+1,\dots,b\}$ for some $a \leq b \in [n]$ (note that
taking $a=1,$ $b=n$ this provides the power of a $\SAMP_{\D}$ oracle).
\end{definition}

\subsection{The \EVAL, \Pdfsamp and \Cdfsamp models}

    \begin{definition}[Evaluation model \cite{RS:09}]
    Let $\D$ be a fixed distribution over $\domain$. An \emph{evaluation oracle for $\D$} is an oracle $\EVAL_\D$ defined as follows: the oracle takes as input a query element $x\in\domain$, and returns the probability weight $\D(x)$ that the distribution puts on $x$.
    \end{definition}
    \begin{definition}[\Pdfsamp access model \cite{BDKR:05,GMV:06,CR:14}]
    Let $\D$ be a fixed distribution over $\domain$. A \emph{\pdfsamp oracle for $\D$} is a pair of oracles $(\SAMP_\D, \EVAL_\D)$ defined as follows: when queried, the \emph{sampling} oracle $\SAMP_\D$ returns an element $x\in\domain$, where the probability that $x$ is returned is $\D(x)$ independently of all previous calls to any oracle; while the \emph{evaluation} oracle $\EVAL_\D$ takes as input a query element $y\in\domain$, and returns the probability weight $\D(y)$ that the distribution puts on $y$.
    \end{definition}
    \begin{definition}[\Cdfsamp access model \cite{BKR:04,CR:14}]
    Let $\D$ be a fixed distribution over $\domain=[n]$. A \emph{\cdfsamp oracle for $\D$} is a pair of oracles $(\SAMP_\D, \CDFEVAL_\D)$ defined as follows: the \emph{sampling} oracle $\SAMP_\D$ behaves as before, while the \emph{evaluation} oracle $\CDFEVAL_\D$ takes as input a query element $j\in[n]$, and returns the probability weight that the distribution puts on $[j]$, that is $\D([j])=\sum_{i=1}^j D(i)$ .
    \end{definition}  
    (Note that, as the latter requires some total ordering on the domain, it is only defined for distributions over $[n]$; as was the \ICOND oracle from \autoref{def:conditional:oracle:pcond:icond}.)
 
 \makeatletter{}\section{Tolerant testing with \Pdfsamp and \Cdfsamp access}\label{app:tolerant:extended}
In this appendix, we show how similar ideas can yield \emph{tolerant} testers for monotonicity, as long as the access model admits both an agnostic learner for monotone distributions and an efficient distance approximator. As this is the case for both the \Pdfsamp and \Cdfsamp oracles, this allows us to derive such tolerant testers with logarithmic query complexity. (Note that these results imply that, in both models, tolerant testing monotonicity of an arbitrary distribution is no harder than learning an \emph{actually monotone} distribution.)\footnote{It is however worth noting that, because of the restriction $\eps_2 > 3\eps_1$ it requires, our tolerant testing result does \emph{not} imply a distance estimation algorithm.}
\begin{theorem}\label{theo:tolerant:tester:monotonicity}
Fix any constant $\gamma > 0$. In the \pdfsamp access model, there exists an $(\eps_1, \eps_2)$-tolerant tester for monotonicity with query complexity $\bigO{\frac{\log n}{\eps_2^3}}$, provided $\eps_2 > (3+\gamma)\eps_1$.
\end{theorem}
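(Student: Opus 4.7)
The plan is to adapt the learn--verify--measure template of \autoref{sec:polyeps:extended:tester} to \Pdfsamp, with the Birgé width and the two estimation accuracies re-tuned to accommodate the tolerance gap. Fix small constants $c_1,c_2,c_3>0$ depending on $\gamma$, set $\alpha\eqdef c_1\eps_2$, and work with the oblivious decomposition $\mathcal{I}_\alpha=(I_1,\dots,I_\ell)$ of length $\ell=\bigO{\log n/\eps_2}$. The algorithm I have in mind proceeds in three steps. \textsf{(i)} Draw $m=\bigO{\ell/(c_2\eps_2)^2}=\bigO{\log n/\eps_2^3}$ \SAMP queries and build the $\ell$-flat proxy $\widehat{\D}$ obtained by spreading each empirical interval mass $\hat{w}_k$ uniformly within $I_k$; the standard $\bigO{\ell/\beta^2}$-sample rate for learning a distribution on $\ell$ elements guarantees $\totalvardist{\widehat{\D}}{\Dflat{\alpha}}\leq c_2\eps_2$ with probability at least $19/20$. \textsf{(ii)} Compute $\totalvardist{\widehat{\D}}{\mathcal{M}}$ offline via linear programming (\cite[Lemma 8]{BKR:04}), and reject if it exceeds $\tau_1\eqdef\eps_1+c_2\eps_2$. \textsf{(iii)} Invoke \autoref{lemma:estimate:tolerant:identity} with reference distribution $\widehat{\D}$ and accuracy $c_3\eps_2$ to produce, using $\bigO{1/\eps_2^2}$ \EVAL queries, an estimate $\hat{d}$ of $\totalvardist{\D}{\widehat{\D}}$; reject if $\hat{d}>\tau_2\eqdef 2\eps_1+(c_1+c_2+c_3)\eps_2$, and accept otherwise.

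For completeness, I would appeal to \autoref{coro:birge:decomposition:robust}: if $\totalvardist{\D}{\mathcal{M}}\leq\eps_1$ then $\totalvardist{\D}{\Dflat{\alpha}}\leq 2\eps_1+c_1\eps_2$ and $\totalvardist{\Dflat{\alpha}}{\mathcal{M}}\leq\eps_1$. Chaining through $\Dflat{\alpha}$ via the triangle inequality, together with the Step-\textsf{(i)} bound, places $\widehat{\D}$ within $\tau_1$ of $\mathcal{M}$ and within $2\eps_1+(c_1+c_2)\eps_2$ of $\D$, so both Steps \textsf{(ii)} and \textsf{(iii)} accept (the latter using the $c_3\eps_2$ estimation slack). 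For soundness, assume $\totalvardist{\D}{\mathcal{M}}\geq\eps_2$ and that Step \textsf{(ii)} did not reject: any monotone $Q$ with $\totalvardist{\widehat{\D}}{Q}\leq\tau_1$ gives $\totalvardist{\D}{\widehat{\D}}\geq\eps_2-\tau_1$ and hence $\hat{d}\geq\eps_2-\tau_1-c_3\eps_2$, so Step \textsf{(iii)} rejects provided $\eps_2(1-c_1-2c_2-2c_3)>3\eps_1$. Using the hypothesis $\eps_2>(3+\gamma)\eps_1$, this reduces to $c_1+2c_2+2c_3<\gamma/(3+\gamma)$, which I would enforce by taking $c_1=c_2=c_3\eqdef\gamma/(10(3+\gamma))$. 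A union bound over the randomized Steps \textsf{(i)} and \textsf{(iii)} then yields overall success probability at least $2/3$, and the query complexity is dominated by Step \textsf{(i)}.

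The main obstacle is precisely this juggling of constants: the factor $3$ in the hypothesis $\eps_2>(3+\gamma)\eps_1$ is rigid, as it corresponds exactly to the triangle-inequality blow-up $\eps_1+2\eps_1$ paid when routing the analysis through $\Dflat{\alpha}$, and each of $c_1$ (Birgé width), $c_2$ (histogram-learning accuracy) and $c_3$ (identity-estimator accuracy) directly eats into the slim margin $\gamma/(3+\gamma)$. This is exactly why the hypothesis ``$\gamma$ constant'' has to appear in the statement: the hidden constant in $\bigO{\log n/\eps_2^3}$ degrades polynomially as $\gamma\to 0$. Any attempt to collapse the gap to $\eps_2>3\eps_1$ within this template would require replacing the ``go through $\Dflat{\alpha}$'' reduction by a genuinely tolerant learner for monotone distributions in \Pdfsamp (cf.\ the inline remark following \autoref{lemma:eval:learn:monotone} about the brittleness of the endpoint-averaging variant), which seems to require new ideas.
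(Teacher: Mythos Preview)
Your proposal is correct and follows essentially the same approach as the paper's proof: learn the Birg\'e flattening from samples, check its distance to $\mathcal{M}$ offline via linear programming, and then use the tolerant identity tester of \autoref{lemma:estimate:tolerant:identity} to bound $\totalvardist{\D}{\widehat{\D}}$. The only differences are cosmetic: the paper uses a single slack parameter $\alpha=\tfrac{\gamma}{6(3+\gamma)}$ (serving simultaneously as Birg\'e width and learning accuracy, both scaled by $\eps_2$) where you introduce three separate constants $c_1,c_2,c_3$, and the paper argues soundness via the contrapositive (``if the algorithm accepts then $\totalvardist{\D}{\mathcal{M}}\leq\eps_2$'') rather than your direct ``if far and Step~\textsf{(ii)} passes then Step~\textsf{(iii)} rejects.'' Your closing remarks on the rigidity of the factor $3$ and the degradation as $\gamma\to 0$ are accurate and match the paper's footnote to that effect.
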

\begin{proof}
We will use here the result of \autoref{lemma:estimate:tolerant:identity} (with its probability of success slightly increased to 5/6 by standard techniques), as well as the Birg\'e decomposition of $[n]$ (with parameter $\bigOmega{\eps_2}$) from \autoref{def:birge:obl:decomp}. The tester is described in \autoref{algo:tolerant:monotonicity:algorithm}, and follows a very simple idea: leveraging the robustness of the Birg\'e flattening, it first agnostically learns an approximation $\bar{\D}$ of the distribution and computes (offline) its distance to monotonicity. Then, using the (efficient) tolerant identity tester available in both \pdfsamp and \cdfsamp models, it estimates the distance between $\bar{\D}$ and $\D$: if both distances are small, the triangle inequality allows us to conclude $\D$ must be close to monotone.

\begin{algorithm}[h!]
  \begin{algorithmic}[1]
    \Require $\SAMP_{\D}$ and $\EVAL_{\D}$ oracle access, parameters $0\leq \eps_1 < \eps_2$ such that $\eps_2 > (3+\gamma)\eps_1$
    \State Set $\alpha\eqdef\frac{\gamma}{6(3+\gamma)}$, $\gamma_1\eqdef 2\eps_1+2\alpha\eps_2$ and $\gamma_2\eqdef \mleft(1-\alpha\mright)\eps_2-\eps_1$. \Comment{So that $\gamma_2 - \gamma_1 = \bigOmega{\eps_2}$}
    \State\label{algo:monot:step:1} Learn $\Dflat{\alpha\eps_2}$ to distance $\alpha\eps_2$, getting a (piecewise constant) $\bar{\D}$.          \Comment{$\bigO{\tfrac{\log n}{\eps_2^3}}$ samples}
    \State\label{algo:monot:step:2} Test if $\bar{\D}$ is $(\eps_1+\alpha\eps_2)$-close to monotone; \Return \fail if not.   \Comment{no sample needed (LP)}
    \State\label{algo:monot:step:3} Test if $\bar{\D}$ is $\gamma_1$-close to $D$ vs. $\gamma_2$-far; \Return \fail if far. \Comment{Tester from \autoref{lemma:estimate:tolerant:identity}.}
    \State\Return\accept
  \end{algorithmic}
  \caption{\label{algo:tolerant:monotonicity:algorithm}Algorithm \textsc{TolerantTestMonotonicity}}
\end{algorithm}

  \paragraph*{Correctness.} 
  Suppose the execution of each step is correct; by standard Chernoff bounds, this event for Step~\ref{algo:monot:step:1} (that is, our estimate $\bar{\D}$ of $\Dflat{\eps_1}$ being indeed $\eps_1$-accurate)  happens with probability at least 5/6; furthermore, Step~\ref{algo:monot:step:2} is deterministic, and Step~\ref{algo:monot:step:3} only fails with probability 1/6 -- so the overall ``good event'' happens with probability at least 2/3. We hereafter condition on this.
,    \begin{itemize}
      \item if $\totalvardist{\D}{\mathcal{M}}\leq \eps_1$, then according to \autoref{coro:birge:decomposition:robust}:
        \begin{enumerate}[(a)]
          \item $\totalvardist{\Dflat{\alpha\eps_2}}{\mathcal{M}}\leq \eps_1$, so that 
   $\totalvardist{\bar{\D}}{\mathcal{M}}\leq \eps_1+ \alpha\eps_2$ and we pass Step~\ref{algo:monot:step:2}; and
          \item $\totalvardist{\D}{\Dflat{\alpha\eps_2}}\leq 2\eps_1+\alpha\eps_2$, which in turn implies that
           $\totalvardist{\D}{\bar{\D}}\leq 2\eps_1+2\alpha\eps_2=\gamma_1$; and we pass Step~\ref{algo:monot:step:3}.
        \end{enumerate}
        Therefore, the tester eventually outputs \accept.
      \item Conversely, if the tester accepts, it means that
        \begin{enumerate}[(a)]
          \item $\totalvardist{\bar{\D}}{\mathcal{M}}\leq \eps_1 + \alpha\eps_2$ and
          \item $\totalvardist{\D}{\bar{\D}}\leq \gamma_2 = \mleft(1-\alpha\mright)\eps_2-\eps_1$
        \end{enumerate}
        hence, by the triangle inequality $\totalvardist{\D}{\mathcal{M}} \leq \eps_2$.
    \end{itemize}
    
    \paragraph*{Query complexity.} 
    The algorithm makes $\bigO{\frac{\log n}{\eps_2^3}}$ \SAMP queries in Step~\ref{algo:monot:step:1}, and $m=\bigO{\frac{1}{(\gamma_2-\gamma_1)^2}}=\bigO{\frac{1}{\eps_2^2}}$ \EVAL queries in Step~\ref{algo:monot:step:3}.
\end{proof}
\begin{remark}
  With $\CDFEVAL$ access (instead of $\EVAL$), the cost of Step~\ref{algo:monot:step:1} would be reduced from $\bigO{{\log n}/{\eps_2^3}}$ sample queries to get an $(\alpha\eps_2)$-approximation to $\bigO{{\log n}/{\eps_2}}$ cdf queries to get the \emph{exact} flattened distribution (as we have only that many quantities of the form $\D(I_k)$ to learn, and each of them requires only 2 cdf queries). This leads to the following corollary:
\end{remark}
\begin{corollary}\label{theo:tolerant:tester:monotonicity:cdf}
Fix any constant $\gamma > 0$. In the \Cdfsamp access model, there exists an $(\eps_1, \eps_2)$-tolerant tester for monotonicity with query complexity $\bigO{\frac{1}{\eps_2^2} + \frac{\log n}{\eps_2}}$, provided $\eps_2 > (3+\gamma)\eps_1$.
\end{corollary}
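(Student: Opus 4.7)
The plan is to reuse verbatim the algorithmic blueprint of \autoref{theo:tolerant:tester:monotonicity}: first produce a simple hypothesis $\bar{\D}$ that is close to $\D$ and (in the \yes-case) close to monotone, check offline whether $\bar{\D}$ is close to monotone, then tolerantly estimate $\totalvardist{\D}{\bar{\D}}$, and combine via the triangle inequality. Only two implementation details change, both exploiting the additional power of \CDFEVAL over \EVAL.

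First, for Step~\ref{algo:monot:step:1} we no longer need to \emph{learn} an approximation of $\Dflat{\alpha\eps_2}$. Since the oblivious partition $\mathcal{I}_{\alpha\eps_2}=(I_1,\dots,I_\ell)$ with $\ell=\bigO{\log n/\eps_2}$ is fixed and known offline, and since for each $I_k=\{a_k,\dots,b_k\}$ we have $\D(I_k)=\D([b_k])-\D([a_k-1])$, two \CDFEVAL queries per interval recover $\D(I_k)$ \emph{exactly}. Thus we can set $\bar{\D}\eqdef\Dflat{\alpha\eps_2}$ using $\bigO{\log n/\eps_2}$ \CDFEVAL queries, eliminating the learning slack of $\alpha\eps_2$ that appeared in the \Pdfsamp version.

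Second, for Step~\ref{algo:monot:step:3} we still invoke the tolerant identity tester of \autoref{lemma:estimate:tolerant:identity} to estimate $\totalvardist{\D}{\bar{\D}}$ up to additive $(\gamma_2-\gamma_1)/2=\bigOmega{\eps_2}$; an \EVAL query $\D(i)$ is simulated by two \CDFEVAL queries $\D([i])-\D([i-1])$, so this costs $\bigO{1/\eps_2^2}$ \CDFEVAL queries. Setting $\alpha\eqdef\gamma/(6(3+\gamma))$, $\gamma_1\eqdef 2\eps_1+\alpha\eps_2$ and $\gamma_2\eqdef(1-\alpha)\eps_2-\eps_1$, the assumption $\eps_2>(3+\gamma)\eps_1$ guarantees $\gamma_2-\gamma_1=\bigOmega{\eps_2}$.

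The correctness argument is a direct copy of the one for \autoref{theo:tolerant:tester:monotonicity}, with the slack terms updated to reflect that $\bar{\D}$ is now exact: in the \yes-case, \autoref{coro:birge:decomposition:robust} yields $\totalvardist{\bar{\D}}{\mathcal{M}}\leq \eps_1$ and $\totalvardist{\D}{\bar{\D}}\leq 2\eps_1+\alpha\eps_2=\gamma_1$, so both Step~\ref{algo:monot:step:2} and Step~\ref{algo:monot:step:3} accept; conversely, acceptance implies $\totalvardist{\bar{\D}}{\mathcal{M}}\leq \eps_1+\alpha\eps_2$ and $\totalvardist{\D}{\bar{\D}}\leq \gamma_2$, whose sum is at most $\eps_2$. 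Summing the two query contributions yields the claimed $\bigO{1/\eps_2^2+\log n/\eps_2}$ bound. There is no substantive obstacle here; the only care needed is to track constants in the definitions of $\gamma_1$, $\gamma_2$ so that the gap $\gamma_2-\gamma_1$ remains $\bigOmega{\eps_2}$ under the hypothesis $\eps_2>(3+\gamma)\eps_1$.
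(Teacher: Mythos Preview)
Your proposal is correct and matches the paper's approach essentially verbatim: the paper's proof of this corollary is just the remark that with \CDFEVAL one computes $\Dflat{\alpha\eps_2}$ exactly via two cdf queries per interval (so Step~\ref{algo:monot:step:1} costs $\bigO{\log n/\eps_2}$ instead of $\bigO{\log n/\eps_2^3}$), everything else being unchanged. Your explicit simulation of \EVAL by \CDFEVAL for Step~\ref{algo:monot:step:3} and your slight tightening of $\gamma_1$ (dropping one $\alpha\eps_2$ term since the learning slack vanishes) are correct refinements the paper leaves implicit, but they do not constitute a different route.
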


\end{document}